\newtheorem*{theorem*}{Theorem}
\newtheorem*{definition*}{Definition}
\newtheorem{lemma}{Lemma}
\newtheorem{theorem}{Theorem}
\newtheorem{definition}{Definition}
\newtheorem{corollary}{Corollary}
\newtheorem{claim}{Claim}
\definecolor{mygray}{gray}{0.5}
\newcommand\iid{\stackrel{\mathclap{\tiny\mbox{i.i.d.}}}{ \ \sim \ }}
\renewcommand{\algorithmiccomment}[1]{\bgroup\hfill\footnotesize~#1\egroup}
\newcommand{\algone}{{\textsc{Adaptive Sequencing}}}
\newcommand{\algoneplus}{{\textsc{Adaptive Sequencing++}}}
\newcommand{\algcontinuous}{{\textsc{Accelerated Continuous Greedy}}}
\newcommand{\blackbox}{{\textsc{Random Sequence}}}
\newcommand{\rank}{{\textsc{rank}}}
\newcommand{\Span}{{\textsc{span}}}
\newcommand{\INPUT}{\item[{\bf Input:}]}
\DeclareMathOperator{\E}{\mathbb{E}}
\newcommand{\zeros}{{\mathbf 0}}
\newcommand{\ones}{{\mathbf 1}}
\DeclareMathOperator{\argmax}{argmax}
\DeclareMathOperator{\poly}{poly}
\newcommand{\OPT}{\texttt{OPT}}
\newcommand{\R}{\mathbb{R}}                     
\newcommand{\M}{\mathcal{M}}                     
\newcommand{\U}{\mathcal{U}}
\renewcommand{\O}{\mathcal{O}}
\newcommand{\bx}{\mathbf{x}}
\newcommand{\by}{\mathbf{y}}
\title{An Optimal Approximation for Submodular Maximization under a Matroid Constraint in the Adaptive Complexity Model}
\author{Eric Balkanski\\Harvard University \\ ericbalkanski@g.harvard.edu 
\and Aviad Rubinstein\\ Stanford University \\ aviad@cs.stanford.edu
\and Yaron Singer\\ Harvard University \\ yaron@seas.harvard.edu\\}
\date{}
\begin{document}

\setcounter{page}{0}

\maketitle
\begin{abstract}
In this paper we study submodular maximization under a matroid constraint in the adaptive complexity model.  This model was recently introduced in the context of submodular optimization in~\cite{BS18a} to quantify the information theoretic complexity of black-box optimization in a parallel computation model.  Informally, the \emph{adaptivity} of an algorithm is the number of sequential rounds it makes when each round can execute polynomially-many function evaluations in parallel.  Since submodular optimization is regularly applied on large datasets we seek algorithms with low adaptivity to enable speedups via parallelization.   Consequently, a recent line of work has been devoted to designing constant factor approximation algorithms for maximizing submodular functions under various constraints in the adaptive complexity model~\cite{BS18a,BS18b, BBS18,BRS19, EN19,FMZ19,chekuri2018submodular,ene2018submodular,FMZ18}.  

Despite the burst in work on submodular maximization in the adaptive complexity model, the fundamental problem of maximizing a monotone submodular function under a matroid constraint has remained elusive.  In particular, all known techniques fail for this problem and there are no known constant factor approximation algorithms whose adaptivity is sublinear in the rank of the matroid $k$ or in the worst case sublinear in the size of the ground set $n$. 

In this paper we present an approximation algorithm for the problem of maximizing a monotone submodular function under a matroid constraint in the adaptive complexity model.  The approximation guarantee of the algorithm is arbitrarily close to the optimal $1-1/e$ and it has near optimal adaptivity of $\O(\log(n)\log(k))$.  This result is obtained using a novel technique of \emph{adaptive sequencing} which departs from previous techniques for submodular maximization in the adaptive complexity model.  In addition to our main result we show how to use this technique to design other approximation algorithms with strong approximation guarantees and polylogarithmic adaptivity. 
\end{abstract}

\newpage

\section{Introduction}
In this paper we study submodular maximization under matroid constraints in the adaptive complexity model.  The adaptive complexity model was recently introduced in the context of submodular optimization in~\cite{BS18a} to quantify the information theoretic complexity of black-box optimization in a parallel computation model.  Informally, the \emph{adaptivity} of an algorithm is the number of sequential rounds it makes when each round can execute polynomially-many function evaluations in parallel.  The concept of adaptivity is heavily studied in computer science and optimization as it provides a measure of efficiency of parallel computation.

Since submodular optimization is regularly applied on very large datasets, we seek algorithms with low adaptivity to enable speedups via parallelization.  For the basic problem of maximizing a monotone submodular function under a cardinality constraint $k$ the celebrated greedy algorithm which iteratively adds to the solution the element with largest marginal contribution  is $\Omega(k)$ adaptive.  Until very recently, even for this basic problem, there was no known constant-factor approximation algorithm whose adaptivity is sublinear in $k$.  In the worst case $k \in \Omega(n)$ and hence  greedy and all other algorithms  had adaptivity that is \emph{linear} in the size of the ground set.

The main result in~\cite{BS18a} is an \emph{adaptive sampling} algorithm for maximizing a monotone submodular function under a cardinality constraint that achieves a constant factor approximation arbitrarily close to $1/3$ in $\mathcal{O}(\log n)$ adaptive rounds as well as a lower bound that shows that no algorithm can achieve a constant factor approximation in $\tilde{o}(\log n)$ rounds.  Consequently, this algorithm provided a constant factor approximation with an exponential speedup in parallel runtime for monotone submodular maximization under a cardinality constraint.  

In~\cite{BRS19, EN19}, the adaptive sampling technique was extended to achieve an approximation guarantee arbitrarily close to the optimal $1-1/e$ in $\O(\log n)$ adaptive rounds. This result was then obtained with a linear number of queries \cite{FMZ19}, which is optimal. Functions with bounded curvature   have also been studied  using  adaptive sampling under a cardinality constraint~\cite{BS18b}. The more general family of  packing constraints, which includes partition and laminar matroids, has been considered in~\cite{chekuri2018submodular}. In particular, under $m$ packing constraints, a $1-1/e-\epsilon$ approximation was obtained in $\O(\log^2m \log n)$ rounds using a combination of continuous optimization  and multiplicative weight update techniques.

\subsection{Submodular maximization under a matroid constraint}
For the fundamental problem of maximizing a monotone submodular function under a general matroid constraint it is well known since the late 70s that the greedy algorithm  achieves a $1/2$ approximation~\cite{NWF78} and that even for the special case of cardinality constraint no algorithm can obtain an approximation guarantee better than $1-1/e$ using polynomially-many value queries~\cite{nemhauser1978best}.  Thirty years later, in seminal work, Vondr{\'{a}}k introduced the continuous greedy algorithm which approximately maximizes the multilinear extension of the submodular function~\cite{CCPV07} and showed it obtains the optimal $1-1/e$ approximation guarantee~\cite{vondrak08}.

Despite the surge of interest in adaptivity of submodular maximization, the problem of maximizing a monotone submodular function under a matroid constraint in the adaptive complexity model has remained elusive.  As we discuss in Section~\ref{related_work}, when it comes to matroid constraints there are fundamental limitations of the techniques developed in this line of work.  The best known adaptivity for obtaining a constant factor approximation guarantee for maximizing a monotone submodular function under a matroid constraint is achieved by the greedy algorithm and is linear in the rank of the matroid.  The best known adaptivity for obtaining the optimal $1-1/e$ guarantee is achieved by the continuous greedy and is linear in the size of the ground set.

\begin{center}
\emph{Is there an algorithm whose adaptivity is sublinear in the size of the rank of the matroid that obtains a constant factor approximation guarantee?}
\end{center}

\subsection{Main result}
Our main result is an algorithm for the problem of maximizing a monotone submodular function under a matroid constraint whose approximation guarantee is arbitrarily close to the optimal $1-1/e$ and has near optimal adaptivity of $\O(\log(n)\log(k))$.  
\begin{theorem*}
For any $\epsilon>0$ there is an $\O\left(\log(n)  \log\left(\frac{k}{\epsilon^3}\right)   \frac{1}{\epsilon^3}\right)$ adaptive algorithm that, with probability $1 - o(1)$, obtains a $1-1/e - \O(\epsilon)$ approximation for maximizing a monotone submodular function under a matroid constraint.  
\end{theorem*}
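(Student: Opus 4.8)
The plan is to give a low-adaptivity implementation of the continuous greedy algorithm (\algcontinuous) on the multilinear extension $F$ of $f$ over the matroid polytope $P(\M)$. We maintain a fractional point $\bx \in [0,1]^n$ with $\bx^{(0)} = \zeros$ and run $\O(1/\epsilon)$ outer iterations; in iteration $j$ we form an estimate of the gradient $\nabla F(\bx^{(j)})$ (coordinates $g_i = \partial_i F(\bx^{(j)})$), use it to find a base $B_j \in \I$ that approximately maximizes the \emph{linear} objective $\sum_{i \in B} g_i$, and set $\bx^{(j+1)} = \bx^{(j)} + \epsilon\, \ones_{B_j}$. After $1/\epsilon$ steps $\bx$ is an average of $1/\epsilon$ bases, hence lies in $P(\M)$, and we round it to an independent set. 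The gradient estimate is obtained by averaging $\poly(n,1/\epsilon)$ i.i.d.\ samples of the form $f(R+i)-f(R)$ in a single adaptive round, which suffices (by concentration and a union bound over all iterations and coordinates) for an additive error of $\O(\epsilon)\,\OPT/k$ per coordinate with high probability.

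The crux, and the only step that consumes adaptivity, is finding the approximately optimal base $B_j$: computing it exactly is the classical matroid greedy, which is $\Omega(k)$-adaptive. Instead we run \algone (adaptive sequencing). We maintain a partial independent set $S$ (initially $\emptyset$) and a threshold $t$ that starts just above $\max_i g_i$ and is decreased by a constant factor each time it is exhausted, down to $\Theta(\epsilon\,\OPT/k)$. Given the current $S$ and $t$, let $X = \setof{a}{S + a \in \I \text{ and } g_a \geq t}$ be the set of elements that can still be added with weight at least $t$. We draw a uniformly random permutation $(a_1, \dots, a_{|X|})$ of $X$ and take $i^\star$ to be the largest prefix length such that $S \cup \set{a_1,\dots,a_{i^\star}} \in \I$ and at least $(1-\epsilon)|X|$ elements of $X$ remain addable to $S \cup \set{a_1,\dots,a_{i^\star}}$; we then add this entire prefix to $S$. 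All of the rank/independence queries needed to locate $i^\star$ can be issued in parallel, so a sequencing step costs $\O(1)$ adaptive rounds. When $X$ is empty we lower $t$, and when $S$ is a base we return it as $B_j$. The accounting then has three nested levels --- $\O(1/\epsilon)$ continuous-greedy steps, $\O(\epsilon^{-1}\log(k/\epsilon))$ thresholds per base, and $\O(\epsilon^{-1}\log n)$ sequencing steps per threshold (since each sequencing step that is not limited by independence shrinks $|X|$ by a $(1-\epsilon)$ factor) --- multiplying to the claimed $\O\left(\log(n)\log(k/\epsilon^3)/\epsilon^3\right)$ adaptivity.

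For the approximation guarantee I would first show that the returned base satisfies $\sum_{i \in B_j} g_i \geq (1-\O(\epsilon))\bigl(F(\OPT) - F(\bx^{(j)})\bigr)$. Since $\OPT$ is itself a base, submodularity and monotonicity give $\sum_{i\in\OPT} g_i = \sum_{i\in\OPT}\partial_i F(\bx^{(j)}) \geq F(\bx^{(j)} \vee \ones_{\OPT}) - F(\bx^{(j)}) \geq F(\OPT) - F(\bx^{(j)})$, so it remains to argue that thresholded random sequencing builds a base whose $g$-weight is within $(1-\O(\epsilon))$ of the true maximum-weight base; the losses come from the geometric threshold granularity, the additive estimation error $\O(\epsilon)\OPT/k$ summed over $k$ coordinates, and the fact that each sequencing step only guarantees that a $(1-\epsilon)$-fraction of the high-weight elements survives. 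Substituting into the standard continuous-greedy recursion, $F(\bx^{(j+1)}) \geq F(\bx^{(j)}) + \epsilon(1-\O(\epsilon))\bigl(F(\OPT)-F(\bx^{(j)})\bigr) - \O(\epsilon^2)F(\OPT)$ --- the last term bounding the second-order error of $F$ along a step of length $\epsilon$ --- and unrolling $1/\epsilon$ times yields $F(\bx) \geq (1 - 1/e - \O(\epsilon))F(\OPT)$ for the final point. Finally we round $\bx$ to an independent set by randomized swap rounding of the bases $B_1,\dots,B_{1/\epsilon}$; this uses only the matroid and makes no queries to $f$, so it is free of adaptivity, and for monotone submodular $f$ it preserves the value in expectation. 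A union bound over the $\poly(n,1/\epsilon)$ high-probability events, together with a final boosting step, gives the $1-o(1)$ success probability.

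The main obstacle is the analysis of \algone under the matroid constraint, on two fronts. First, the progress/round-count argument: after adding the chosen prefix, the recomputed set $X$ must shrink by a constant factor so that only $\O(\epsilon^{-1}\log n)$ sequencing steps occur per threshold; this is immediate when the surviving-fraction condition is what limits $i^\star$, but when independence of the prefix is the binding constraint one has to use the randomness of the permutation --- together with matroid exchange properties --- to show that nonetheless few high-weight elements remain addable, or else charge such steps against the growth of $|S|$ toward rank $k$. Second, one must prove that the base produced across all thresholds really is a $(1-\O(\epsilon))$-approximate maximum-weight base; the natural comparison replaces the exact greedy's tie-breaking by the random-sequencing dynamics, and controlling the resulting slack --- while making all the $\epsilon$-dependencies line up with the multilinear-extension error terms above --- is the most delicate part of the argument.
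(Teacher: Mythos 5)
Your central quantitative claim --- that replacing $F(\bx+\epsilon\ones_{B_j})-F(\bx)$ by the linear proxy $\epsilon\sum_{i\in B_j}\partial_iF(\bx)$ costs only an $\O(\epsilon^2)\OPT$ second-order error per step --- is false for constant step size, and the paper exhibits exactly this failure in Appendix~\ref{sec:applayerone}. Take $N=A\cup B$, $f(S)=\min(\log n,|S\cap A|)+|S\cap B|$, $|A|=|B|=k$, $\bx=\zeros$. Every coordinate has $g_i=\partial_iF(\zeros)=f(\{i\})=1$, so $A$ is a legitimate (indeed exact) maximizer of the linear objective $\sum_{i\in T}g_i$ over bases. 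Yet $F(\epsilon\ones_A)-F(\zeros)=\E_{R\sim\epsilon\ones_A}\bigl[\min(\log n,|R|)\bigr]\approx\log n$ once $\epsilon k\gg\log n$, while $\epsilon\sum_{i\in A}g_i=\epsilon k$; the shortfall is $\Theta(\epsilon k)=\Theta(\epsilon\,\OPT)$ in a \emph{single} step, an $\epsilon^{-1}$ factor larger than you allotted. Accumulated over the $1/\epsilon$ outer iterations, the loss is $\Theta(\OPT)$ and the continuous-greedy recursion you wrote down does not close. The underlying reason is that with constant $\lambda$ the gradient $\nabla F$ moves by $\Theta(1)$ across one update, so the static weights $g_i$ grossly overestimate the contribution of any set of high-gradient elements whose individual values overlap.

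The paper's accelerated continuous greedy exists precisely to avoid this linearization. Instead of fixed weights, it defines the \emph{submodular} surrogate $g(T):=F(\bx+\lambda\ones_T)-F(\bx)$ and runs \algone\ on $g$ itself, so the marginal $g_S(a)$ is re-evaluated against the current prefix $S$ at every inner step. Since $g$ is monotone submodular, \algone\ is being asked to solve a genuine submodular-maximization instance under the matroid, and the real work is in Lemmas~\ref{lem:matroidO}--\ref{lem:maincontinuous}: a sequence that is locally near-optimal for $g$ already captures $(1-\O(\epsilon))\lambda(\OPT-F(\bx))$, a much stronger guarantee than the generic $1/2$-approximation one gets from the same lemma for an arbitrary submodular objective. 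That per-step bound is what makes the $1-1/e-\O(\epsilon)$ recursion close. Your sketch of the outer loop, the thresholding, the random-prefix choice of $i^\star$, the round-count accounting, and the query-free rounding all match the paper's machinery, but they are aimed at the wrong objective. To repair the argument you would have to replace the static weight vector by the surrogate function $g$ throughout, and re-derive both the threshold invariant (Lemma~\ref{lem:invariants}) and the prefix-quality bound (Lemma~\ref{lem:marg}) for marginals of $g$; at that point you are essentially reconstructing the paper's proof.
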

Our result provides an exponential improvement in the adaptivity for maximizing a monotone submodular function under a matroid constraint with an arbitrarily small loss in approximation guarantee.  As we later discuss, beyond the information theoretic consequences, this implies that a very broad class of combinatorial optimization problems can be solved exponentially faster in standard parallel computation models given appropriate representations of the matroid constraints. 

Our main result is largely powered by a new technique developed in this paper which we call \emph{adaptive sequencing}.  This technique proves to be extremely powerful and is a departure from all previous techniques for submodular maximization in the adaptive complexity model.  In addition to our main result we show that this technique gives us a set of other strong results that include: 

\begin{itemize}
\item An $\O(\log(n)\log(k))$ adaptive combinatorial algorithm that obtains a  $\frac{1}{2}-\epsilon$ approximation for monotone submodular maximization under a matroid constraint (Theorem~\ref{thm:comb});
\item An $\O(\log(n)\log(k))$ adaptive combinatorial algorithm that obtains a $\frac{1}{P+1} - \epsilon$ approximation for monotone submodular maximization under intersection of $P$ matroids (Theorem~\ref{thm:intersection});
\item An  $\O(\log(n)\log(k))$ adaptive algorithm that obtains an approximation of $1-1/e-\epsilon$ for monotone submodular maximization under a partition matroid constraint that can be implemented in the PRAM model with polylogarithmic depth (Appendix~\ref{sec:explicit}).
\end{itemize}

In addition to these results the adaptive sequencing technique can be used to design algorithms that achieve the same results as those for cardinality constraint in~\cite{BRS19,EN19,FMZ19} and for non-monotone submodular maximization under cardinality constraint as in~\cite{BBS18} (Appendix~\ref{sec:explicit}).

\subsection{Technical overview}
\label{sec:technical_overview}

The standard approach to obtain an approximation guarantee arbitrarily close to $1-1/e$ for maximizing a submodular function under a matroid constraint $\M$ is by the continuous greedy algorithm due to Vondr{\'{a}}k~\cite{vondrak08}.  This algorithm approximately maximizes the multilinear extension $F$ of the submodular function~\cite{CCPV07} in $\O(n)$ adaptive steps.  In each step the algorithm updates a continuous solution $\bx \in[0,1]$ in the direction of $\ones_S$, where $S$ is chosen by maximizing an additive function under a matroid constraint.  

In this paper we introduce the \emph{accelerated continuous greedy} algorithm whose approximation is arbitrarily close to the optimal $1-1/e$.  Similarly to continuous greedy, this algorithm approximately maximizes the multilinear extension by carefully choosing $S\in \M$ and updating the solution in the direction of $\ones_S$.  In sharp contrast to continuous greedy, however, the choice of $S$ is done in a manner that allows making a \emph{constant} number of updates to the solution, each requiring $\O(\log(n)\log(k))$ adaptive rounds.
We do this by constructing a feasible set $S$ using $\O(\log(n)\log(k))$ adaptive rounds, at each one of the $1/\lambda$ iterations of  accelerated continuous greedy, s.t. $S$ approximately maximizes the contribution of taking a step of constant size $\lambda$ in the direction of $\ones_{S}$.  We construct $S$   via a novel combinatorial algorithm introduced in Section~\ref{sec:comb}.  
 
The new combinatorial algorithm achieves by itself a $1/2$ approximation for submodular maximization under a matroid constraint in $\O(\log(n)\log(k))$ adaptive rounds.  This algorithm is developed using a fundamentally different approach from all previous low adaptivity algorithms for submodular maximization (see discussion in Section~\ref{related_work}).  This new framework uses a single random \emph{sequence} $(a_1, \ldots, a_k)$ of elements.  In particular, for each $i \in [k]$, element $a_i$ is chosen uniformly at random among all elements such that $S \cup \{a_1, \ldots, a_i\} \in \M$. This random feasibility of each element is central to the analysis. Informally, this ordering  allows the sequence to navigate randomly through the matroid constraint.   For each position $i$ in this sequence, we  analyze the number of elements $a$ such that $S \cup \{a_1, \ldots, a_i\} \cup a \in \M$ and $f_{S \cup \{a_1, \ldots, a_i\}}(a)$ is large. The key observation is that if this number is large at a position $i$, by the randomness of the sequence, $f_{S \cup \{a_1, \ldots, a_i\}}(a_{i+1})$ is large w.h.p., which is important for the approximation. Otherwise, if this number is low we discard a large number of elements, which is important for the adaptivity.

In Section~\ref{sec:continuous} we analyze the approximation of the  accelerated continuous greedy algorithm, which is the main result of the paper.  We use the algorithm from Section~\ref{sec:comb} to selects $S$ as the direction and show $F(\bx + \lambda \ones_S) - F(\bx) \geq (1 - \epsilon)\lambda(\OPT - F(\bx))$, which implies a $1-1/e-\epsilon$ approximation.  
 
Finally, in Section~\ref{sec:matroid} we parallelize the matroid oracle queries.  The random sequence generated in each iteration of the combinatorial algorithm in Section~\ref{sec:comb} is independent of function evaluations and requires zero adaptive rounds, though it sequentially queries the matroid oracle.  For practical implementation it is important to parallelize the matroid queries to achieve fast parallel runtime.  When given explicit matroid constraints such as for uniform or partition matroids, this parallelization is relatively simple (Section~\ref{sec:explicit}). For general matroid constraints given via rank or independence oracles we show how to parallelize the matroid queries in Section~\ref{sec:matroid}.  We give upper and lower bounds by building on the seminal work of Karp, Upfal, and Wigderson on the parallel complexity of finding the base of a matroid~\cite{karp1988complexity}.  For rank oracles we show how to execute the algorithms with $\O(\log(n)\log(k))$ parallel steps that matches the $\O(\log(n)\log(k))$ adaptivity.  For independence oracles we show how to execute the algorithm using $\tilde{\O}(n^{1/2})$ steps of parallel matroid queries  and give an $\tilde{\Omega}(n^{1/3})$ lower bound even for additive functions and partition matroids.

\subsection{Previous optimization techniques in the adaptive complexity model}\label{related_work}

The random sequencing approach developed in this paper is a fundamental departure from the adaptive sampling approach introduced in~\cite{BS18a} and employed in previous combinatorial algorithms that achieve low adaptivity for submodular maximization~\cite{BS18b,BBS18,BRS19,EN19,FMZ19,FMZ18}. In adaptive sampling an algorithm samples multiple large feasible sets  at every iteration to determine elements which should be added to the solution or discarded. The issue with these uniformly  random feasible sets  is that, although they have a simple structure  for uniform matroids, they are complex objects to generate and analyze for general matroid constraints.

Chekuri and Quanrud recently obtained a $1 - 1/e - \epsilon$ approximation in $\O(\log^2m \log n)$ adaptive rounds for the family of $m$ packing constraints, which includes partition and laminar matroids~\cite{chekuri2018submodular}. This setting was then also considered for non-monotone functions  in~\cite{ene2018submodular}. Their approach also uses the continuous greedy algorithm, combined with a multiplicative weight update technique to handle the constraints. Since general matroids consist of exponentially many  constraints, a multiplicative weight update approach over these constraints is not feasible.  More generally packing constraints assume an explicit representation of the matroid. For general matroid constraints, the algorithm is not given such a representation but an oracle.  Access to an independence oracle for a matroid breaks these results as shown in Section~\ref{sec:matroid}: any constant factor approximation algorithm with an independence oracle must have  $\tilde{\Omega}(n^{1/3})$ sequential steps.

\subsection{Preliminaries}

\paragraph{Submodularity.}  A function $f : 2^N \rightarrow \R_+$ over ground set $N = [n]$ is \emph{submodular} if
the marginal contributions $f_S(a) := f(S \cup a) - f(S)$ of an element $a \in N\setminus S$ to a set $S \subseteq N$ are diminishing, meaning $f_S(a) \geq f_T(a)$ for all $S \subseteq T \subseteq N$ and $a \in N \setminus T$.  Throughout the paper, we abuse notation by writing $S \cup a$ instead of $S \cup \{a\}$ and  assume $f$ is monotone, so $f(S) \leq f(T)$ for all $S \subseteq T$.  The value of the optimal solution $O$ for the problem of maximizing the submodular function under some constraint $\M$ is denoted by $\OPT$, i.e. $O := \argmax_{S \in \M}f(S)$ and $\OPT := f(O)$.

\paragraph{Adaptivity.}  
Given a value oracle for $f$, an algorithm  is \textbf{$r$-adaptive}
 if every query $f(S)$ for the value of a set $S$  occurs at a round $i \in [r]$ s.t. $S$ is independent of the values $f(S')$ of all other queries at round $i$, with at most $\poly(n)$ queries at every round.

\paragraph{Matroids.}  A set system $\M \subseteq 2^N$ is a \emph{matroid} if it satisfies the \emph{downward closed} and \emph{augmentation} properties. A set system $\M$ is downward closed if for all $S \subseteq T$ such that $T \in \M$, then $S \in \M$. The augmentation property is that if $S, T \in \M$ and $|S| < |T|$, then there exists $a \in T$ such that $S \cup a \in \M$. We call a set $S \in \M$ \emph{feasible} or \emph{independent}. The rank $k = \rank(\M)$ of a matroid is the maximum size of an independent set $S$. The rank $\rank(S)$ of a set $S$ is the maximum size of an independent subset $T \subseteq S$. A set $B \in \M$ is called a base of $\M$ if $|B| = \rank(\M)$. The matroid polytope $P(\M)$  is the collection of points $\bx \in [0,1]^n$ in the convex hull of the independent sets of $\M$, or equivalently the points $\bx$ such that $\sum_{i \in S} x_i \leq \rank(S)$ for all $S \subseteq [n]$.  

\paragraph{The multilinear extension.} The multilinear extension $F : [0,1]^n \rightarrow \R_+$ of a function $f$ maps a point $\bx \in [0,1]^n$ to  the expected value of a random set $R \sim \bx$ containing  each element $i \in [n]$ with probability $x_i$ independently, i.e. $F(\bx) = \E_{R \sim \bx}[f(R)]$. We note that given an oracle for $f$, one can estimate $F(\bx)$ arbitrarily well in one  round by querying in parallel a sufficiently large number of samples $R_1, \ldots, R_m \iid \bx$  and taking the average value of  $f(R_i)$ over $i \in [m]$ \cite{chekuri2015multiplicative,chekuri2018submodular}. For ease of presentation, we assume throughout the paper that we are given access to an exact value oracle for $F$ in addition to $f$. The results which rely on $F$ then extend to the case where the algorithm is only given an oracle for $f$ with an arbitrarily small loss in the approximation, no loss in the adaptivity, and additional $\O(n \log n)$ factor in the query complexity.\footnote{With $\O(\epsilon^{-2} n \log n)$ samples,   $F(\bx)$ is estimated within a $(1 \pm \epsilon)$ multiplicative factor with high probability\cite{chekuri2018submodular}.}

\section{The Combinatorial Algorithm}\label{sec:comb}
\label{sec:mainresult}

In this section we describe a combinatorial algorithm used at every iteration of the accelerated continuous greedy algorithm to find a direction $\ones_S$ for an update of a continuous solution.  In the next section we will show how to use this algorithm as a subprocedure in the accelerated continuous greedy algorithm to achieve an approximation arbitrarily close to $1-1/e$ with $\O(\log(n)\log(k))$ adaptivity.  The optimization of this direction $S$ is itself 
an instance of maximizing a monotone submodular function under a matroid constraint. The main result of this section  is  a $\O(\log(n)\log(k))$ adaptive algorithm, which we call  \algone, that returns a solution $\{a_i\}_i$ s.t., for all $i$,   the marginal contribution of $a_i$ to $\{a_1, \ldots, a_{i-1}\}$ is near optimal with respect to all elements $a$ s.t. $\{a_1, \ldots, a_{i-1}, a\} \in \M$. We note that this guarantee also implies that \algone \ itself achieves an approximation that is arbitrarily close to $1/2$ with high probability. 

As discussed in Section~\ref{sec:technical_overview} unlike all previous low-adaptivity combinatorial algorithms for submodular maximization, the \algone \ algorithm developed here does not iteratively sample large sets of elements in parallel at every iteration.  Instead,  it samples a \emph{single} random \emph{sequence} of elements in every iteration.  Importantly, this sequence is generated without any function evaluations, and therefore can be executed in zero adaptive rounds.  The goal is then to identify a high-valued prefix of the sequence that can be added to the solution and discard a large number of low-valued elements at every iteration.  Identifying a high valued prefix enables the approximation guarantee and discarding a large number of elements in every iteration ensures low adaptivity.


\subsection{Generating random feasible sequences}

The algorithm crucially requires generating a random sequence of elements in zero adaptive rounds. 

\begin{definition}
Given a matroid $\M$ we say that $(a_1, \ldots, a_{\rank(\M)})$ is a \textbf{random feasible sequence} if for all $i \in [\rank(\M)]$, $a_i$ is an element chosen u.a.r. from $\{a : \{a_1, \ldots, a_{i-1}, a\}  \in \M\}$.
\end{definition}

A simple way to obtain a random feasible sequence is by sampling feasible elements sequentially. 

\begin{algorithm}[H]
\caption{ \blackbox \  }
\begin{algorithmic}
 	\INPUT matroid $\M$
	\STATE \textbf{for} $i = 1$ to $\rank(\M)$ \textbf{do}
	\STATE \ \ \ $X \leftarrow \{a : \{a_1, \ldots, a_{i-1}, a \} \in \M\}$
	\STATE \ \ \ $a_i \sim$ a uniformly random element from $X$
	\RETURN $a_1, \ldots, a_{\rank(\M)}$
  \end{algorithmic}
  \label{alg:bbs}
\end{algorithm}

It is immediate that Algorithm~\ref{alg:bbs} outputs a random feasible sequence. Since Algorithm~\ref{alg:bbs} is independent of $f$, its adaptivity is zero.  For ease of presentation, we describe the algorithm using \blackbox \ as a subroutine, despite its sequential calls to the matroid oracle.  In Section~\ref{sec:matroid} we show how to efficiently parallelize this procedure using standard matroid oracles.

\subsection{The algorithm}
The main idea behind the algorithm is to generate a random feasible sequence in each adaptive round, and use that sequence to determine which elements should be added to the solution and which should be discarded from consideration.  Given a position $i \in\{1,\ldots, l\}$ in a sequence $(a_1,a_2,\ldots,a_{l})$, a subset $S$, and threshold $t$, we say that an element $a$ is \emph{good} if adding it to $S \cup \{a_1, \ldots, a_{i}\}$ satisfies the matroid constraint and its marginal contribution to $S \cup \{a_1, \ldots, a_{i}\}$ is at least threshold $t$. 
In each adaptive round the algorithm generates a random feasible sequence and finds the index $i^\star$ which is the minimal index $i$ such that at most a $1 - \epsilon$ fraction of the surviving elements $X$  are good.  The algorithm then adds the set $\{a_1,\ldots,a_{i^\star}\}$ to $S$.  A formal description of the algorithm is included below.  We use $\M(S, X) := \{T \subseteq X : S \cup T \in \M\}$  to denote the matroid over elements $X$ where a subset is feasible in $\M(X,S)$ if its union with the current solution $S$ is feasible according to $\M$.

\begin{algorithm}[H]
\caption{\algone}
\begin{algorithmic}
    	\INPUT function $f$, feasibility constraint $\M$
    	\STATE  $S \leftarrow \emptyset, t \leftarrow \max_{a \in N}f(a)$
    	\STATE \textbf{for} $\Delta$ iterations \textbf{do}
	\STATE \ \  \ $X \leftarrow N$
	\STATE \ \ \  \textbf{while} $X \neq \emptyset$ \textbf{do}
	\STATE \ \ \ \ \ \ $a_1, \ldots, a_{\rank(\M(S,X))} \leftarrow \blackbox(\M(S,X))$
	\STATE \ \ \ \ \ \  $X_i \leftarrow \{a \in X :  S \cup \{a_1, \ldots, a_{i}, a\} \in \M \text{ and } f_{S \cup \{a_1, \ldots, a_{i}\}}(a) \geq  t\}$
	\STATE \ \ \  \ \ \ $i^{\star} \leftarrow \min\left\{i : |X_i| \leq (1 - \epsilon)|X|\right\}$ 
	\STATE \ \ \ \ \ \  $S \leftarrow S \cup \{a_1, \ldots, a_{i^{\star}}\}$
	\STATE \ \ \ \ \ \ $X \leftarrow  X_{i^{\star}}$
	\STATE \ \ \ $t \leftarrow (1 - \epsilon) t$
    	\RETURN $S$ 
  \end{algorithmic}
  \label{alg:1}
\end{algorithm}

Intuitively, adding $\{a_1,\ldots,a_{i^\star}\}$ to the current solution $S$ is desirable for two important reasons. First, for a random feasible sequence we have that $S \cup \{a_1,\ldots,a_{i^\star}\} \in \M$ and for each element $a_i$ at a  position $i \leq i^{\star}$, there is a high likelihood that the marginal contribution of $a_i$ to the previous elements in the sequence is at least $t$. Second, by definition of $i^\star$ a constant fraction $\epsilon$ of elements are not good at that position, and we discard these elements from $X$. This discarding guarantees that there are at most logarithmically many iterations until $X$ is empty.

The threshold $t$ maintains the invariant that it is approximately an upper bound on the optimal marginal contribution  to the current solution.  By submodularity, the optimal marginal contribution to $S$ decreases as $S$ grows. Thus, to maintain the invariant, the algorithm iterates over decreasing values of $t$. In particular,  at each of $\Delta = \O\left(\frac{1}{\epsilon}\log\left(\frac{k}{\epsilon}\right)\right)$ iterations, where $k := \rank(\M)$, the algorithm decreases $t$ by a $1-\epsilon$ factor when there are no more elements which can be added to $S$ with marginal contribution at least $t$, so when $X$ is empty. 

\subsection{Adaptivity} 
In each inner-iteration the algorithm  makes polynomially-many queries that are independent of each other.  Indeed, in each iteration, we generate $X_1,\ldots,X_{k - |S|}$ non-adaptively and make at most $n$ function evaluations for each $X_i$.  The adaptivity immediately follows from the definition of $i^\star$ that ensures an $\epsilon$ fraction of surviving elements in $X$ are discarded at every iteration.

\begin{lemma}
\label{lem:adaptivity} With $\Delta = \O\left(\frac{1}{\epsilon}\log\left(\frac{k}{\epsilon}\right)\right)$,
 \algone \ has adaptivity $\O\left(\log(n)\log\left(\frac{k}{\epsilon}\right)\frac{1}{\epsilon^2}\right)$.
\end{lemma}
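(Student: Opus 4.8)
The plan is to bound the adaptivity by the product of three quantities: the number of outer iterations ($\Delta$), the number of inner iterations (while-loop passes) per outer iteration, and the adaptive rounds consumed by a single inner iteration. The last of these is $\O(1)$: as noted just above the lemma, the sets $X_1, \ldots, X_{\rank(\M(S,X))}$ are generated by \blackbox\ in zero adaptive rounds (it makes no function calls, only matroid-oracle calls), and then all marginals $f_{S \cup \{a_1,\ldots,a_i\}}(a)$ needed to form every $X_i$ and hence to compute $i^\star$ can be queried in parallel in a single adaptive round. So the whole lemma reduces to showing that the total number of inner iterations across the entire run of \algone\ is $\O\!\left(\log(n)\log(k/\epsilon)\tfrac{1}{\epsilon^2}\right)$.

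The key step is to bound the number of inner iterations \emph{within a single outer iteration} by $\O\!\left(\tfrac{1}{\epsilon}\log n\right)$. This follows from the defining property of $i^\star$: since $i^\star = \min\{i : |X_i| \le (1-\epsilon)|X|\}$, after the update $X \leftarrow X_{i^\star}$ the surviving set has shrunk by at least a factor $(1-\epsilon)$ from its value at the start of that inner iteration. Starting from $|X| \le n$ and noting the while-loop terminates once $X = \emptyset$ — equivalently, once $|X| < 1$ — the number of inner iterations in one outer iteration is at most $\log_{1/(1-\epsilon)} n = \O\!\left(\tfrac{\log n}{\epsilon}\right)$, using $\log\frac{1}{1-\epsilon} \ge \epsilon$. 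Here one should also check that $i^\star$ is always well-defined, i.e. that the set $\{i : |X_i| \le (1-\epsilon)|X|\}$ is nonempty: this holds because once $i$ reaches $\rank(\M(S,X))$ the sequence is a base of $\M(S,X)$, so no element $a \in X$ satisfies $S \cup \{a_1,\ldots,a_i,a\} \in \M$, forcing $X_i = \emptyset$; hence the while-loop indeed makes progress and terminates.

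Multiplying, the total number of inner iterations over the whole algorithm is $\Delta \cdot \O\!\left(\tfrac{\log n}{\epsilon}\right) = \O\!\left(\tfrac{1}{\epsilon}\log\!\left(\tfrac{k}{\epsilon}\right)\right)\cdot\O\!\left(\tfrac{\log n}{\epsilon}\right) = \O\!\left(\log(n)\log\!\left(\tfrac{k}{\epsilon}\right)\tfrac{1}{\epsilon^2}\right)$, and since each contributes $\O(1)$ adaptive rounds, this is also the adaptivity of \algone.

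I do not expect a genuine obstacle here — the argument is essentially bookkeeping. The one point that warrants care is the claim that a single inner iteration is $\O(1)$-adaptive: one must be explicit that \blackbox\ is function-evaluation-free (so it costs zero adaptive rounds despite its sequential matroid-oracle calls, with parallelization of those calls deferred to Section~\ref{sec:matroid}), and that computing all of $X_1, \ldots, X_{\rank(\M(S,X))}$ requires only the batch of marginals $\{f_{S\cup\{a_1,\ldots,a_i\}}(a)\}_{i,a}$, which are mutually independent queries and hence occupy one round. Everything else is the geometric-decay counting above.
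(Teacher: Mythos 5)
Your proof is correct and follows essentially the same route as the paper's: $\Delta$ outer iterations, $\O(\epsilon^{-1}\log n)$ while-loop iterations from the geometric decay of $|X|$ forced by the definition of $i^\star$, and one adaptive round per while-loop iteration since \blackbox{} is function-evaluation-free and all $X_i$ can be computed in parallel. The only addition over the paper's (very terse) proof is your explicit check that $i^\star$ is well-defined because $X_{\rank(\M(S,X))}=\emptyset$, which is a nice touch but not a different approach.
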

\begin{proof}
The for loop has  $\Delta$ iterations. The  while loop has at most $\O(\epsilon^{-1} \log n)$ iterations since, by definition of $i^{\star}$, an $\epsilon$ fraction of the surviving elements are discarded from $X$ at every iteration.
We can find $i^{\star}$ by computing $X_i$ for each $i \in [k]$ in parallel in one round. 
\end{proof}

We note that the query complexity of the algorithm is $\O\left(nk\log(n)\log\left(\frac{k}{\epsilon}\right)\frac{1}{\epsilon^2}\right)$ and can be improved to $\O\left(n\log(n)\log(k)\log\left(\frac{k}{\epsilon}\right)\frac{1}{\epsilon^2}\right)$  if we allow $\O\left(\log(n)\log(k)\log\left(\frac{k}{\epsilon}\right)\frac{1}{\epsilon^2}\right)$ adaptivity by doing a binary search over at most $k$ sets $X_i$ to find $i^{\star}$.  The details can be found in Appendix~\ref{sec:appcomb}.

%
%

\subsection{Approximation guarantee} 
The main result for the approximation guarantee is that the algorithm returns a solution $S = \{a_1, \ldots, a_l\}$ s.t. for all $i \leq l$, the marginal contribution obtained by $a_i$ to $\{a_1, \ldots, a_{i-1}\}$ is near optimal with respect to all elements $a$ such that $\{a_1, \ldots, a_{i-1}, a\} \in \M$.  To prove this we show that the threshold $t$ is an approximate upper bound on the maximum marginal contribution. 

\begin{lemma} 
\label{lem:invariants} Assume that $f$ is submodular and that $\M$ is downward closed. Then, at any iteration, $t \geq (1 - \epsilon) \max_{a : S \cup a \in \M} f_S(a).$
\end{lemma}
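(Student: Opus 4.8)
The plan is to prove the invariant by induction on the iterations of the outer \textbf{for} loop, tracking the value of $t$ against $\max_{a : S \cup a \in \M} f_S(a)$, where $S$ is the current solution. At the start, $S = \emptyset$ and $t = \max_{a \in N} f(a) = \max_{a : \emptyset \cup a \in \M} f_\emptyset(a)$ (using that $\M$ is downward closed, so every singleton is feasible unless it is a loop, in which case it contributes $0$ and can be ignored), so the invariant holds trivially — in fact with $t$ equal to, not just at least $(1-\epsilon)$ times, the maximum marginal. The key point to establish is that the invariant is preserved across two kinds of updates: (i) augmenting $S$ by a prefix $\{a_1,\ldots,a_{i^\star}\}$ inside the \textbf{while} loop, and (ii) decreasing $t$ to $(1-\epsilon)t$ at the end of an outer iteration.

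For update (i), the crucial observation is the definition of the sets $X_i$ and of $i^\star$. When the \textbf{while} loop terminates, $X = \emptyset$, meaning $X_{i^\star} = \emptyset$: there is no element $a \in X$ with both $S \cup \{a_1,\ldots,a_{i^\star}\} \cup a \in \M$ and $f_{S \cup \{a_1,\ldots,a_{i^\star}\}}(a) \geq t$. But I must be careful: $X$ was being shrunk throughout the \textbf{while} loop, so an element could have been discarded at an \emph{earlier} inner iteration. So the argument must be that once an element $a$ is discarded from $X$ at some inner iteration with prefix $P' = \{a_1,\ldots,a_{i^\star}\}$ appended to the then-current $S'$, either $S' \cup P' \cup a \notin \M$ (and since the solution only grows and $\M$ is downward closed, $a$ remains infeasible forever after), or $f_{S' \cup P'}(a) < t$; by submodularity, $f_{S}(a) \leq f_{S' \cup P'}(a) < t$ for any later superset $S \supseteq S' \cup P'$. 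Hence when the \textbf{while} loop ends with $X = \emptyset$, every element $a$ with $S \cup a \in \M$ has $f_S(a) < t$, which gives $\max_{a : S \cup a \in \M} f_S(a) < t$, strictly stronger than the claimed bound. I expect this bookkeeping — that a discarded element stays "bad" for all future, larger solutions — to be the main obstacle, since it requires invoking both submodularity (for the marginal-value condition) and downward-closedness of $\M$ (for the feasibility condition) and carefully relating the $S$ at the moment of discarding to the final $S$.

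For update (ii), this is where the factor $(1-\epsilon)$ enters. At the end of an outer iteration, the \textbf{while} loop has emptied $X$, so by the argument above $\max_{a : S \cup a \in \M} f_S(a) < t$. After setting $t \leftarrow (1-\epsilon)t$, we need $(1-\epsilon)t \geq (1-\epsilon)\max_{a : S \cup a \in \M} f_S(a)$, which is immediate from $t > \max_{a : S \cup a \in \M} f_S(a)$. Then within the next outer iteration, update (i) only shrinks the left-hand side (more elements in $S$ means smaller optimal marginal, by submodularity and the fact that feasible additions form a subset), so the invariant $t \geq (1-\epsilon)\max_{a : S \cup a \in \M} f_S(a)$ persists until the next $t$-decrease. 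Stitching these together by induction over the outer loop — base case trivial, inductive step handling a full outer iteration (many (i)-updates followed by one (ii)-update) — completes the proof. One edge case worth a sentence: if at some point $X = N$ already satisfies $i^\star$ undefined or the set of feasible-and-good elements is empty from the start, the bound still holds since then the maximum marginal is already below $t$.
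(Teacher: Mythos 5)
Your proof is correct and follows essentially the same route as the paper's: a base case at $t = \max_{a \in N} f(a)$, preservation under $S$-augmentation via submodularity (marginals shrink) and downward-closedness (the feasible set shrinks), and the key step at the $t$-decrease where $X = \emptyset$ forces every element to have been discarded at some earlier point, after which it stays infeasible (downward-closedness) or stays below threshold (submodularity). The only quibble is organizational — you present the ``discarded elements stay bad'' argument under the heading of update (i) when it is really the engine behind update (ii), and the true (i)-argument appears only in your final paragraph — but all the needed ideas are present and match the paper's proof.
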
 
\begin{proof} 
The claim initially holds by the initial definitions of $t = \max_{a \in N}f(a)$, $S = \emptyset$ and $X = N$. We show that this invariant is maintained through the algorithm when either $S$ or $t$ are updated.

First, assume that at some iteration of the algorithm we have $t \geq (1 - \epsilon) \max_{a : S \cup a \in \M} f_S(a)$ and that $S$ is updated to $S \cup \{a_1, \ldots, a_{i^{\star}}\}$. Then, for all $a$ such that $S \cup a \in \M$, 
$$f_{S \cup \{a_1, \ldots, a_{i^{\star}}\}}(a) \leq f_{S}(a) \leq t/(1- \epsilon)$$
where the first inequality is by submodularity and the second by the inductive hypothesis. Since $\{a : S \cup \{a_1, \ldots, a_{i^{\star}}\} \cup a \in \M\} \subseteq \{a : S \cup a \in \M\}$ by the downward closed property of $\M$, 
$$\max_{a : S \cup \{a_1, \ldots, a_{i^{\star}}\} \cup a \in \M} f_{S \cup \{a_1, \ldots, a_{i^{\star}}\}}(a) \leq \max_{a : S \cup a \in \M} f_{S \cup \{a_1, \ldots, a_{i^{\star}}\}}(a).$$
Thus, when $S$ is updated to $S \cup \{a_1, \ldots, a_{i^{\star}}\}$, we have $t \geq (1 - \epsilon) \max_{a : S \cup \{a_1, \ldots, a_{i^{\star}}\} \cup a \in \M} f_S(a)$.

Next, consider an iteration where $t$ is updated to  $t' = (1 - \epsilon)t$. By the algorithm,  $X = \emptyset$ at that iteration with current solution $S$. Thus, by the algorithm, for all $a \in N$, $a$ was discarded from $X$ at some previous iteration with current solution $S'$ s.t.  $S' \cup \{a_1, \ldots, a_{i^{\star}}\} \subseteq S$. Since $a$ was discarded, it is either the case that $S' \cup \{a_1, \ldots, a_{i^{\star}}\} \cup a \not \in \M$ or $f_{S \cup \{a_1, \ldots, a_{i^{\star}}\}}(a) < t$. If $S' \cup \{a_1, \ldots, a_{i^{\star}}\} \cup a \not \in \M$ then $S \cup a \not \in \M$ by the downward closed property of $\M$ and since $S' \cup \{a_1, \ldots, a_{i^{\star}}\} \subseteq S$. Otherwise, $f_{S' \cup \{a_1, \ldots, a_{i^{\star}}\}}(a) < t$ and by submodularity, $f_S(a) \leq f_{S' \cup \{a_1, \ldots, a_{i^{\star}}\}}(a) <   t = t' / (1-\epsilon)$. Thus, $\forall a \in N$ s.t. $S \cup a \in \M$,  $t' \geq (1 - \epsilon) f_S(a)$ and the invariant is maintained. 
\end{proof}

By exploiting the definition of $i^{\star}$ and the random feasible sequence property we show that Lemma~\ref{lem:invariants} implies that every element added to $S$ at some iteration $j$  has near-optimal expected marginal contribution to $S$. We define  $X_i^{\M} := \{a \in X: S \cup \{a_1, \ldots, a_i\}\cup a \in \M\}$.

\begin{lemma}
\label{lem:marg}
Assume that $a_1, \ldots, a_{\rank(\M(S,X))}$   is a random feasible sequence, then for all $i \leq i^{\star}$,
$$\E_{a_i}\left[f_{S \cup \{a_1, \ldots, a_{i-1}\}}(a_i)\right] \geq (1 - \epsilon)^2 \max_{a : S \cup \{a_1, \ldots, a_{i-1}\} \cup a \in \M} f_{S \cup \{a_1, \ldots, a_{i-1}\}}(a_i).$$
\end{lemma}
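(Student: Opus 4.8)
The plan is to fix $i \le i^{\star}$, condition on a realization of the prefix $(a_1,\dots,a_{i-1})$, and establish the bound pointwise in that realization; write $T := S \cup \{a_1,\dots,a_{i-1}\}$. First I would note that the event $\{i \le i^{\star}\}$ is determined by $|X_0|,\dots,|X_{i-1}|$ and hence by $(a_1,\dots,a_{i-1})$ alone, so conditioning on a prefix with $i\le i^{\star}$ is legitimate and $\max_{a:\,T\cup a\in\M} f_T(a)$ is then a fixed number. By the random feasible sequence property, conditioned on this prefix $a_i$ is uniform on $X_{i-1}^{\M} = \{a \in X : T \cup a \in \M\}$, which is nonempty since $i-1 < i^{\star} \le \rank(\M(S,X))$ and so $\{a_1,\dots,a_{i-1}\}$ is not a base of $\M(S,X)$.

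Next I would lower bound the probability that $a_i$ falls in $X_{i-1}$. Minimality of $i^{\star}$ together with $i-1<i^{\star}$ gives $|X_{i-1}| > (1-\epsilon)|X|$, and we have the inclusions $X_{i-1} \subseteq X_{i-1}^{\M} \subseteq X$ (the first because $X_{i-1}$ additionally demands $f_T(\cdot)\ge t$, the second by definition). Hence
\[
\Pr_{a_i}\!\left[a_i \in X_{i-1}\right] \;=\; \frac{|X_{i-1}|}{|X_{i-1}^{\M}|} \;\ge\; \frac{|X_{i-1}|}{|X|} \;>\; 1-\epsilon .
\]
Since $f$ is monotone we have $f_T(a_i)\ge 0$ always and $f_T(a_i)\ge t$ when $a_i\in X_{i-1}$, so $\E_{a_i}[f_T(a_i)] \ge (1-\epsilon)\,t$.

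It remains to trade the threshold $t$ for the stated benchmark. I would invoke Lemma~\ref{lem:invariants}, which controls $t$ against the current solution $S$ rather than the grown prefix $T\supseteq S$, and bridge the gap using downward closure and submodularity: every $a$ with $T\cup a\in\M$ also satisfies $S\cup a\in\M$, and for such $a$, $f_T(a)\le f_S(a)$, so $\max_{a:\,T\cup a\in\M} f_T(a) \le \max_{a:\,S\cup a\in\M} f_S(a) \le t/(1-\epsilon)$. Chaining this with the previous paragraph gives $\E_{a_i}[f_T(a_i)] \ge (1-\epsilon)\,t \ge (1-\epsilon)^2 \max_{a:\,T\cup a\in\M} f_T(a)$, i.e.\ the lemma (recalling $f_T(a_i)=f_{S\cup\{a_1,\dots,a_{i-1}\}}(a_i)$).

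The counting inequality and the monotonicity step are routine. The two points needing care are (i) the measurability bookkeeping, i.e.\ checking that both ``$i\le i^{\star}$'' and ``$a_i$ uniform on $X_{i-1}^{\M}$'' are properties of the conditioning on $(a_1,\dots,a_{i-1})$, so that the displayed inequality really is a conditional-expectation statement with a deterministic right-hand side; and (ii) the passage from Lemma~\ref{lem:invariants}, which is phrased for $S$, to the prefix $T$. I expect (ii) to be the main obstacle to state cleanly, since it is the one genuinely structural step and invokes both downward closure (to keep $a$ feasible for the smaller set $S$) and submodularity (to compare the marginals at $T$ and at $S$).
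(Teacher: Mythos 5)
Your proof is correct and follows the same route as the paper's: condition on the prefix so that $a_i$ is uniform on $X_{i-1}^{\M}$, bound $\Pr[a_i \in X_{i-1}] \ge |X_{i-1}|/|X| > 1-\epsilon$ using minimality of $i^\star$, apply a Markov-type bound to get $\E[f_T(a_i)] \ge (1-\epsilon)t$, and then relate $t$ to the maximum marginal on the prefix via Lemma~\ref{lem:invariants}. The only differences are stylistic: you make explicit the measurability bookkeeping (that $\{i\le i^\star\}$ and $X_{i-1}^{\M}$ are determined by the prefix) and the bridge from Lemma~\ref{lem:invariants} — which bounds $t$ against the current $S$ — to the grown prefix $T$ via downward closure and submodularity, a step the paper cites as ``by Lemma~\ref{lem:invariants}'' without spelling it out (the same chain already appears inside the proof of Lemma~\ref{lem:invariants} for $i^\star$).
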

\begin{proof}
By the random feasibility condition, we have $a_i \sim \U(X_{i-1}^{\M})$. We get
$$\Pr_{a_i}\left[f_{S \cup \{a_1, \ldots, a_{i-1}\}}(a_i) \geq t\right] \cdot t
  = \frac{|X_{i-1}|}{|X_{i-1}^{\M}|} \cdot t   
 \geq  \frac{|X_{i-1}|}{|X|} \cdot t  
 \geq (1-\epsilon)(1-\epsilon)\max_{a : S \cup \{a_1, \ldots, a_{i-1}\} \cup a \in \M} f_{S_{i-1}}(a_i)$$
where the equality is by definition of $X_{i-1}$,  the first inequality since $X_{i-1}^{\M}  \subseteq X$, and the second since $i \leq i^{\star}$ and by Lemma~\ref{lem:invariants}. Finally, note that $\E\left[f_{S \cup \{a_1, \ldots, a_{i-1}\}}(a_i)\right]  \geq \Pr\left[f_{S \cup \{a_1, \ldots, a_{i-1}\}}(a_i) \geq t\right] \cdot t$.
\end{proof}

Next, we show that if every element $a_i$ in a solution $S = \{a_1, \ldots, a_k\}$ of size $k = \rank(\M)$ has near-optimal expected marginal contribution to $S_{i-1} := \{a_1, \ldots, a_{i-1}\}$, then we obtain an approximation arbitrarily close to $1/2$ in expectation.
\begin{lemma}
Assume that $S = \{a_1, \ldots, a_k\}$ such that 
$\E_{a_i}[f_{S_{i-1}}(a_i)] \geq (1 - \epsilon) \max_{a: S_{i-1} \cup a \in \M}f_{S_{i-1}}(a)$ where $S_i = \{a_1, \ldots, a_i\}$. Then, for a matroid constraint $\M$, we have $\E\left[f(S)\right] \geq (1/2 - \O(\epsilon))\OPT$.
\end{lemma}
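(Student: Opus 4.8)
The plan is to run the classical Nemhauser--Wolsey--Fisher greedy-for-matroids analysis, but tracking expectations carefully so that the per-step hypothesis is used only in its conditional-expectation form. The one genuinely combinatorial ingredient is a base-exchange bijection, which lets us charge $\OPT$ against the expected sum of maximal marginal contributions along the random sequence.

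First I would fix notation. Since $f$ is monotone we may assume the optimal solution $O$ is a base of $\M$ (if not, extend it to a base; its value is still $\OPT$), so $|O| = |S| = k$; write $S_i = \{a_1,\dots,a_i\}$, and note $S = S_k$ is a base since it is feasible and has size $k$. Recall the base-exchange lemma: for any two bases $A,B$ of $\M$ there is a bijection $\phi\colon A\to B$ with $\phi(a)=a$ for $a\in A\cap B$ and $(A\setminus a)\cup\phi(a)\in\M$ for every $a\in A$. For each realization of the random sequence apply this with $A=S$, $B=O$, obtaining a bijection $\phi$ from $\{a_1,\dots,a_k\}$ onto $O$. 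Since $S_{i-1}\subseteq S\setminus a_i$, downward closure gives $S_{i-1}\cup\phi(a_i)\in\M$, so $\phi(a_i)$ is a feasible extension of $S_{i-1}$; hence $f_{S_{i-1}}(\phi(a_i))\le \max_{a:S_{i-1}\cup a\in\M}f_{S_{i-1}}(a)$, and by submodularity $f_S(\phi(a_i))\le f_{S_{i-1}}(\phi(a_i))$ (when $a_i\in S\cap O$ the left side is $0$ and the inequality is trivial). Summing over $i$ and using that $\phi$ is a bijection onto $O$ yields, for every realization,
$$\sum_{o\in O}f_S(o)\;\le\;\sum_{i=1}^k\max_{a\,:\,S_{i-1}\cup a\in\M}f_{S_{i-1}}(a).$$

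Next I would combine three estimates, writing $M_i := \max_{a:S_{i-1}\cup a\in\M}f_{S_{i-1}}(a)$. (i) From the hypothesis, taking the outer expectation over the prefix, $\E[f_{S_{i-1}}(a_i)]\ge(1-\epsilon)\,\E[M_i]$; summing over $i$ and telescoping, $\E[f(S)]\ge(1-\epsilon)\,\E\big[\sum_i M_i\big]$. (ii) Taking expectations in the pointwise inequality above, $\E\big[\sum_i M_i\big]\ge\E\big[\sum_{o\in O}f_S(o)\big]$. (iii) By submodularity and monotonicity, $\OPT=f(O)\le f(O\cup S)\le f(S)+\sum_{o\in O}f_S(o)$ pointwise, so $\E\big[\sum_{o\in O}f_S(o)\big]\ge\OPT-\E[f(S)]$. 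Chaining (i)--(iii) gives $\E[f(S)]\ge(1-\epsilon)\big(\OPT-\E[f(S)]\big)$, hence $\E[f(S)]\ge\frac{1-\epsilon}{2-\epsilon}\,\OPT\ge(1/2-\epsilon)\,\OPT=(1/2-\O(\epsilon))\,\OPT$.

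The only subtle point — and the main obstacle — is the mismatch between the hypothesis, which is a statement about the conditional expectation of $f_{S_{i-1}}(a_i)$ given $a_1,\dots,a_{i-1}$, and the exchange bijection, which depends on the entire realized base $S$. The resolution is exactly the decoupling used above: establish the bijection inequality pointwise for each realization and only then take expectations, while invoking the approximate-optimality hypothesis solely through $\E[f_{S_{i-1}}(a_i)]\ge(1-\epsilon)\E[M_i]$, so that one never conditions on future choices. Everything else is the textbook matroid-greedy computation.
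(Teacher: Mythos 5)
Your proof is correct and follows essentially the same route as the paper's: the paper invokes an augmentation-based ordering of $O$ (its Lemma~\ref{lem:matroidO}) so that $S_{i-1} \cup \{o_i\} \in \M$ for all $i$, which is exactly the role your Brualdi base-exchange bijection $\phi$ plays, and both proofs then chain the hypothesis, submodularity/telescoping, and monotonicity to reach $\E[f(S)] \geq (1-\epsilon)(\OPT - \E[f(S)])$. Your write-up is a bit more careful about where expectations are taken (the paper's displayed chain elides some $\E[\cdot]$'s around $f_S(O)$), but the underlying argument is the same.
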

\begin{proof}
Let $O = \{o_1, \ldots, o_k\}$ such that $\{a_1, \ldots, a_{i-1}, o_i\}$ is feasible for all $i$, which exists by the augmentation property of matroids. We get,
$$
\E[f(S)] = \sum_{i \in [k]}\E[f_{S_{i-1}}(a_i)] \geq (1-\epsilon)\sum_{i \in [k]}\E[f_{S_{i-1}}(o_i)] \geq (1-\epsilon)f_S(O) \geq (1-\epsilon)(\OPT - f(S)). \qedhere$$
\end{proof}

A corollary of the lemmas above is that \algone \ has $\O(\log(n)\log(k))$ adaptive rounds and provides an approximation that is arbitrarily close to $1/2$, in expectation.  To obtain this guarantee with high probability we can simply run parallel instances of the while-loop in the algorithm and include the elements obtained from the best instance. We also note that the solution $S$ returned by \algone \ might have size smaller than $\rank(\M)$, which causes an arbitrarily small loss for sufficiently large $\Delta$.  We give the full details in Appendix~\ref{sec:appcomb}.

\begin{restatable}{rThm}{thmcomb}
\label{thm:comb}
For any $\epsilon>0$, there is an $\O\left(\log(n)\log\left(\frac{k}{\epsilon}\right)\frac{1}{\epsilon^2}\right)$ adaptive algorithm that obtains a $1/2- \O(\epsilon)$ approximation with probability $1 - o(1)$ for maximizing a monotone submodular function under a matroid constraint.
\end{restatable}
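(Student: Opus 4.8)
The plan is to assemble Theorem~\ref{thm:comb} from Lemmas~\ref{lem:adaptivity},~\ref{lem:invariants} and~\ref{lem:marg}, closing the two gaps flagged in the discussion above: the returned set may be smaller than a base of $\M$, and the approximation established so far holds only in expectation. The adaptivity is immediate from Lemma~\ref{lem:adaptivity}: with $\Delta=\O\left(\epsilon^{-1}\log(k/\epsilon)\right)$ the algorithm is $\O\left(\log(n)\log(k/\epsilon)\epsilon^{-2}\right)$-adaptive, and this bound is deterministic since the definition of $i^\star$ forces an $\epsilon$-fraction of the surviving elements to be discarded in every while-iteration regardless of the random sequences; running copies in parallel (below) does not change it.

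For the expectation guarantee I would first control the final threshold: after the outer loop, $t=(1-\epsilon)^{\Delta}\max_{a\in N}f(a)$, and assuming without loss of generality that $f(\emptyset)=0$ and $\M$ has no loops (loops are never added and can be removed in a preprocessing step costing zero adaptive rounds), $\max_a f(a)=\max_{a:\{a\}\in\M}f(a)\le\OPT$, so $\Delta=\Theta\left(\epsilon^{-1}\log(k/\epsilon)\right)$ yields $t\le(\epsilon/k)\OPT$ at termination. By Lemma~\ref{lem:invariants}, the output $S=\{a_1,\dots,a_l\}$ then satisfies $f_S(a)\le t/(1-\epsilon)\le 2\epsilon\,\OPT/k$ for every $a$ with $S\cup a\in\M$, so extending $S$ to a base $S^{+}\in\M$ costs at most $2\epsilon\,\OPT$ in value by submodularity; the size deficiency is thus harmless and we may argue as if we had a base. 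Next I would run the counting argument of the unnumbered lemma preceding the theorem, conditionally. Using the bijective basis-exchange property on the bases $S^{+}$ and the optimal base $O$, enumerated so that the first $l$ elements of $S^{+}$ are $a_1,\dots,a_l$, I obtain a labeling $o_1,\dots,o_k$ of $O$ with $S^{+}-b_i+o_i\in\M$, hence (downward closedness) $S_{i-1}\cup o_i\in\M$ for $i\le l$ and $S\cup o_j\in\M$ for $j>l$, where $S_i:=\{a_1,\dots,a_i\}$. Fixing the filtration in which $\F_{j-1}$ reveals everything up to the draw of $a_j$ (so that $S_{j-1}$ and the event $\{j\le l\}$ are $\F_{j-1}$-measurable), Lemma~\ref{lem:marg} gives, on $\{j\le l\}$, $\E[f_{S_{j-1}}(a_j)\mid\F_{j-1}]\ge(1-\epsilon)^2 f_{S_{j-1}}(o_j)$; note $o_j$ need not be $\F_{j-1}$-measurable, only the pointwise feasibility $S_{j-1}\cup o_j\in\M$ is used. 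Summing over $j$, taking expectations, and using subadditivity of marginals and monotonicity in the form $\sum_{j\le l}f_{S_{j-1}}(o_j)\ge\sum_{j\le k}f_S(o_j)-2\epsilon\,\OPT\ge f_S(O)-2\epsilon\,\OPT\ge\OPT-f(S)-2\epsilon\,\OPT$ (the middle steps drop the $k-l$ padding terms, each at most $2\epsilon\,\OPT/k$, and apply subadditivity of $f_S$), I arrive at $\E[f(S)]\ge(1-\epsilon)^2\bigl(\OPT-\E[f(S)]-2\epsilon\,\OPT\bigr)$, which rearranges to $\E[f(S)]\ge(1/2-\O(\epsilon))\OPT$.

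To upgrade this to a high-probability statement I would invoke a reverse Markov inequality: $S\in\M$ is maintained throughout, so $f(S)\in[0,\OPT]$ always and $\E[\OPT-f(S)]\le(1/2+c\epsilon)\OPT$ for a constant $c$; hence a single execution returns value at least $(1/2-(c+1)\epsilon)\OPT$ with probability $\Omega(\epsilon)$. Running $\O(\epsilon^{-1}\log n)$ independent copies in parallel and keeping the best, the chance that all fail is $\bigl(1-\Omega(\epsilon)\bigr)^{\O(\epsilon^{-1}\log n)}=o(1)$, and parallelism preserves the $\O(\log(n)\log(k/\epsilon)\epsilon^{-2})$ adaptivity. (Alternatively, repeating only the while loop inside each outer iteration, as suggested before the theorem, keeps the query complexity lower.)

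The step I expect to be the main obstacle is the conditional bookkeeping in the second paragraph: Lemma~\ref{lem:marg} is stated for a single while-iteration with a fixed current solution, whereas $S$ is assembled over $\Delta$ outer iterations and $\O(\epsilon^{-1}\log n)$ while-iterations, so one must carefully line up the filtration and verify that at the moment $a_j$ is drawn the local maximum $\max_{a:S_{j-1}\cup a\in\M}f_{S_{j-1}}(a)$ really dominates $f_{S_{j-1}}(o_j)$ — which is where downward closedness and the realization-dependent (but only through pointwise feasibility) labeling $o_1,\dots,o_k$ come in — and then check that the $\O(\epsilon)$ losses from the $(1-\epsilon)^2$ factor, the threshold decay, and the size-deficiency correction all aggregate into a single $\O(\epsilon)\OPT$ additive term.
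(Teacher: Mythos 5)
Your proof is essentially correct and it reaches the same theorem, but you take a different route from the paper's in the high-probability step, and a cosmetically different route in handling the short-solution issue. The paper does not invoke reverse Markov over independent runs of the whole algorithm; instead it modifies the algorithm to \algoneplus, which at each while-iteration spawns $\rho$ parallel trials of the sequence-and-prefix step, keeps the trial with the best average marginal gain, and then proves a pointwise (not just in-expectation) marginal bound via a Chernoff bound within each trial plus a second Chernoff bound over the $\rho$ trials, followed by a union bound over all $\O(\Delta\log n/\epsilon)$ while-iterations. That gives a deterministic chain $\sum_i f_{S_{i-1}}(a_i)\ge(1-\epsilon)^3\max_{a:S_{i-1}\cup a\in\M}f_{S_{i-1}}(a)$ holding w.h.p., which is then fed to the augmentation-ordering argument. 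Your route is more modular: prove the expectation bound once (as you do, using the filtration argument around Lemma~\ref{lem:marg}), then observe that $f(S)\in[0,\OPT]$ and apply Markov to $\OPT-f(S)$ to get an $\Omega(\epsilon)$ per-run success probability, and amplify with $\O(\epsilon^{-1}\log n)$ independent parallel runs plus one extra round to pick the argmax. Both keep the $\O(\log(n)\log(k/\epsilon)\epsilon^{-2})$ adaptivity; your amplification factor $\O(\epsilon^{-1}\log n)$ on query complexity is in fact slightly milder in its $\epsilon$-dependence than the paper's $\rho=\O(\epsilon^{-2}\log(\cdot))$, while the paper's choice to repeat only the inner step is more surgical. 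For the short-solution gap, the paper pads $S$ with formal dummy elements of marginal exactly $t_f$; you instead extend $S$ to an actual base $S^+\in\M$ and bound the added value by $2\epsilon\OPT$ via Lemma~\ref{lem:invariants} — these are equivalent, and your variant has the minor advantage of staying inside the matroid so that the bijective basis-exchange ordering of $O$ is immediate. One small caveat worth noting: your filtration bookkeeping (that $\{j\le l\}$ and $S_{j-1}$ are $\F_{j-1}$-measurable so that Lemma~\ref{lem:marg} applies conditionally) is the right thing to want and you flag the subtlety honestly; the paper's in-expectation lemma glosses over exactly the same point, so you are not worse off, but a fully careful write-up would spell out that $i^\star$ is a stopping time of the random feasible sequence within each while-iteration, which is what makes $\{i\le i^\star\}$ measurable with respect to $a_1,\dots,a_{i-1}$ and hence makes the conditional expectation argument sound.
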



In Appendix~\ref{sec:appcomb}, we generalize this result and obtain a $1/(P+1)- \O(\epsilon)$ approximation with high probability for the  intersection of $P$ matroids.

\section{The Accelerated Continuous Greedy Algorithm}
\label{sec:continuous}
 
In this section we describe the accelerated continuous greedy algorithm that achieves the main result of the paper.  This algorithm employs the combinatorial algorithm from the previous section to construct a continuous solution which approximately maximizes the multilinear relaxation $F$ of the  function $f$.  This algorithm requires $\O(\log(n)\log(k))$ adaptive rounds and it produces a continuous solution whose  approximation to the optimal solution is with high probability arbitrarily close to $1-1/e$.    Finally, since the solution is continuous and we seek a feasible \emph{discrete} solution, it requires rounding.  Fortunately, by using either dependent rounding~\cite{chekuri2009dependent} or contention resolution schemes~\cite{vondrak2011submodular} this can be done with an arbitrarily small loss in the approximation guarantee  without any function evaluations, and  hence without any additional adaptive rounds. 

\subsection{The algorithm}
The accelerated continuous greedy algorithm follows the same principle as the (standard) continuous greedy algorithm~\cite{vondrak08}: at every iteration, the  solution $\bx \in[0,1]^n$ moves in the direction of a feasible set $S \in \M$.  The crucial difference between the accelerated continuous greedy and the standard continuous greedy is in the choice of this set $S$ guiding the direction in which $\bx$ moves.  This difference allows the accelerated continuous greedy to terminate after a \emph{constant} number of iterations, each of which has $\O(\log(n)\log(k))$ adaptive rounds, in contrast to the continuous greedy which requires a linear number of iterations.  

To determine the direction in every iteration, the accelerated continuous greedy applies \algone \ on the surrogate function $g$ that measures the marginal contribution to  $\bx$ when taking a step of size $\lambda$ in the direction of $S$.  That is, $g(S) := F_\bx(\lambda S) = F(\bx + \lambda S) - F(\bx)$ where we abuse notation and write $\lambda S$ instead of $\lambda   \ones_S$ for $\lambda \in[0,1]$ and $S\subseteq N$.  Since $f$ is a monotone submodular function it is immediate that $g$ is monotone and submodular as well.    

\begin{algorithm}[H]
\caption{\algcontinuous}
\begin{algorithmic}
    	\INPUT  matroid $\M$, step size $\lambda$
    	\STATE  $\bx \leftarrow \zeros$
    	\STATE \textbf{for} $1/\lambda$ \text{iterations}  \textbf{do}
    	\STATE \ \ \ define $g:2^N \to \mathbb{R}$ to be $g(T) = F_{\bx}(\lambda T)$
	\STATE \ \  \ $S \leftarrow \algone(g, \M)$ 
	\STATE \ \ \ $\bx \leftarrow \bx + \lambda S$
    	\RETURN $\bx$ 
  \end{algorithmic}
  \label{alg:continuous}
\end{algorithm}

The analysis shows that in every one of the $1/\lambda$ iterations, \algone \  finds  $S$  such that the contribution of taking a step of size $\lambda$ in the direction of $S$ is approximately a $\lambda$ fraction of $\OPT - F(\bx)$.  For any $\lambda$ this is a sufficient condition for obtaining the $1 - 1/e - \epsilon$ guarantee.  

The reason why the standard continuous greedy cannot be implemented with a constant number of rounds $1/\lambda$ is that in every round it moves in the direction of $\ones_{S}$ for  $S := \argmax_{T \in \M} \sum_{a \in T} g(a)$.  When $\lambda$ is constant  $F_{\bx}(\lambda S)$ is arbitrarily low due to the potential overlap between high valued singletons (see Appendix~\ref{sec:applayerone}).  Selecting $S$ using \algone \ is the crucial part of the accelerated continuous greedy which allows implementing it in a constant number of iterations.  

\subsection{Analysis}
We start by giving a sufficient condition on \algone \ to obtain the ${1-1/e-\O(\epsilon)}$ approximation guarantee. The analysis is standard and the proof is deferred to Appendix~\ref{sec:applayerone}.

\begin{restatable}{rLem}{lemmetaoneExp}
\label{lem:metaoneExp}
For a given matroid $\M$ assume that $\algone$  outputs $S \in \M$ s.t.
$\E_S\left[F_{\bx}(\lambda S) \right] \geq (1 - \epsilon)  \lambda   (\OPT - F(\bx))$ at every iteration  of \algcontinuous .
Then \algcontinuous \ outputs $\bx \in P(\M)$ s.t.
$\E[F(\bx)] \geq \left(1 - 1/e - \epsilon\right) \OPT$.
\end{restatable}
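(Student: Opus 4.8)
The plan is to track the quantity $\OPT - \E[F(\bx^{(j)})]$ across the $1/\lambda$ iterations, where $\bx^{(j)}$ denotes the continuous solution after $j$ iterations, and show it shrinks by a factor of roughly $(1-(1-\epsilon)\lambda)$ each step. First I would fix an iteration $j$ with current point $\bx = \bx^{(j-1)}$, and let $S = S^{(j)}$ be the set returned by \algone\ at that iteration, so that $\bx^{(j)} = \bx + \lambda\ones_S$. By hypothesis, $\E_S[F_\bx(\lambda S)] \geq (1-\epsilon)\lambda(\OPT - F(\bx))$; here the expectation is over the internal randomness of \algone, and I would be careful to condition on $\bx$ (itself random, as it depends on earlier iterations) and then take a further expectation. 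This gives
\[
\E[F(\bx^{(j)})] - \E[F(\bx^{(j-1)})] \;\geq\; (1-\epsilon)\lambda\left(\OPT - \E[F(\bx^{(j-1)})]\right),
\]
using the tower property and linearity of expectation, which rearranges to $\OPT - \E[F(\bx^{(j)})] \leq \bigl(1 - (1-\epsilon)\lambda\bigr)\bigl(\OPT - \E[F(\bx^{(j-1)})]\bigr)$.

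Next I would unroll this recursion from $j = 1/\lambda$ down to $j=0$, using $F(\bx^{(0)}) = F(\zeros) \geq 0$, to get
\[
\OPT - \E[F(\bx^{(1/\lambda)})] \;\leq\; \bigl(1-(1-\epsilon)\lambda\bigr)^{1/\lambda}\OPT \;\leq\; e^{-(1-\epsilon)}\OPT,
\]
where the last step uses $1 - z \leq e^{-z}$ with $z = (1-\epsilon)\lambda$, raised to the power $1/\lambda$. Then $\E[F(\bx^{(1/\lambda)})] \geq (1 - e^{-(1-\epsilon)})\OPT \geq (1 - 1/e - \epsilon)\OPT$, where the final inequality follows since $e^{-(1-\epsilon)} = e^{-1}e^{\epsilon} \leq e^{-1}(1+2\epsilon)$ for small $\epsilon$, so $1 - e^{-(1-\epsilon)} \geq 1 - 1/e - 2\epsilon/e \geq 1 - 1/e - \epsilon$ (absorbing constants, or by replacing $\epsilon$ with $\epsilon/2$ throughout). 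Separately, feasibility $\bx^{(1/\lambda)} \in P(\M)$ is immediate: $\bx$ is a convex combination of the $1/\lambda$ indicator vectors $\ones_{S^{(j)}}$ with $S^{(j)} \in \M$, since each coordinate is incremented by $\lambda$ at most $1/\lambda$ times and the $S^{(j)}$ are independent sets, so $\bx$ lies in the matroid polytope (one should note $\ones_{S^{(j)}}$ may need to be extended to a base, but any independent set's indicator is in $P(\M)$, and a convex combination of $1/\lambda$ such vectors scaled by $\lambda$ stays in $P(\M)$).

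The one genuine subtlety — which is why the lemma statement phrases the hypothesis as an expectation over $S$ — is handling the nested randomness correctly: $\bx^{(j-1)}$ and the randomness of \algone\ at iteration $j$ are not independent in a naive sense, but conditioning on $\bx^{(j-1)}$ makes the hypothesis applicable pointwise, and then the tower property closes the loop. I expect this to be the only place requiring care; the rest is the standard continuous-greedy telescoping argument. I would also remark, as is standard, that the inequality $\OPT - F(\bx) \leq $ (a bound involving the best feasible direction) underlying the hypothesis comes from submodularity/monotonicity of $F$ and the fact that $O \in \M$ provides a feasible direction — but since that is baked into the hypothesis of this lemma, it does not need to be reproven here.
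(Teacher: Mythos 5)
Your proposal is correct and follows essentially the same route as the paper's proof: the same per-iteration recursion, the same telescoping/unrolling, the same use of $1-z \leq e^{-z}$, and the same final conversion $e^{-(1-\epsilon)} \leq (1+2\epsilon)/e$. If anything you are slightly more careful than the paper, which writes the recursion $F(\bx_i) \geq F(\bx_{i-1}) + (1-\epsilon)\lambda(\OPT - F(\bx_{i-1}))$ without explicitly threading the conditional expectation and tower property through the nested randomness, a point you handle correctly.
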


For a set $S=\{a_1,\ldots,a_k\}$ we define $S_i := \{a_1,\ldots,a_i\}$ and $S_{j:k} := \{a_{j}, \ldots, a_k\}$.  We use this notation in the lemma below.  The lemma is folklore and proved in Appendix~\ref{sec:applayerone} for completeness.  

\begin{restatable}{rLem}{lemmatroidO}
\label{lem:matroidO}
Let $\M$ be a matroid, then for any feasible sets $S=\{a_1,\ldots,a_k\}$ and $O$ of size $k$, there exists an ordering of $O = \{o_1, \ldots, o_k\}$  where for all $i \in [k]$, $S_i \cup O_{i+1:k} \in \M$ and $S_i \cap O_{i+1:k} = \emptyset$.
\end{restatable}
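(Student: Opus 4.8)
The plan is to reduce to the case of two bases of a rank-$k$ matroid and then build the ordering $o_1,\dots,o_k$ one element at a time by repeated basis exchanges.

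First I would replace $\M$ by its truncation $\M' := \{T \in \M : |T| \le k\}$, which is again a matroid, now of rank exactly $k$. Since $S$ and $O$ are independent sets of size $k$ they are bases of $\M'$, and any set of size at most $k$ is independent in $\M'$ if and only if it is independent in $\M$; as every set $S_i \cup O_{i+1:k}$ has size at most $k$, it suffices to produce the ordering inside $\M'$.

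I would then argue by induction on $i = 0,1,\dots,k$, maintaining the invariant that once $o_1,\dots,o_i$ have been chosen, the set $R_i := O \setminus \{o_1,\dots,o_i\}$ of still-unassigned optimal elements satisfies $S_i \cap R_i = \emptyset$ and $S_i \cup R_i \in \M'$ (which by a size count forces $|S_i \cup R_i| = k$, i.e.\ it is a base). The base case $i = 0$ is just $S_0 \cup R_0 = O$. For the inductive step, let $B_1 := S_{i-1} \cup R_{i-1}$, a base by hypothesis. If $a_i \in R_{i-1}$, take $o_i := a_i$; then $S_i \cup R_i = B_1$ and the invariant is preserved. Otherwise $a_i \notin R_{i-1}$ by assumption and $a_i \notin S_{i-1}$ since the elements of $S$ are distinct, so $a_i \notin B_1$, while $a_i$ lies in the base $B_2 := S$; thus $a_i \in B_2 \setminus B_1$. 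The basis exchange property — e.g.\ because the fundamental circuit of $a_i$ with respect to $B_1$ cannot be contained in the independent set $B_2$ — then produces some $y \in B_1 \setminus B_2$ with $B_1 - y + a_i$ a base. Since $S_{i-1} \subseteq S = B_2$, every element of $B_1 \setminus B_2$ lies in $R_{i-1}$, so $o_i := y$ is a valid assignment; moreover $y \notin S_{i-1}$ and $y \neq a_i$, so one checks $S_i \cup R_i = B_1 - y + a_i$, again a base, and $S_i \cap R_i = \emptyset$. Finally, taking $o_1,\dots,o_k$ to be the ordering produced (so that $O_{i+1:k} = R_i$, and in particular $R_k = \emptyset$, $S_k = S$), the invariant gives $S_i \cup O_{i+1:k} \in \M' \subseteq \M$ and $S_i \cap O_{i+1:k} = \emptyset$ for every $i \in [k]$, which is the claim.

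I do not expect a serious obstacle here — the statement is folklore — but the point to state carefully is why the element $y$ ejected by the exchange is always one of the unassigned optimal elements in $R_{i-1}$ rather than an element of $S_{i-1}$ (or $a_i$ itself): this is exactly what forces us to exchange against $B_2 = S \supseteq S_{i-1}$. The rest — distinctness of the $o_i$, the size counts, and the disjointness bookkeeping — is routine.
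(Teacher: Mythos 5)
Your proof is correct, and it takes a genuinely different route from the one in the paper. The paper argues by reverse induction on $i$ working directly in $\M$: the base case is $i=k$, where $S_k \cup O_{k+1:k} = S \in \M$ trivially; to go from $i+1$ down to $i$, it drops $a_{i+1}$ (so $S_i \cup O_{i+2:k} \in \M$ by downward closure, with size $k-1$), and then applies the augmentation axiom against the size-$k$ independent set $O$ to obtain $o_{i+1} \in O \setminus (S_i \cup O_{i+2:k})$ with $S_i \cup O_{i+1:k} \in \M$ --- a single augmentation step that simultaneously yields the next label, feasibility, and disjointness/distinctness of the $o_j$. Your version instead runs the induction forward, passes to the rank-$k$ truncation $\M'$ so that $S$ and $O$ become bases, and invokes symmetric basis exchange (which you re-derive from the fundamental-circuit argument). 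Both are sound; the paper's proof is lighter on machinery, using only the downward-closure and augmentation axioms already set up in the preliminaries, whereas your proof imports truncation and circuits/exchange. What your route buys is transparency: by tracking the explicit remainder set $R_i$ and the two bases $B_1, B_2$, the disjointness $S_i \cap O_{i+1:k} = \emptyset$ and the distinctness of the $o_j$ drop out by inspection, and you isolate precisely why the ejected element $y$ must come from $R_{i-1}$ rather than $S_{i-1}$ (namely $B_1 \setminus B_2 = R_{i-1} \setminus S$), which the paper's proof leaves implicit in the choice $o_{i+1} \in O \setminus (S_i \cup O_{i+2:k})$.
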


The following lemma is key in our analysis.  We argue that unless the algorithm already constructed $S$ of sufficiently large value, the sum of the contributions of the optimal elements to $S$ is arbitrarily close to the desired $\lambda(\OPT- F(\bx))$.

\begin{lemma}
\label{lem:Oik}
 Assume that $g(S) \leq \lambda(\OPT - F(\bx))$, then 
$\sum_i g_{S \setminus O_{i:k}}(o_i) \geq \lambda (1 - \lambda) (\OPT - F(\bx)).$
\end{lemma}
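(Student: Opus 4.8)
The goal is to lower-bound $\sum_i g_{S \setminus O_{i:k}}(o_i)$ by relating it to $g(S \cup O)$, and then to control $g(S\cup O)$ from below using the fact that $O$ is (essentially) optimal for the original function while $g$ is a shifted multilinear extension. First I would expand the sum telescopically. Ordering $O = \{o_1,\dots,o_k\}$ as in Lemma~\ref{lem:matroidO}, write $S \cup O_{i:k}$ and observe that $g_{S \setminus O_{i:k}}(o_i) = g((S\setminus O_{i:k})\cup o_i) - g(S\setminus O_{i:k})$; by submodularity of $g$ and monotonicity, $g_{S\setminus O_{i:k}}(o_i) \ge g_{S\cup O_{i+1:k}}(o_i)$, since $S\setminus O_{i:k} \subseteq S \cup O_{i+1:k}$ and $o_i \notin S\cup O_{i+1:k}$ (here $S_i \cap O_{i+1:k}=\emptyset$ from Lemma~\ref{lem:matroidO} is what keeps $o_i$ outside). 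Summing the telescoping lower bound $\sum_i g_{S\cup O_{i+1:k}}(o_i) = g(S\cup O) - g(S)$ gives $\sum_i g_{S\setminus O_{i:k}}(o_i) \ge g(S\cup O) - g(S)$.

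Next I would estimate $g(S\cup O)$ from below. By definition $g(S\cup O) = F(\bx + \lambda\ones_{S\cup O}) - F(\bx)$, so the task reduces to showing $F(\bx + \lambda\ones_{S\cup O}) - F(\bx) \ge \lambda(\OPT - F(\bx)) + g(S) - \lambda(1-\lambda)(\OPT - F(\bx))$, i.e. combined with the first step it suffices to show $F(\bx + \lambda\ones_{S\cup O}) - F(\bx) \ge (\lambda + g(S)/(\OPT-F(\bx)) - \lambda(1-\lambda))\cdot(\OPT - F(\bx))$ — but more cleanly, I would use the standard multilinear-extension fact that moving $\bx$ toward a point $\by \ge \bx$ coordinatewise satisfies $F(\by) - F(\bx) \ge \sum_j (y_j - x_j) \cdot \partial_j F(\by)$ is false in general; instead the right tool is: for any set $T$, $F(\bx + \lambda\ones_T) \ge (1-\lambda)F(\bx) + \lambda F((\bx \vee \ones_T))$ when... actually the cleanest route is the concavity of $F$ along nonnegative directions together with $F(\bx\vee\ones_O) \ge F(\ones_O) = f(O) = \OPT$ by monotonicity. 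Concretely, $F(\bx + \lambda\ones_{S\cup O}) \ge F(\bx + \lambda\ones_O)$ by monotonicity of $F$, and then using the inequality $F(\bx + \lambda(\ones_O - \bx\wedge\ones_O)) \ge (1-\lambda)F(\bx) + \lambda F(\bx \vee \ones_O)$ — a consequence of multilinearity restricted to the coordinates in $O$ — one gets $F(\bx+\lambda\ones_O) - F(\bx) \ge \lambda(F(\bx\vee\ones_O) - F(\bx)) \ge \lambda(\OPT - F(\bx))$. This yields $g(S\cup O) \ge \lambda(\OPT - F(\bx))$.

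Combining, $\sum_i g_{S\setminus O_{i:k}}(o_i) \ge g(S\cup O) - g(S) \ge \lambda(\OPT - F(\bx)) - g(S) \ge \lambda(\OPT - F(\bx)) - \lambda(\OPT - F(\bx)) = 0$, which is too weak — the hypothesis $g(S) \le \lambda(\OPT - F(\bx))$ only gives $0$, not $\lambda(1-\lambda)(\OPT-F(\bx))$. So the telescoping bound $g(S\cup O) - g(S)$ is lossy; I would instead keep more: $\sum_i g_{S\setminus O_{i:k}}(o_i) \ge \sum_i g_{S\cup O_{i+1:k}}(o_i)$, and rather than telescoping to $g(S\cup O)-g(S)$, bound each term below using submodularity against a smaller set. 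The sharper idea: since $S\setminus O_{i:k} \subseteq S$, we actually have the reverse-useful bound $g_{S\setminus O_{i:k}}(o_i) \ge g_{S}(o_i)$ is wrong (bigger set, smaller marginal). The correct extraction of the $(1-\lambda)$ factor must come from the multilinear extension's smoothness: $g_{S\setminus O_{i:k}}(o_i) = F_{\bx + \lambda\ones_{S\setminus O_{i:k}}}(\lambda o_i)$, and I would lower bound this by $\lambda \cdot \partial_{o_i}F$ evaluated at the shifted point, then sum the partial derivatives using that $\sum_{o\in O}\partial_o F(\bz) \ge F(\bz\vee\ones_O) - F(\bz)$ for any $\bz$, taking $\bz = \bx + \lambda\ones_S$; monotonicity and $F(\bz\vee\ones_O)\ge\OPT$ plus $F(\bz) = F(\bx+\lambda\ones_S) = F(\bx) + g(S) \le F(\bx) + \lambda(\OPT-F(\bx))$ then gives $\sum_i g_{S\setminus O_{i:k}}(o_i) \ge \lambda(F(\bz\vee\ones_O) - F(\bz)) \ge \lambda(\OPT - F(\bx) - \lambda(\OPT - F(\bx))) = \lambda(1-\lambda)(\OPT - F(\bx))$.

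\textbf{Main obstacle.} The delicate point is replacing the crude telescoping by the derivative estimate: I must (i) justify $g_{S\setminus O_{i:k}}(o_i) \ge \lambda \,\partial_{o_i} F(\bx + \lambda\ones_S)$ — this uses submodularity of $F$ (the partial derivative $\partial_{o_i}F$ is nonincreasing along nonnegative shifts, so evaluating at the larger point $\bx+\lambda\ones_S \supseteq$ the point $\bx + \lambda\ones_{S\setminus O_{i:k}}$ only decreases it, while the discrete marginal $g_{S\setminus O_{i:k}}(o_i)/\lambda$ dominates $\partial_{o_i}F$ at the endpoint of the $\lambda$-step), and (ii) justify $\sum_{o\in O}\partial_o F(\bz) \ge F(\bz \vee \ones_O) - F(\bz)$, which is the standard fact that $F$ restricted to the coordinates of $O$ is concave along the all-ones direction (multilinear $\Rightarrow$ affine in each coordinate $\Rightarrow$ the sum of partials at $\bz$ upper-bounds the slope, hence lower-bounds... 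I need the direction consistent with $F(\bz\vee\ones_O)$ being an endpoint). Getting these two monotonicity/concavity facts about the multilinear extension lined up with the correct inequality directions — and confirming that the $o_i \notin S$ disjointness from Lemma~\ref{lem:matroidO} is exactly what makes the $\lambda\ones_S$ shift legitimate — is where the real care is needed; the rest is bookkeeping.
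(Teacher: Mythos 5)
Your final argument (third paragraph onward) is correct and takes essentially the same route as the paper: both bound $\sum_i g_{S\setminus O_{i:k}}(o_i)$ below by $\lambda\bigl(F(\bx+\lambda\ones_S+\ones_O)-F(\bx+\lambda\ones_S)\bigr)$ --- the paper via discrete marginals at intermediate points $\bx+\ones_{O_{i-1}}+\lambda\ones_S$, you via partial derivatives $\partial_{o_i}F(\bx+\lambda\ones_S)$ --- and then close with monotonicity and the hypothesis $g(S)\le\lambda(\OPT-F(\bx))$. Your self-correction discarding the crude telescoping $g(S\cup O)-g(S)$ as too lossy is exactly the right call, and the two multilinear-extension facts flagged in your ``Main obstacle'' paragraph are both correct.
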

\begin{proof}
We first lower bound this sum of marginal contribution of optimal elements with the contribution of the optimal solution to the current solution $\bx + \lambda S$ at the end of the iteration:
\begin{align*}
\sum_{i \in [k]} g_{S \setminus O_{i:k}}(o_i)  =  \sum_{i \in [k]}  F_{\bx + \lambda S \setminus O_{i:k}}(\lambda o_i)
 \geq   \sum_{i \in [k]}  F_{\bx  + O_{i-1}+ \lambda S}(\lambda o_i) 
 \geq \lambda  \sum_{i \in [k]}  F_{\bx  + O_{i-1} + \lambda S }(o_i) 
= \lambda   F_{\bx +  \lambda S}(O)   
\end{align*}
where the first inequality is by submodularity and the second by the multilinearity of $F$. In the standard analysis of greedy algorithms the optimal solution $O$ may overlap with the current solution. In the continuous algorithm, since the algorithm takes steps of size $\lambda$, we can bound the overlap between the solution at this iteration $\lambda S$ and the optimal solution: 
\begin{align*}
  F_{\bx +  \lambda S}(O) =   F_{\bx}(O + \lambda S) - F_{\bx}(\lambda S)
 \geq F_{\bx}(O)  - \lambda(\OPT - F(\bx))
 =  (1- \lambda) \left(\OPT - F(\bx)\right)
\end{align*}
the first inequality is by monotonicity and lemma assumption and the second by monotonicity.
\end{proof}

As shown in Lemma~\ref{lem:matroidO}, \algone \ picks elements $a_i$ with near-optimal marginal contributions. Together with Lemma~\ref{lem:Oik} we get the desired bound on the contribution of $\lambda S$ to $\bx$.

\begin{lemma}
\label{lem:maincontinuous}
Let $\Delta = \O\left(\frac{1}{\epsilon} \log\left(\frac{k}{\epsilon \lambda}\right) \right)$ and $\lambda = \O(\epsilon)$.  For any $\bx$ such that $F(\bx) < (1-1/e) \OPT$, the set $S$ returned by $\algone(g, \M)$ satisfies $\E\left[F_{\bx}(\lambda S)\right] \geq (1- \O(\epsilon)) \lambda(\OPT - F(\bx)).$
\end{lemma}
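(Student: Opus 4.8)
The plan is to combine the per-element guarantee from \algone{} (Lemma~\ref{lem:marg} applied to the surrogate function $g$) with the structural bound of Lemma~\ref{lem:Oik}, using the matroid exchange ordering of Lemma~\ref{lem:matroidO}. First I would fix an iteration of \algcontinuous{} with current point $\bx$ satisfying $F(\bx) < (1-1/e)\OPT$, and let $S = \{a_1, \ldots, a_l\}$ be the (random) set returned by $\algone(g, \M)$. There are two cases. If $g(S) > \lambda(\OPT - F(\bx))$, then $F_\bx(\lambda S) = g(S)$ already exceeds the target and we are done trivially (in expectation as well, since this is a deterministic inequality once $S$ is fixed — but to be careful I would actually condition on the event and argue separately, or simply observe that $\E[F_\bx(\lambda S)]$ is at least the target in this case too). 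The interesting case is $g(S) \leq \lambda(\OPT - F(\bx))$, where Lemma~\ref{lem:Oik} gives $\sum_i g_{S \setminus O_{i:k}}(o_i) \geq \lambda(1-\lambda)(\OPT - F(\bx))$.

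Next I would relate $g(S) = \sum_i g_{S_{i-1}}(a_i)$ (telescoping) to the sum of optimal marginals. For each $i$, Lemma~\ref{lem:matroidO} (applied to $g$, $\M$, and the solution $S$, padded to a base if necessary) supplies an optimal element $o_i$ with $S_{i-1} \cup o_i \in \M$, hence $o_i$ is a feasible candidate at position $i-1$ of the random feasible sequence. By Lemma~\ref{lem:marg} (the version with the $(1-\epsilon)$-approximation statement that feeds Theorem~\ref{thm:comb}, possibly combined with the preceding lemma chain), $\E[g_{S_{i-1}}(a_i)] \geq (1-\O(\epsilon)) \max_{a : S_{i-1} \cup a \in \M} g_{S_{i-1}}(a) \geq (1-\O(\epsilon)) g_{S_{i-1}}(o_i)$. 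Summing over $i$:
\begin{align*}
\E[g(S)] = \sum_i \E[g_{S_{i-1}}(a_i)] \geq (1-\O(\epsilon)) \sum_i g_{S_{i-1}}(o_i) \geq (1-\O(\epsilon)) \sum_i g_{S \setminus O_{i:k}}(o_i),
\end{align*}
where the last step uses submodularity of $g$ together with the containment $S_{i-1} \subseteq S \setminus O_{i:k}$ guaranteed by $S_{i-1} \cap O_{i:k} = \emptyset$ from Lemma~\ref{lem:matroidO}. Chaining with Lemma~\ref{lem:Oik} and $\lambda = \O(\epsilon)$ yields $\E[g(S)] \geq (1-\O(\epsilon))\lambda(\OPT - F(\bx))$, i.e. $\E[F_\bx(\lambda S)] \geq (1-\O(\epsilon))\lambda(\OPT - F(\bx))$ in this case. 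Combining the two cases finishes the argument.

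The main obstacles I anticipate are bookkeeping rather than conceptual. First, the conditioning issue: $g(S) \leq \lambda(\OPT - F(\bx))$ is a random event, so to apply Lemma~\ref{lem:Oik} cleanly inside an expectation one either argues the bound $\E[F_\bx(\lambda S)] \geq (1-\O(\epsilon))\lambda(\OPT-F(\bx))$ holds on each branch (easy on the branch where $g(S)$ is large, since there the bound is immediate) or, more slickly, runs the whole chain of inequalities pointwise and only takes expectations at the Lemma~\ref{lem:marg} step — I would check that Lemma~\ref{lem:Oik}'s hypothesis can be replaced by "either the conclusion of the lemma, or $g(S) > \lambda(\OPT-F(\bx))$" so that it always holds. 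Second, \algone{} may return $S$ of size $l < k = \rank(\M)$; I need the version of the per-element guarantee (as promised in Appendix~\ref{sec:appcomb}, and controlled by the choice $\Delta = \O(\epsilon^{-1}\log(k/(\epsilon\lambda)))$) ensuring the shortfall $\OPT - g(S_l)$ contributed by the missing $k - l$ positions is only an $\O(\epsilon)$-fraction of $\OPT$, so that padding $S$ to a base for the purposes of Lemmas~\ref{lem:matroidO} and~\ref{lem:Oik} costs at most $\O(\epsilon)\lambda(\OPT - F(\bx))$. Third, I must confirm the $\lambda$-dependence: the $(1-\lambda)$ factor from Lemma~\ref{lem:Oik} is absorbed into $1-\O(\epsilon)$ precisely because $\lambda = \O(\epsilon)$, and the extra $\log(1/\lambda)$ inside $\Delta$ is why the iteration count of \algone{} stays $\O(\epsilon^{-1}\log(k/\epsilon^3))$ after substituting $\lambda = \Theta(\epsilon)$ — consistent with the adaptivity claimed in the main theorem.
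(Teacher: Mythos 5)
Your proposal follows essentially the same approach as the paper's proof: the chain Lemma~\ref{lem:marg} $\to$ Lemma~\ref{lem:matroidO} $\to$ Lemma~\ref{lem:Oik}, padding $S$ to a base with dummy elements of value $t_f$ controlled by $\Delta$, and using $F(\bx) < (1-1/e)\OPT$ to absorb the padding loss into $\O(\epsilon)\lambda(\OPT - F(\bx))$. You are in fact slightly more careful than the paper in flagging that Lemma~\ref{lem:Oik}'s hypothesis $g(S)\le\lambda(\OPT-F(\bx))$ is a random event that the paper applies without comment; your proposed fix works, and an even cleaner route is to use the unconditional inequality extracted from the proof of Lemma~\ref{lem:Oik}, namely $\sum_i g_{S\setminus O_{i:k}}(o_i) \geq \lambda(\OPT - F(\bx)) - \lambda g(S)$, take expectations, and solve the resulting linear recursion for $\E[g(S)]$.
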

\begin{proof}
Initially, we have  $t_i < \OPT$. After $\Delta = \O\left(\frac{1}{\epsilon} \log\left(\frac{k}{\epsilon \lambda}\right)\right)$  iterations of the outer loop of \algone, we get $t_f = (1-\epsilon)^{\Delta} \OPT = \O\left(\frac{\epsilon \lambda\OPT}{k}\right)$. 
We begin by adding dummy elements to $S$ so that $|S| = k$, which enables pairwise comparisons between $S$ and $O$. In particular, we consider $S'$, which is $S$ together with $\rank(\M) - |S|$ dummy elements $a_{|S| + 1}, \ldots a_k$ such that, for any $\by$ and $\lambda$, $F_{\by}(\lambda a) = t_f$, which is the value of $t$ when \algone \ terminates. Thus, by Lemma~\ref{lem:invariants}, for dummy elements $a_i$, $g_{S_{i-1}}(a_i) = t_f \geq (1 - \epsilon) \max_{a : S_{i-1} \cup a \in \M} g_{S_{i-1}}(a)$. 

We will conclude the proof by showing that $S$ is a good approximation to $S'$. From Lemma~\ref{lem:marg} that the contribution of $a_i$ to $S_{i-1}$ approximates the optimal contribution to $S_{i-1}$: 
$$\E\left[F_{\bx}(\lambda S')\right]  = \sum_{i=1}^k \E\left[g_{S_{i-1}}(a_i)\right] \geq \sum_{i=1}^k (1 - \epsilon)^2 \max_{a : S_{i-1} \cup a \in \M} g_{S_{i-1}}(a_i). $$

By Lemma~\ref{lem:matroidO} and submodularity, we have $ \max_{a : S_{i-1} \cup a \in \M} g_{S_{i-1}}(a_i) \geq g_{S \setminus O_{i:k}}(o_i).$
By Lemma~\ref{lem:Oik}, we also have
$\sum_{i=1}^k g_{S \setminus O_{i:k}}(o_i) \geq \lambda (1 - \lambda) (\OPT - F(\bx))$. Combining the previous pieces, we obtain
 $$\E\left[F_{\bx}(\lambda S')\right] \geq (1 - \epsilon)^2\lambda (1 - \lambda) (\OPT - F(\bx)).$$
 We conclude by removing the value of dummy elements,
$$
\E\left[F_{\bx}(\lambda S)\right]   = \E\left[F_{\bx}(\lambda S') - F_{\bx + \lambda S}(\lambda (S'\setminus S)) \right]
 \geq   \E\left[F_{\bx}(\lambda S')\right] - k t_f  \geq  \E\left[F_{\bx}(\lambda S')\right] - \epsilon \lambda\OPT.$$
The lemma assumes that $F(\bx) < (1-1/e)\OPT$ and $\lambda = \O(\epsilon)$, so $\OPT \leq e(\OPT - F(\bx))$  and $ \epsilon \lambda\OPT =  \O(\epsilon)  \lambda(\OPT - F(\bx))$.  We conclude that
$\E\left[F_{\bx}(\lambda S)\right]  \geq \left(1 - \O(\epsilon)\right)\lambda(\OPT - F(\bx))$.
\end{proof}

The approximation guarantee of the \algcontinuous \ follows from lemmas~\ref{lem:maincontinuous} and~\ref{lem:metaoneExp}, and the adaptivity from Lemma~\ref{lem:adaptivity}.
We defer the proof to Appendix~\ref{sec:applayerone}.

\begin{restatable}{rThm}{thmexpectation}
\label{thm:expectation}
For any $\epsilon>0$ \algcontinuous \ makes $  \O\left(\log(n)  \log\left(\frac{k}{\epsilon^2}\right)   \frac{1}{\epsilon^2}\right)$ adaptive rounds and obtains a $1-1/e - \O(\epsilon)$ approximation in expectation for maximizing a monotone submodular function under a matroid constraint.  
\end{restatable}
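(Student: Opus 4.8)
The statement is essentially a corollary of the machinery already built: the per-step progress bound of Lemma~\ref{lem:maincontinuous}, the ``meta'' convergence analysis of Lemma~\ref{lem:metaoneExp}, and the single-call adaptivity of \algone\ from Lemma~\ref{lem:adaptivity}. Concretely, the plan is to run \algcontinuous\ with step size $\lambda = \O(\epsilon)$ and to run each internal invocation $\algone(g,\M)$ for $\Delta = \O\!\left(\tfrac{1}{\epsilon}\log\tfrac{k}{\epsilon\lambda}\right)$ outer iterations, which is exactly the parameter regime demanded by Lemma~\ref{lem:maincontinuous}. Note that at each iteration the surrogate $g(T)=F_{\bx}(\lambda T)$ is monotone submodular whenever $f$ is, so \algone\ is being applied to a legitimate instance of submodular maximization under $\M$.

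For the approximation guarantee, the single point needing care is that Lemma~\ref{lem:maincontinuous} only supplies its per-step bound at points $\bx$ with $F(\bx) < (1-1/e)\OPT$. I would resolve this by a dichotomy over the trajectory $\bx^{(0)},\dots,\bx^{(1/\lambda)}$. Since every update adds a vector $\lambda\ones_S$ with nonnegative coordinates, $j \mapsto F(\bx^{(j)})$ is non-decreasing, so if $F(\bx^{(j)}) \ge (1-1/e)\OPT$ at \emph{any} iteration then the returned point already satisfies $F(\bx) \ge (1-1/e)\OPT \ge (1-1/e-\O(\epsilon))\OPT$. Otherwise $F(\bx^{(j)}) < (1-1/e)\OPT$ at the start of every iteration, so Lemma~\ref{lem:maincontinuous} gives, conditionally on $\bx^{(j)}$,
\[
\E\!\left[F_{\bx^{(j)}}(\lambda S^{(j)}) \,\middle|\, \bx^{(j)}\right] \ge (1-\O(\epsilon))\,\lambda\,\bigl(\OPT - F(\bx^{(j)})\bigr),
\]
which is precisely the hypothesis of Lemma~\ref{lem:metaoneExp} with $\epsilon$ rescaled by a constant; that lemma then yields $\bx \in P(\M)$ with $\E[F(\bx)] \ge (1-1/e-\O(\epsilon))\OPT$. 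To make the dichotomy rigorous one conditions on the prefix events $\bigcap_{i\le j}\{F(\bx^{(i)}) < (1-1/e)\OPT\}$ (each measurable with respect to the first $j$ iterations), runs the telescoping recursion of Lemma~\ref{lem:metaoneExp} restricted to that event, and bounds the complementary event below by $(1-1/e)\OPT$ via the monotonicity observation above. I expect this measure-theoretic bookkeeping, together with verifying that the conclusion of Lemma~\ref{lem:metaoneExp} absorbs the extra $\O(\epsilon)$ slack, to be the only genuinely delicate step; the remainder is mechanical.

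Since \algcontinuous\ returns a fractional point $\bx \in P(\M)$, the last step is to round it to a feasible independent set using dependent rounding~\cite{chekuri2009dependent} or a contention resolution scheme~\cite{vondrak2011submodular}; both output $T\in\M$ with $\E[f(T)\mid\bx] \ge F(\bx) - \O(\epsilon)\OPT$, use no value queries, and hence contribute zero additional adaptive rounds, so $\E[f(T)] \ge (1-1/e-\O(\epsilon))\OPT$. Finally, the adaptivity follows by composition: \algcontinuous\ makes $1/\lambda = \O(1/\epsilon)$ iterations, each a single call of \algone\ run with $\Delta = \O\!\left(\tfrac1\epsilon\log\tfrac{k}{\epsilon\lambda}\right)$, which by (the parameterized form of) Lemma~\ref{lem:adaptivity} costs $\O\!\left(\log(n)\log\tfrac{k}{\epsilon\lambda}\cdot\tfrac1{\epsilon^2}\right)$ adaptive rounds; substituting $\lambda=\O(\epsilon)$ and multiplying by the $\O(1/\epsilon)$ iterations gives adaptivity polylogarithmic in $n$ and $k$ and $\poly(1/\epsilon)$, as stated, with no further contribution from the rounding step.
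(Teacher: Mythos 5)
Your argument follows the paper's proof of Theorem~\ref{thm:expectation} essentially step for step: the same parameter choices ($\lambda=\O(\epsilon)$, $\Delta=\O(\epsilon^{-1}\log(k/(\epsilon\lambda)))$), the same three ingredients (Lemma~\ref{lem:maincontinuous} for the per-step progress, Lemma~\ref{lem:metaoneExp} for the telescoping recursion, Lemma~\ref{lem:adaptivity} for the per-call adaptivity), and the same observation that rounding via~\cite{chekuri2009dependent,vondrak2011submodular} is query-free. The one place you go beyond the paper is the dichotomy over the trajectory to handle the hypothesis $F(\bx)<(1-1/e)\OPT$ in Lemma~\ref{lem:maincontinuous}: the paper invokes that lemma at every iteration without addressing this precondition, and your observation that monotonicity of $F$ along the non-decreasing trajectory makes the complementary event already terminal is exactly the right fix to a gap the paper elides. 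One cautionary note on the adaptivity bookkeeping: your own count (each \algone\ call at $\O(\log(n)\log(k/(\epsilon\lambda))/\epsilon^2)$ rounds, times $1/\lambda=\O(1/\epsilon)$ calls) yields $\O(\log(n)\log(k/\epsilon^2)/\epsilon^3)$, which exceeds the $1/\epsilon^2$ asserted in the theorem statement by a factor of $1/\epsilon$; this discrepancy is present in the paper's own proof (and the abstract's version of the main theorem does state $1/\epsilon^3$), so your hedge of writing $\poly(1/\epsilon)$ is reasonable, but you should not quietly claim the $1/\epsilon^2$ from the statement without explaining where the extra $1/\epsilon$ is saved.
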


The final step in our analysis shows that the guarantee of \algcontinuous \ holds not only in expectation but also with high probability. 
To do so we argue in the lemma below that if over all iterations $i$,   $F_{\bx}(\lambda S)$  is close \emph{on average over the rounds} to  $\lambda(\OPT - F(\bx))$, we obtain an approximation arbitrarily close to $1 - 1/e$ with high probability.  The proof is in Appendix~\ref{sec:applayerone}.
 
\begin{restatable}{rLem}{lemmetaone}
\label{lem:metaone}
Assume that $\algone$  outputs $S \in \M$ s.t.
$F_{\bx}(\lambda S)  \geq \alpha_i   \lambda   (\OPT - F(\bx))$ at every iteration $i$ of \algcontinuous \ and that $\lambda \sum_{i=1}^{\lambda^{-1}} \alpha_i \geq 1 -\epsilon$. 
Then \algcontinuous  outputs $\bx \in P(\M)$ s.t.
$F(\bx) \geq \left(1 -1/e - \epsilon\right) \OPT$.
\end{restatable}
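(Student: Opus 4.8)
The plan is to replay the classical continuous greedy recursion, but carrying the per-iteration coefficients $\alpha_i$ through a telescoping product instead of assuming a uniform gain. Write $\bx^{(0)} = \zeros$, let $\bx^{(j)}$ denote the solution after $j$ iterations of \algcontinuous, and let $S^{(j)} = \algone(g,\M)$ be the set chosen in iteration $j$, so that $\bx^{(j)} = \bx^{(j-1)} + \lambda\ones_{S^{(j)}}$ and the returned vector is $\bx = \bx^{(1/\lambda)}$. First I would record feasibility: $\bx = \frac{1}{1/\lambda}\sum_{j=1}^{1/\lambda}\ones_{S^{(j)}}$ is an average of indicator vectors of independent sets, hence lies in their convex hull, which is exactly $P(\M)$.

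The core is the one-step inequality. Since $F(\bx^{(j)}) - F(\bx^{(j-1)}) = F_{\bx^{(j-1)}}(\lambda S^{(j)})$ and, by hypothesis, $F_{\bx^{(j-1)}}(\lambda S^{(j)}) \geq \alpha_j\lambda\,(\OPT - F(\bx^{(j-1)}))$, rearranging gives $\OPT - F(\bx^{(j)}) \leq (1-\alpha_j\lambda)\,(\OPT - F(\bx^{(j-1)}))$. I would note here that we may assume each factor $1-\alpha_j\lambda$ is nonnegative: the trajectory $\bx^{(0)} \leq \bx^{(1)} \leq \cdots$ is coordinate-wise increasing so $F$ is non-decreasing along it, and $F(\by)\leq\OPT$ for every $\by\in P(\M)$ (e.g.\ by pipage rounding), so if some $\alpha_j\lambda>1$ then already $F(\bx^{(j-1)})\geq\OPT$ and hence $F(\bx)\geq\OPT\geq(1-1/e-\epsilon)\OPT$ and we are done. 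In the remaining case, iterating the recursion from $j=1$ to $1/\lambda$ and using $\OPT - F(\bx^{(0)}) = \OPT$ yields
\[
\OPT - F(\bx) \;\leq\; \OPT\cdot\prod_{j=1}^{1/\lambda}\bigl(1-\alpha_j\lambda\bigr).
\]

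To finish I would bound the product with $1-x\leq e^{-x}$, so $\prod_{j=1}^{1/\lambda}(1-\alpha_j\lambda) \leq \exp\bigl(-\lambda\sum_{j=1}^{1/\lambda}\alpha_j\bigr) \leq e^{-(1-\epsilon)}$ by the averaging hypothesis $\lambda\sum_j\alpha_j \geq 1-\epsilon$; and $e^{-(1-\epsilon)} = e^{-1}e^{\epsilon}\leq e^{-1}+\epsilon$ for $\epsilon\in(0,1)$, since $e^{-1}(e^{\epsilon}-1)\leq e^{-1}\epsilon e^{\epsilon}\leq\epsilon$. Hence $\OPT - F(\bx)\leq(e^{-1}+\epsilon)\OPT$, i.e.\ $F(\bx)\geq(1-1/e-\epsilon)\OPT$, as claimed. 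There is no genuine obstacle beyond bookkeeping; the only point needing a moment of care is the validity of the telescoped product when some $\alpha_j$ is large, handled by the monotonicity/roundability remark above. The essential difference from Lemma~\ref{lem:metaoneExp} is simply that the entire chain of inequalities is deterministic given the realized $\alpha_i$, which is precisely what turns an ``on average over the rounds'' guarantee into a with-high-probability statement about $F(\bx)$ once the $\alpha_i$ concentrate.
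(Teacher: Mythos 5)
Your proof is correct and follows essentially the same route as the paper's: the telescoped one-step recursion $\OPT - F(\bx^{(j)}) \le (1-\alpha_j\lambda)(\OPT - F(\bx^{(j-1)}))$ is just the rearranged form of the paper's induction $F(\bx_i) \ge \bigl(1 - \prod_{j\le i}(1-\lambda\alpha_j)\bigr)\OPT$, and both close with $1-x\le e^{-x}$ followed by an elementary bound on $e^{-(1-\epsilon)}$ (the paper uses $e^{x}\le 1+2x$, you use $e^{-1}(e^{\epsilon}-1)\le\epsilon$; either works). One place where you are actually more careful than the paper: multiplying the inductive hypothesis by $1-\alpha_j\lambda$ is only valid when that factor is nonnegative, a point the paper's induction silently assumes; your aside dispatching the case $\alpha_j\lambda>1$ via monotonicity of $F$ along the trajectory and $F(\by)\le\OPT$ for roundable $\by\in P(\M)$ closes that small gap cleanly.
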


The  approximation $\alpha_i$ obtained at iteration $i$  is $1 - \O(\epsilon)$ in expectation by Lemma~\ref{lem:maincontinuous}. Thus, by a simple concentration bound, w.h.p. it is close to $1 - \O(\epsilon)$ in average over all iterations. Together with Lemma~\ref{lem:metaone}, this implies the $1 -1/e - \epsilon$ approximation w.h.p.. The details are in Appendix~\ref{sec:applayerone}.

\begin{restatable}{rThm}{thmhp}
\label{thm:hp}
\algcontinuous \ is an $\O\left(\log(n)  \log\left(\frac{k}{\epsilon \lambda}\right)   \frac{1}{\epsilon \lambda}\right)$ adaptive algorithm that, with probability $1 - \delta$, obtains a $1-1/e - \O(\epsilon)$ approximation for maximizing a monotone submodular function under a matroid constaint, with step size $\lambda = \O\left(\epsilon^2 \log^{-1}\left(\frac{1}{\delta}\right)\right)$.  
\end{restatable}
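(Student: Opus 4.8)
The plan is to reduce to Lemma~\ref{lem:metaone}: it suffices to exhibit, at each of the $1/\lambda$ iterations of \algcontinuous, a per-iteration factor $\alpha_i$ with $F_{\bx}(\lambda S) \ge \alpha_i\,\lambda(\OPT-F(\bx))$ and $\lambda\sum_{i=1}^{1/\lambda}\alpha_i \ge 1-\O(\epsilon)$, where the latter is now only required with probability $1-\delta$. Fix notation: let $\bx_i$ be the solution at the start of iteration $i$, $S_i$ the set returned by $\algone(g,\M)$ at that iteration (run with $\Delta=\O(\epsilon^{-1}\log(k/(\epsilon\lambda)))$ inner rounds, as in Lemma~\ref{lem:maincontinuous}), and $\F_{i-1}$ the $\sigma$-field of all random choices made before iteration $i$, so $\bx_i$ is $\F_{i-1}$-measurable. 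If $F(\bx_i)\ge(1-1/e)\OPT$ at some iteration, then since $F$ is nondecreasing along the trajectory the final solution already has value $\ge(1-1/e)\OPT$ and we are done; so assume $F(\bx_i)<(1-1/e)\OPT$ for all $i$, hence $\OPT-F(\bx_i)>\OPT/e>0$. Define
\[
\alpha_i := \min\!\left(1,\ \frac{F_{\bx_i}(\lambda S_i)}{\lambda(\OPT-F(\bx_i))}\right)\in[0,1].
\]
This trivially satisfies $F_{\bx_i}(\lambda S_i)\ge\alpha_i\,\lambda(\OPT-F(\bx_i))$, so by Lemma~\ref{lem:metaone} it is enough to prove $\lambda\sum_{i=1}^{1/\lambda}\alpha_i\ge 1-\O(\epsilon)$ with probability $\ge1-\delta$.

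The next step is a conditional-expectation bound $\E[\alpha_i\mid\F_{i-1}]\ge 1-\O(\epsilon)$. Lemma~\ref{lem:maincontinuous} gives $\E[F_{\bx_i}(\lambda S_i)\mid\F_{i-1}]\ge(1-\O(\epsilon))\lambda(\OPT-F(\bx_i))$; what must be checked is that truncating the ratio at $1$ loses only a further $\O(\epsilon)$ in expectation. For this I would revisit the proof of Lemma~\ref{lem:maincontinuous}: there $F_{\bx_i}(\lambda S_i)=\sum_j g_{S_{j-1}}(a_j)$, and by Lemma~\ref{lem:marg} together with the definition of $i^\star$ each added element attains the current threshold $t$ with conditional probability $>1-\epsilon$; since the thresholds decay geometrically and $k t_f=\O(\epsilon\lambda\OPT)$, the event that $F_{\bx_i}(\lambda S_i)$ falls substantially below $(1-\O(\epsilon))\lambda(\OPT-F(\bx_i))$ has conditional probability $\O(\epsilon)$, whence $\E[\min(1,\cdot)\mid\F_{i-1}]\ge 1-\O(\epsilon)$ too. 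This truncation step is where I expect the main obstacle to lie: without it the raw ratio can be as large as $\Theta(1/\lambda)$, which makes the martingale increments of size $\Theta(1/\lambda)$ and the resulting tail bound useless; capping at $1$ fixes the increments but forces one to control the lower tail of $F_{\bx_i}(\lambda S_i)$ — and this is exactly the extra work relative to the in-expectation statement of Theorem~\ref{thm:expectation}.

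Given that bound, the concentration is routine. The sequence $M_\ell:=\lambda\sum_{i\le\ell}\big(\alpha_i-\E[\alpha_i\mid\F_{i-1}]\big)$ is a martingale whose increments are bounded in absolute value by $\lambda$ (as $\alpha_i\in[0,1]$), so $\sum_{i\le 1/\lambda}$ of the squared bounds equals $(1/\lambda)\lambda^2=\lambda$, and Azuma--Hoeffding gives
\[
\Pr\!\left[\lambda\sum_{i=1}^{1/\lambda}\alpha_i\ \le\ \lambda\sum_{i=1}^{1/\lambda}\E[\alpha_i\mid\F_{i-1}]-\epsilon\right]\ \le\ \exp\!\left(-\tfrac{\epsilon^2}{2\lambda}\right).
\]
Since $\lambda\sum_i\E[\alpha_i\mid\F_{i-1}]\ge(1/\lambda)\cdot\lambda\cdot(1-\O(\epsilon))=1-\O(\epsilon)$, taking $\lambda=\O(\epsilon^2\log^{-1}(1/\delta))$ makes the right-hand side at most $\delta$, so with probability $\ge1-\delta$ we get $\lambda\sum_i\alpha_i\ge1-\O(\epsilon)$. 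Lemma~\ref{lem:metaone} then yields a point $\bx\in P(\M)$ with $F(\bx)\ge(1-1/e-\O(\epsilon))\OPT$, which is rounded to a feasible independent set with no further loss in approximation and no further adaptive rounds. Finally, the adaptivity follows by combining Lemma~\ref{lem:adaptivity} (one call to \algone\ with this $\Delta$) with the bound $1/\lambda$ on the number of outer iterations of \algcontinuous, giving $\O\!\big(\tfrac{1}{\epsilon\lambda}\log(n)\log(k/(\epsilon\lambda))\big)$ adaptive rounds.
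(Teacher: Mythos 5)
Your approach mirrors the paper's: define a per-iteration factor $\alpha_i$, show concentration of $\lambda\sum_i\alpha_i$ across the $1/\lambda$ outer iterations, and conclude via Lemma~\ref{lem:metaone} (plus a rounding step). But you are noticeably more careful than the paper's own write-up on two points, and in doing so you surface a real loose end. First, you explicitly cap $\alpha_i$ at $1$ so that the martingale increments are bounded by $\lambda$; this matters, because the raw ratio $F_{\bx}(\lambda S)/(\lambda(\OPT-F(\bx)))$ is only bounded by $1/\lambda$ (from $F_{\bx}(\lambda S)\le\OPT-F(\bx)$, using $\bx+\lambda\ones_S\in P(\M)$), and the paper's stated tail $e^{-\epsilon^2(1-\epsilon')\lambda^{-1}/2}$ is exactly the Chernoff exponent for $\lambda^{-1}$ variables in $[0,1]$, which implicitly presumes $\alpha_i\in[0,1]$ without saying so. Second, you correctly replace ``Chernoff'' by Azuma--Hoeffding: the $\bx_i$ and hence the $\alpha_i$ are sequentially dependent, so a martingale bound is what is actually being used.

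The step you flag as the obstacle --- proving $\E[\min(1,\rho_i)\mid\F_{i-1}]\ge1-\O(\epsilon)$ when Lemma~\ref{lem:maincontinuous} only bounds $\E[\rho_i\mid\F_{i-1}]$ for the unbounded $\rho_i$ --- is a genuine gap, and it is present (unacknowledged) in the paper's proof as well. Your sketch for closing it is in the right direction but does not go through as stated: you assert that the lower-tail event for $F_{\bx}(\lambda S)$ has conditional probability $\O(\epsilon)$ from the per-element guarantees of Lemma~\ref{lem:marg}, but a single call to \algone\ does not obviously have this concentration --- the added elements sit at $\Delta$ geometrically decaying thresholds, the number of elements at each threshold is random and can be small (indeed $\rank(\M)$ may be $O(1)$), and the per-element success indicators are only adapted, not i.i.d. Note the paper introduces \algoneplus\ (parallel repetition of the inner loop, best of $\rho$) precisely to convert the in-expectation guarantee of \algone\ into a high-probability one for Theorem~\ref{thm:comb}; the clean fix here is to run \algoneplus\ (or its $g$-analogue) inside \algcontinuous, which yields a bona fide lower tail on each $F_{\bx_i}(\lambda S_i)$ and makes your capped-$\alpha_i$ Azuma argument go through, at the cost of an extra $\poly(\epsilon^{-1},\log(1/\delta),\log n)$ in query complexity but no change in adaptivity.
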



\section{Parallelization of Matroid Oracle Queries}
\label{sec:matroid}
Throughout the paper we relied on \blackbox \ as a simple procedure to generate a random feasible sequence to achieve our $\O(\log(n) \log(k))$ adaptive algorithm with an  approximation arbitrarily close to $1-1/e$.  Although \blackbox \ has zero adaptivity, it makes $\rank(\M)$ sequential steps depending on   membership in the matroid to generate the sets $X_1,\ldots,X_{\rank(M)}$.  From a practical perspective, we may wish to accelerate this process via parallelization.  In this section we show how to do so in the standard \emph{rank} and \emph{independence} oracle models for matroids.

\subsection{Matroid rank oracles} 
Given a rank oracle for the matroid, we get an algorithm that only makes $\O\left (\log(n)\log(k) \right )$ steps of matroid oracle queries and has polylogarithmic depth on a PRAM machine.  Recall that a rank oracle for $\M$ is given a set $S$ and returns its rank, i.e. the maximum size of an independent subset $T \subseteq S$.  The number of steps of matroid queries of an algorithm is the number of sequential steps it makes when polynomially-many queries to a matroid oracle for $\M$ can be executed in parallel in each step~\cite{karp1988complexity}.\footnote{More precisely, it allows $p$ queries per step and the results depend on $p$, we consider the case of $p = \poly(n)$.}  We use a parallel algorithm from~\cite{karp1988complexity} designed for constructing a base of a matroid with a rank oracle, and show  that it satisfies the random feasibility property.

\begin{algorithm}[H]
\caption{Parallel \blackbox \  for matroid constraint with rank oracle}
\begin{algorithmic}
 	\INPUT matroid $\M$, ground set $N$	
 	\STATE $b_1, \ldots, b_{|N|} \leftarrow $ random permutation of $N$
 	\STATE  $r_i \leftarrow \rank(\{b_1, \ldots,  b_{i}\})$, for all $i \in \{1, \ldots, n\}$
 	\STATE $a_i \leftarrow i$th $b_j$ s.t. $r_j - r_{j-1} = 1$
	\RETURN $a_1, \ldots, a_\ell$
  \end{algorithmic}
  \label{alg:rank}
\end{algorithm}

With Algorithm~\ref{alg:rank} as the \blackbox \ subroutine for \algone, we obtain the following result for matroid rank oracles (proof in Appendix~\ref{sec:appmatroid}).
\begin{restatable}{rThm}{thmrank}
For any $\epsilon>0$, there is an algorithm that obtains, with probability $1 - o(1)$,  a $1/2 - \O(\epsilon)$ approximation  with $\O\left(\log(n)\log\left(\frac{k}{\epsilon}\right)\frac{1}{\epsilon^2}\right)$ adaptivity and  steps of matroid rank queries.
\end{restatable}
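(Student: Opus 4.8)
The plan is to instantiate \algone \ with Algorithm~\ref{alg:rank} playing the role of \blackbox, and then invoke the analysis of Section~\ref{sec:comb} as a black box. Concretely I would establish three things: (i) Algorithm~\ref{alg:rank}, run on any matroid, outputs a random feasible sequence in the sense of the Definition; (ii) each invocation of Algorithm~\ref{alg:rank} inside \algone \ costs a single round of parallel rank queries, so the total number of rounds of rank queries matches the adaptivity bound of Lemma~\ref{lem:adaptivity}; and (iii) given (i), the approximation and success-probability guarantees of Theorem~\ref{thm:comb} carry over verbatim, since that theorem only uses the random feasibility property of \blackbox.

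For (i) I would argue by induction on $\rank(\M)$ via matroid contraction. Running Algorithm~\ref{alg:rank} amounts to scanning a uniformly random permutation $b_1,\dots,b_m$ of the ground set and keeping $b_j$ exactly when $b_j \notin \Span(\{b_1,\dots,b_{j-1}\})$; the kept elements $a_1,\dots,a_\ell$ form a base. By symmetry of the permutation, $a_1$ is the first non-loop, hence uniform over $\{a:\{a\}\in\M\}$. Conditioned on $a_1=v$, the greedy process on the suffix of the permutation is exactly Algorithm~\ref{alg:rank} applied to the contraction $\M/v$: an element is kept iff it is not spanned by $v$ together with the elements already kept after $v$, which is precisely the independence test in $\M/v$; and conditioned on $\{a_1=v\}$ the relative order of the non-loops of $\M/v$ (all of which lie after $v$) is still uniform, since that event only forces $v$ before them and is symmetric in them. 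By the inductive hypothesis $(a_2,\dots,a_\ell)$ is a random feasible sequence of $\M/v$, and since $T\cup a\in\M/v \iff \{v\}\cup T\cup a\in\M$, prepending $v$ yields a random feasible sequence of $\M$; I would phrase the extension-candidate set at step $i$ as $\{a\notin\{a_1,\dots,a_{i-1}\}:\{a_1,\dots,a_{i-1},a\}\in\M\}$ to keep this precise. This is, in essence, the correctness of the Karp--Upfal--Wigderson parallel base-finding procedure.

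For (ii), note that Algorithm~\ref{alg:rank} makes all of its queries $r_i=\rank(\{b_1,\dots,b_i\})$ in a single parallel round and uses no oracle calls otherwise; to run it on $\M(S,X)$ we simulate its rank oracle by $\rank_{\M(S,X)}(T)=\rank_\M(S\cup T)-|S|$, costing one $\rank_\M$-query per simulated query. Inside \algone \ this procedure is called once per iteration of the while loop, and by the proof of Lemma~\ref{lem:adaptivity} there are at most $\Delta\cdot\O(\epsilon^{-1}\log n)=\O(\epsilon^{-2}\log(n)\log(k/\epsilon))$ such iterations; since the rank-query rounds simply alternate with the value-query rounds, both the adaptivity and the number of rounds of rank queries are $\O(\epsilon^{-2}\log(n)\log(k/\epsilon))$. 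Step (iii) is then immediate: swapping \blackbox \ for a procedure with the same distributional guarantee changes nothing in the proofs of Lemmas~\ref{lem:invariants} and~\ref{lem:marg} or in Theorem~\ref{thm:comb}, including the amplification to probability $1-o(1)$ by running parallel copies of the while loop and the padding argument handling $|S|<\rank(\M)$.

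I expect the only genuine work to be in part (i): one must be careful about loops and parallel elements when passing to $\M/v$, and about exactly which conditioning leaves the residual permutation exchangeable. Parts (ii) and (iii) are bookkeeping, modulo correctly importing the iteration count from the proof of Lemma~\ref{lem:adaptivity}.
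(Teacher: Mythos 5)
Your overall structure matches the paper's: show Algorithm~\ref{alg:rank} produces a random feasible sequence, then plug it into \algoneplus{} and invoke Theorem~\ref{thm:comb}, counting one parallel round of rank queries per while-loop iteration. The one place you diverge is the proof of random feasibility (the paper's Lemma~\ref{lem:feasibility2}). The paper argues directly: it fixes the index $j$ with $r_j = r_{j-1}+1$ at which $a_i$ is taken, identifies the set $\{a : \rank(\{b_1,\dots,b_{j-1},a\}) = r_{j-1}+1\}$ with the set of feasible extensions of $\{a_1,\dots,a_{i-1}\}$ (using the matroid augmentation property in one direction and rank monotonicity in the other), and appeals to the exchangeability of the random permutation. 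You instead induct on the rank via contraction: $a_1$ is uniform over non-loops, and conditioned on $a_1=v$ the residual process is Algorithm~\ref{alg:rank} run on $\M/v$ with a fresh uniform relative order on the non-loops of $\M/v$. Both are correct. The paper's argument is shorter once the set-identity is granted, but it glosses over the conditioning subtlety (the position $j$ is itself random, determined by the whole prefix); your contraction induction makes the conditioning explicit at the cost of one extra layer of bookkeeping about how loops behave under contraction, which you correctly flag. Your observation that $\rank_{\M(S,X)}(T) = \rank_\M(S \cup T) - |S|$ gives the simulated oracle in a single round is implicit in the paper but worth spelling out as you do. The round count and the hand-off to Theorem~\ref{thm:comb} are as in the paper.
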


This gives  $\O(\log(n) \log(k))$ adaptivity and steps of independence queries with $1-1/e-\epsilon$ approximation for maximizing the multilinear relaxation and $1/2-\epsilon$ approximation for maximizing a monotone submodular function under a matroid constraint.  In particular, we get polylogarithmic depth on a PRAM machine with a rank oracle.



\subsection{Matroid independence oracles}

Recall that an independence oracle for $\M$ is an oracle which given $S\subseteq N$ answers whether $S \in \M$ or $S \not \in M$.  We give a subroutine that requires $\tilde{\O}(n^{1/2})$ steps of independence matroid oracle queries and show that $\tilde{\Omega}(n^{1/3})$ steps are necessary.  Similar to the case of rank oracles we use a parallel algorithm from~\cite{karp1988complexity} for constructing a base of a matroid that can be used as the \blackbox \ subroutine while satisfying the random feasibility condition.   


\paragraph{$\tilde{O}(\sqrt{n})$ upper bound.}  We use the algorithm from \cite{karp1988complexity} for constructing a base of a matroid. 

\begin{algorithm}[H]
\caption{Parallel \blackbox \  for matroid constraint with independence oracle}
\begin{algorithmic}
 	\INPUT matroid $\M$, ground set $N$	
 	\STATE $c \leftarrow 0, X \leftarrow N$
	\STATE \textbf{while} $|N| > 0$ \textbf{do}
	\STATE \ \ \ $b_1, \ldots, b_{|X|} \leftarrow $ random permutation of $X$
	\STATE \ \ \ $i^{\star} \leftarrow \max\{i : \{a_1, \ldots, a_c\} \cup \{b_1, \ldots, b_i\} \in \M\}$
	\STATE \ \ \ $a_{c + 1}, \ldots, a_{c + i^{\star}} \leftarrow b_1, \ldots, b_{i^{\star}}$	
	\STATE \ \ \ $c \leftarrow c + i^{\star}$
	\STATE \ \ \ $X \leftarrow \{a \in X :  \{a_1, \ldots, a_c, a\} \in \M\}$	
	\RETURN $a_1, \ldots, a_c$
  \end{algorithmic}
  \label{alg:3}
\end{algorithm}

With Algorithm~\ref{alg:3} as the \blackbox \ subroutine for \algone, we obtain the following result with independence oracles.  We defer the proof to Appendix~\ref{sec:appmatroid}.

\begin{restatable}{rThm}{thmindependence}
\label{thm:independence}
There is an algorithm that obtains, w.p. $1 - o(1)$,  a $1/2 - \O(\epsilon)$ approximation  with $\O\left(\log(n)\log\left(\frac{k}{\epsilon}\right)\frac{1}{\epsilon^2}\right)$ adaptivity and $O\left(\sqrt{n}\log(n)\log\left(\frac{k}{\epsilon}\right)\frac{1}{\epsilon^2}\right)$ steps of independence queries.
\end{restatable}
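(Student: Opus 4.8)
The plan is to run \algone{} with Algorithm~\ref{alg:3} substituted for \blackbox{}, and then establish three things: (i) the sequence produced by Algorithm~\ref{alg:3} is still a random feasible sequence, so the entire approximation analysis of Section~\ref{sec:comb} (in particular Lemma~\ref{lem:marg} and Theorem~\ref{thm:comb}) applies unchanged and yields a $1/2-\O(\epsilon)$ approximation with probability $1-o(1)$; (ii) the number of adaptive rounds of \emph{function} evaluations is unaffected, since Algorithm~\ref{alg:3} queries only the matroid oracle; and (iii) each invocation of Algorithm~\ref{alg:3} uses $\O(\sqrt n)$ steps of independence queries, so that multiplying by the number of invocations made inside \algone{} gives the claimed $\O(\sqrt n\,\log(n)\log(k/\epsilon)/\epsilon^2)$ bound. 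The only genuinely delicate point is (i): unlike \blackbox{}, Algorithm~\ref{alg:3} adds an entire greedy prefix of a random permutation per round rather than one element at a time, and one must check this does not bias the sequence.

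For (i), I would prove by induction on the position $i$ that, conditioned on the already-output prefix $(a_1,\dots,a_{i-1})$, the next element $a_i$ is uniform over $X^{(i-1)}:=\{a:\{a_1,\dots,a_{i-1},a\}\in\M\}$ (interpreting $X$ as maintained to exclude already-chosen elements, as intended). Fix the prefix. Conditioned on it, with some probability the current while-loop iteration continues past the position where $a_{i-1}=b_j$ was placed, in which case $a_i=b_{j+1}$; otherwise the iteration ended at $a_{i-1}$ and $a_i$ is the first element of a freshly drawn random permutation of $X^{(i-1)}$. In the second case $a_i$ is immediately uniform over $X^{(i-1)}$. In the first case, let $X^{\mathrm{start}}$ be the set permuted at the start of that iteration; since the prefix determines $b_1,\dots,b_j$ and $X^{\mathrm{start}}$, the element $b_{j+1}$ is uniform over $X^{\mathrm{start}}\setminus\{b_1,\dots,b_j\}$, and by downward-closedness $X^{(i-1)}\subseteq X^{\mathrm{start}}\setminus\{b_1,\dots,b_j\}$. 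Moreover, because $\{\,i:\{a_1,\dots,a_c\}\cup\{b_1,\dots,b_i\}\in\M\,\}$ is a downward-closed set of integers (which is exactly why $i^\star$ is a prefix length), the event ``the iteration continues past $b_j$'' is precisely the event $\{b_{j+1}\in X^{(i-1)}\}$; conditioning a uniform element of $X^{\mathrm{start}}\setminus\{b_1,\dots,b_j\}$ on lying in $X^{(i-1)}$ yields a uniform element of $X^{(i-1)}$. Averaging the two cases gives the claim.

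Granting (i), Theorem~\ref{thm:comb} and the parallel-repetition argument of Appendix~\ref{sec:appcomb} give a $1/2-\O(\epsilon)$ approximation with probability $1-o(1)$, and since Algorithm~\ref{alg:3} makes no evaluations of $f$ the adaptivity is exactly that of Lemma~\ref{lem:adaptivity}, namely $\O(\log(n)\log(k/\epsilon)/\epsilon^2)$. For (iii), each iteration of the while loop of Algorithm~\ref{alg:3} costs $\O(1)$ steps of independence queries: $i^\star$ is obtained by testing $\{a_1,\dots,a_c\}\cup\{b_1,\dots,b_i\}\in\M$ for all $i$ in one parallel step and reading off the threshold, and the update of $X$ tests $\{a_1,\dots,a_c,a\}\in\M$ for all $a$ in one parallel step. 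The number of while-loop iterations is exactly the cost of the random-permutation-and-greedy basis construction of Karp, Upfal and Wigderson, which by their analysis~\cite{karp1988complexity} is $\O(\sqrt n)$ with probability $1-o(1)$; so each call to Algorithm~\ref{alg:3} costs $\O(\sqrt n)$ query steps w.h.p. Since \algone{} makes $\O(\epsilon^{-1}\log n)$ such calls per outer iteration over $\Delta=\O(\epsilon^{-1}\log(k/\epsilon))$ outer iterations — and the parallel repetitions run concurrently, so they cost no extra steps — a union bound gives $\O(\sqrt n\,\log(n)\log(k/\epsilon)/\epsilon^2)$ independence-query steps with probability $1-o(1)$, proving the theorem.

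The main obstacle is the random-feasibility verification in (i): pinning down the law of the next output element as a mixture over ``iteration continues'' versus ``iteration ends'', and in particular identifying the ``iteration continues'' event with $\{b_{j+1}\in X^{(i-1)}\}$ so that the conditioning collapses back to a uniform distribution on $X^{(i-1)}$. Everything else — importing the $\O(\sqrt n)$-round guarantee of~\cite{karp1988complexity} (stated there for basis construction but applicable here verbatim) and the counting in (ii) and (iii) — is routine once this is in place.
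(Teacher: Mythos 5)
Your proposal takes the same route as the paper's proof: establish that Algorithm~\ref{alg:3} satisfies the random feasibility condition (the paper's Lemma~\ref{lem:feasibility1}), invoke the $O(\sqrt{n})$ round bound from \cite{karp1988complexity} (the paper's Lemma~\ref{lem:roundsind}), and combine them with Theorem~\ref{thm:comb}; your write-up of the random-feasibility check is somewhat more explicit than the paper's about the two-case conditioning (``iteration continues'' versus ``fresh permutation'') and the subsequent averaging, but the underlying argument---that conditioning a uniform element of $X^{\mathrm{start}}\setminus\{b_1,\dots,b_j\}$ on acceptance collapses to the uniform distribution over the correct candidate set via downward closure---is identical.
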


This gives  $\O(\log(n) \log(k))$ adaptivity and $\sqrt{n}\log(n)\log(k)$ steps of independence queries with $1-1/e-\epsilon$ approximation for maximizing the multilinear relaxation and $1/2-\epsilon$ approximation for maximizing a monotone submodular function under a matroid constraint.  In particular, even with independence oracles we get a sublinear algorithm in the PRAM model.  

\paragraph{$\tilde{\Omega}(n^{1/3})$ lower bound.} We show that there is no algorithm which obtains a constant approximation with less than $\tilde{\Omega}(n^{1/3})$ steps of independence queries, even for  a cardinality function $f(S) = |S|$. We do so by using the same construction for a hard matroid instance as in \cite{karp1988complexity} used to show an $\tilde{\Omega}(n^{1/3})$ lower bound on the number of steps of independence queries for constructing a base of a matroid.  Although the matroid instance is the same, we use a different approach since the proof technique of \cite{karp1988complexity} does not hold in our case (see proof and discussion in Appendix~\ref{sec:appmatroid}).

\begin{restatable}{rThm}{thmlowerbound}
\label{thm:lowerbound}
For any constant $\alpha$,  there is no algorithm with $\frac{n^{1/3}}{4 \alpha \log^2 n} - 1$ steps of $\poly(n)$ matroid queries which, w.p. strictly greater than $n^{-\Omega(\log n)}$,  obtains an $\alpha$ approximation for maximizing a cardinality function under a partition matroid constraint when given an independence oracle.
\end{restatable}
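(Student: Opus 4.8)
The plan is to prove the lower bound via Yao's minimax principle, so it suffices to exhibit a distribution $\D$ over partition‑matroid instances on a ground set of size $n$ and to show that every \emph{deterministic} algorithm using at most $r := \frac{n^{1/3}}{4\alpha\log^2 n}-1$ rounds of $\poly(n)$ independence queries returns an $\alpha$‑approximation with probability at most $n^{-\Omega(\log n)}$ over a draw from $\D$; the statement for randomized algorithms then follows immediately. For $\D$ I would use the hard matroid family of~\cite{karp1988complexity} (recalled in Appendix~\ref{sec:appmatroid}). For our purposes its salient features are: each instance is a partition matroid $\M_\sigma$ determined by a random hidden ``chain'' of depth $D=\tilde\Theta(n^{1/3})$ of nested sets whose successive differences are the blocks of the partition; for every $\sigma$ in the support, $\OPT(\M_\sigma)=D$; and to an independence oracle the instance looks like a ``generic'' one unless a query happens to be consistent with the planted chain down to a deep level — where, crucially, a query can be consistent with the chain at level $i+1$ only if it is already consistent at levels $1,\dots,i$, which is exactly what forces adaptivity.

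First I would prove a per‑round progress lemma. Say a transcript of query/answer pairs \emph{reveals depth} $L$ if every answer in it is consistent with the generic instance except possibly for information about levels $1,\dots,L$ of the chain. Conditioning on the transcript of the first $i-1$ rounds revealing depth $L_{i-1}$, the portion of the chain below level $L_{i-1}$ is still conditionally (close to) uniform, so a single round‑$i$ independence query is consistent with the chain down to level $L_{i-1}+j$ with probability at most $q^{\Omega(j)}$ for a constant $q<1$; a union bound over the $\poly(n)$ queries of the round gives that, except with probability $n^{-\omega(1)}$, no round‑$i$ query reveals depth more than $L_{i-1}+O(\log n)$, i.e. $L_i\le L_{i-1}+O(\log n)$. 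Iterating over all $r$ rounds and taking a union bound, with probability $1-r\cdot n^{-\omega(1)}\ge 1-n^{-\Omega(\log n)}$ the entire transcript reveals depth at most $L_r=O(r\log n)$.

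Next I would turn ``small revealed depth'' into ``small output''. Conditioned on a transcript revealing depth $L$, the blocks of the partition matroid below level $L$ form a conditionally (near‑)uniform random partition of the part of the ground set not yet pinned down, so a set $S$ output by the deterministic algorithm is feasible in the realized matroid only if it meets each of those blocks in at most one element; a birthday‑type estimate then forces $|S|\le L+O(\sqrt{D})+O(\log n)$ except with probability $n^{-\Omega(\log n)}$. Combining with the previous paragraph, with probability $1-n^{-\Omega(\log n)}$ the algorithm either outputs an infeasible set or a feasible set of size $O(r\log n)+O(\sqrt D)+O(\log n)$, which by the choice of $r$ and since $D=\tilde\Theta(n^{1/3})\to\infty$ is strictly less than $D/\alpha=\OPT/\alpha$; hence the algorithm is not an $\alpha$‑approximation. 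This establishes the distributional bound and therefore the theorem.

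The main obstacle is the conditioning in the progress lemma: after several adaptive rounds the posterior law of the hidden chain given the transcript is no longer the prior, and one has to verify it stays ``spread out'' enough below the revealed depth that deep‑reaching queries remain exponentially unlikely in the number of extra levels. This is precisely where the argument departs from~\cite{karp1988complexity}, whose adversary argument only needs to keep the answers mutually \emph{consistent} — adequate for an exact base‑construction lower bound against deterministic algorithms — rather than control a posterior distribution against randomized algorithms with only a $1-n^{-\Omega(\log n)}$ slack, and it is also where the constant $4$ and the power $\log^2 n$ in the statement get pinned down (via the depth $D=\tilde\Theta(n^{1/3})$ and the $O(\log n)$ per‑round progress bound). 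A secondary point to handle is that the output $S$ need not be among the sets the algorithm queried; this is dealt with by observing that $S$ is a deterministic function of the (w.h.p.\ uninformative) transcript and arguing over the residual randomness of the still‑hidden blocks, exactly as in the third paragraph.
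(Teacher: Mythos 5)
Your high‑level strategy is essentially the one the paper follows: pass to a distribution over partition‑matroid instances, use the hard family from Karp--Upfal--Wigderson, show that in few rounds the algorithm's view is essentially uncorrelated with most of the hidden partition, and then conclude the output must be either infeasible or small. You also correctly flag the central difficulty, namely that KUW's adversary argument certifies a consistent worst‑case instance for a \emph{deterministic} algorithm, whereas here one needs a distributional statement and must cope with the fact that, after a few adaptive rounds, the posterior law of the hidden parts given the transcript is no longer uniform. However, your proposal never actually closes that gap; you name the obstacle, state what would need to hold, and defer. The paper's resolution is to avoid reasoning about the posterior distribution at all. It defines a \emph{concentration} event $c(S,i)$ — each query $S$ intersects each of the not‑yet‑relevant parts $P_{i+1},\dots,P_p$ in roughly $|S^i|/(p-i)$ elements — shows by Chernoff plus a union bound over $\poly(n)$ queries and $\poly(n)$ parts that this event holds with probability $1-n^{-\Omega(\log n)}$, and then argues (Lemma~\ref{lem:ind}) that \emph{conditioned on concentration}, the oracle's answer to any step‑$i$ query is a deterministic function of $P_1,\dots,P_i$: if $|S^i|$ is large the query is infeasible regardless of the later parts, and if $|S^i|$ is small the later constraints are automatically slack. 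This makes the ``one part per round'' bound hold without ever manipulating a conditional distribution.

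There are also quantitative problems with your sketch that prevent it from certifying the stated bound $r = \frac{n^{1/3}}{4\alpha\log^2 n}-1$. First, your per‑round progress lemma gives $O(\log n)$ extra revealed levels per round; with $p = n^{1/3}/\log^2 n$ parts and $r$ as above, $O(r\log n) = \Theta\bigl(n^{1/3}/(\alpha\log n)\bigr)$ exceeds $p$, so the revealed depth overtakes the entire chain and the argument is vacuous. The paper's concentration argument yields the sharper ``one part per round'' bound, which is exactly what makes the $\log^2 n$ in the theorem's exponent work out. Second, your claim that $\OPT = D = \tilde\Theta(n^{1/3})$ does not match the construction: the parts all have size $n^{2/3}\log^2 n$, the rank constraint for $P_i$ is $i\, n^{1/3}\log^2 n$, and a base has size $\sum_{i=1}^p i\, n^{1/3}\log^2 n = \Theta(n/\log^2 n)$. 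Consequently the ``birthday''‑type cap $|S|\le L + O(\sqrt D) + O(\log n)$ is not the right accounting; the paper instead splits into two cases depending on whether $|S^{p/(4\alpha)}|$ is large (then $S$ spills over into some unrevealed part and is infeasible, again by Chernoff under the concentration conditioning) or small (then bounding the contribution of the revealed parts by their rank constraints gives $|S| < |B|/\alpha$). In short: correct outline and correctly identified crux, but the crux is not resolved and the quantitative estimates (progress per round, size of $\OPT$) do not support the stated round bound.
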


To the best of our knowledge, the gap between the  lower and upper bounds of $\tilde{Omega}(n^{1/3})$ and  $O(n^{1/2})$ parallel steps for  constructing a matroid basis given an independence oracle remains open since~\cite{karp1988complexity}.  Closing this gap for submodular maximization under a matroid constraint given an independence oracle is an interesting open problem that would also close the gap of~\cite{karp1988complexity}.

\newpage
\bibliographystyle{alpha}
 \bibliography{biblio}
 
 \newpage
 
\appendix

\section*{Appendix}


\section{Discussion about Additional Results}
\label{sec:explicit}
We discuss several cases for which our results and techniques generalize.

\paragraph{Cardinality constraint.} We first mention a generalization of \algone \ that is a $\O\left(\log(n)\right)$ adaptive algorithm that obtains a $1- 1/e - \O(\epsilon)$ approximation with probability $1 - o(1)$ for monotone submodular maximization under a cardinality constraint, which is the special case of a uniform matroid. Instead of sampling uniformly random subsets of $X$ of size $k/r$ as done in every iteration of the algorithm in \cite{BRS19}, it is possible to generate a single sequence and then add elements to $S$ and discard elements from $X$ in the same manner as \algone. We note that generating a random feasible sequence in parallel is trivial for a cardinality constraint $k$, one can simply pick $k$ elements uniformly at random. Similarly, the elements we add to the solution are approximately locally optimal and we discard a constant fraction of elements at every round. A main difference is that for the case of a cardinality constraint,  setting the threshold $t$ to  $t = (\OPT - f(S))/k$  is sufficient and, as shown in \cite{BRS19}, this threshold only needs a constant number of updates.  Thus, for the case of a cardinality constraint, we obtain a $\O(\log n)$ adaptive algorithm with a variant of \algone. In addition, the continuous greedy algorithm is not needed for a cardinality  constraint since adding elements with marginal contribution which approximates  $(\OPT - f(S))/k$ at every iteration guarantees a $1-1/e-\epsilon$ approximation.

\paragraph{Non-monotone functions.} For the case of maximizing a non-monotone submodular function under a cardinality constraint, similarly as for the monotone algorithm discussed above, we can also generate a single sequence instead of multiple random blocks of elements, as done in \cite{BBS18}.

\paragraph{Partition matroids with explicit representation.} Special families of matroids, such as graphical and partition matroids, have explicit representations. We consider the case where a partition matroid is given as input to the algorithm not as an oracle but with its explicit representation, meaning the algorithm is given the parts $P_1, \ldots, P_m$ of the partition matroid and the number  $p_1, \ldots, p_m$ of elements of each parts allowed by the matroid. 

For the more general setting  of packing constraints given to the algorithm as a collection of $m$ linear constraints, as previously mentioned, \cite{chekuri2018submodular} develop a $\O(\log^2(m) \log(n))$ adaptive algorithm that obtains with high probability a $1-1/e-\epsilon$ approximation, and has  polylogarithmic depth on a PRAM machine for partition matroids.

In this case of partition matroids, we obtain a $\O\left(\log(n)  \log\left(\frac{k}{\epsilon \lambda}\right)   \frac{1}{\epsilon \lambda}\right)$ adaptive algorithm that, with probability $1 - \delta$, obtains a $1-1/e - \O(\epsilon)$ approximation with $\lambda = \O\left(\epsilon^2 \log^{-1}\left(\frac{1}{\delta}\right)\right)$. This algorithm also has polylogarithmic depth. This algorithm uses \algcontinuous \ with the \blackbox \ subroutine for rank oracles since a rank oracle for partition matroids can easily be constructed in polylogarithmic depth when given the explicit representation of the matroid. As mentioned in \cite{chekuri2018submodular}, it is also possible to obtain a rounding scheme for partition matroids in polylogarithmic depth.

\paragraph{Intersection of $P$ matroids.} We formally analyze the more general constraint consisting of the intersection of $P$ matroids in Appendix~\ref{sec:appcomb}.

\section{Missing Proofs from Section~\ref{sec:comb}}
\label{sec:appcomb}

\subsection{Quasi-linear query complexity}

The query complexity of \algone \ and \algcontinuous \ can be improved from  $\O(nk \log(n) \log(k))$ to  quasi-linear with $\O(n \log(n) \log^2(k))$ queries if we allow $\O(\log(n) \log^2(k))$ rounds. This is done by finding $i^{\star}$ at every iteration of \algone \ by doing binary search of $i \in [\rank(\M(S,X))]$ instead of computing $X_i$ for all $i$ in parallel. Since there are at most $k$ values of $i$, this decrease the query complexity of finding  $i^{\star}$ from $nk$ to $n \log k$, but increases the adaptivity by $\log k$.

An important property to be able to perform binary search is to have $|X_i|$ decreasing in $i$. We show this with the following lemma.

\begin{lemma}
\label{lem:size}
At every iteration of \algone,  $X_{i+1} \subseteq X_i$ for all $i < \rank(\M(S, X))$.
\end{lemma}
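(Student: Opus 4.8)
The plan is to unfold the definition of $X_i$ and check the two defining conditions separately, using downward-closedness of $\M$ for the matroid condition and submodularity of $f$ for the threshold condition. Recall that at a fixed iteration of \algone, with current solution $S$, surviving set $X$, threshold $t$, and random feasible sequence $(a_1, \ldots, a_{\rank(\M(S,X))})$, we have
$$X_i = \left\{a \in X : S \cup \{a_1, \ldots, a_i, a\} \in \M \ \text{ and } \ f_{S \cup \{a_1, \ldots, a_i\}}(a) \geq t\right\}.$$
So it suffices to fix $i < \rank(\M(S,X))$ and an arbitrary $a \in X_{i+1}$, and show $a \in X_i$.

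First I would handle membership in $\M$: since $a \in X_{i+1}$ we have $S \cup \{a_1, \ldots, a_{i+1}, a\} \in \M$, and since $S \cup \{a_1, \ldots, a_i, a\} \subseteq S \cup \{a_1, \ldots, a_{i+1}, a\}$, the downward-closed property of $\M$ gives $S \cup \{a_1, \ldots, a_i, a\} \in \M$. Second I would handle the threshold: because $S \cup \{a_1, \ldots, a_i\} \subseteq S \cup \{a_1, \ldots, a_{i+1}\}$, submodularity of $f$ yields $f_{S \cup \{a_1, \ldots, a_i\}}(a) \geq f_{S \cup \{a_1, \ldots, a_{i+1}\}}(a) \geq t$, where the last inequality is again from $a \in X_{i+1}$. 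Since also $a \in X$ (as $X_{i+1} \subseteq X$ by definition), both defining conditions of $X_i$ hold, so $a \in X_i$, completing the proof.

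There is no substantive obstacle here: the statement is an immediate structural consequence of the two axioms we already assume throughout (downward-closedness of the matroid and diminishing marginal contributions). The only thing to be careful about is bookkeeping of the nested sets $S \cup \{a_1, \ldots, a_i\}$ versus $S \cup \{a_1, \ldots, a_{i+1}\}$ and the fact that $i+1 \leq \rank(\M(S,X))$ so that $a_{i+1}$ is well defined; this is exactly the hypothesis $i < \rank(\M(S,X))$. With Lemma~\ref{lem:size} in hand, $|X_i|$ is monotone non-increasing in $i$, which is what licenses the binary search for $i^\star$ in the quasi-linear-query variant of the algorithm.
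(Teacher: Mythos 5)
Your proof is correct and follows the same approach as the paper's: downward-closedness handles the matroid condition and submodularity handles the threshold condition. (Incidentally, your indexing is cleaner than the paper's, whose proof contains an off-by-one slip, writing $S \cup \{a_1,\ldots,a_i\}\cup a$ and $S\cup\{a_1,\ldots,a_{i-1}\}\cup a$ where it should write the sets with indices $i+1$ and $i$ as you do.)
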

\begin{proof}
Assume $a \in X_{i+1}$. Thus, $S \cup \{a_1, \ldots, a_{i}\} + a \in \M$ and $f_{S \cup \{a_1, \ldots, a_{i}\}}(a) \geq  t$. By the downward closed property of matroids, $S \cup \{a_1, \ldots, a_{i-1}\} + a \in \M$. By submodularity, $f_{S \cup \{a_1, \ldots, a_{i-1}\}}(a) \geq  f_{S \cup \{a_1, \ldots, a_{i}\}}(a) \geq  t$. We get that $a \in X_i$.
\end{proof}

\begin{corollary}
If \algone \ finds $i^{\star}$ by doing binary search, then its query complexity is $\O(n \log(n) \log^2(k))$and its adaptivity is  $\O(\log(n) \log^2(k))$.
\end{corollary}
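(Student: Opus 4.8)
The plan is to treat this corollary as an accounting exercise built on two facts already established: Lemma~\ref{lem:size}, which says $|X_i|$ is non-increasing in $i$, and the adaptivity bookkeeping of Lemma~\ref{lem:adaptivity}. The only genuinely new observation needed is that the monotonicity of $|X_i|$ makes the predicate ``$|X_i| \le (1-\epsilon)|X|$'' monotone in $i$ --- false for all $i < i^{\star}$ and true for all $i \ge i^{\star}$ --- so that $i^{\star} = \min\{i : |X_i| \le (1-\epsilon)|X|\}$ can be located by binary search over $i \in [\rank(\M(S,X))]$ rather than by evaluating all of $X_1, \dots, X_{\rank(\M(S,X))}$ in parallel. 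Since $S \cup T \in \M$ forces $|T| \le k$, we have $\rank(\M(S,X)) \le k$, so the search range has width at most $k$.

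First I would fix one inner (while-loop) iteration of \algone. The random feasible sequence $a_1, \dots, a_{\rank(\M(S,X))}$ is produced by \blackbox \ in zero adaptive rounds, so it is available for free. A single binary-search probe at index $i$ then requires computing $X_i$, which costs $\O(n)$ matroid-oracle and function evaluations, all mutually independent and hence executable in one adaptive round. By Lemma~\ref{lem:size} the probe outcome tells us whether $i^{\star} \le i$ or $i^{\star} > i$, so $\O(\log k)$ probes suffice to pin down $i^{\star}$; this is the step that turns the ``$\times k$'' in the query complexity into ``$\times \log k$'' at the cost of an extra ``$\times \log k$'' in the adaptivity. The updates $S \leftarrow S \cup \{a_1,\dots,a_{i^{\star}}\}$ and $X \leftarrow X_{i^{\star}}$ afterwards involve no function evaluations.

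Next I would multiply through, exactly as in Lemma~\ref{lem:adaptivity}: the while loop runs $\O(\epsilon^{-1}\log n)$ times, since an $\epsilon$-fraction of $X$ is discarded per iteration by the definition of $i^{\star}$, and the outer loop runs $\Delta = \O(\epsilon^{-1}\log(k/\epsilon))$ times. Hence the adaptivity is $\Delta \cdot \O(\epsilon^{-1}\log n)\cdot \O(\log k) = \O(\epsilon^{-2}\log(n)\log(k/\epsilon)\log(k))$, which is $\O(\log(n)\log^2(k))$ for fixed $\epsilon$, and the query complexity is the same product times the $\O(n)$ cost of a single probe, i.e.\ $\O(n\log(n)\log^2(k))$. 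Finally I would note that nothing about the \emph{output} of \algone \ changes --- the same index $i^{\star}$ is selected at every iteration --- so Lemmas~\ref{lem:invariants} and~\ref{lem:marg} and the approximation guarantee of Theorem~\ref{thm:comb} carry over verbatim. The only point that needs care, and the one I would state explicitly, is the monotonicity of the binary-search predicate; but that is precisely Lemma~\ref{lem:size}, so there is no real obstacle here beyond the routine counting.
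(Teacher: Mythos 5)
Your proposal is correct and follows essentially the same route as the paper: Lemma~\ref{lem:size} gives the monotonicity of the predicate $|X_i| \le (1-\epsilon)|X|$ in $i$, which justifies binary search over $[\rank(\M(S,X))]$ with at most $k$ indices, and the accounting (one adaptive round and $\O(n)$ queries per probe, $\O(\log k)$ probes per while-loop iteration, multiplied through the $\O(\epsilon^{-2}\log(n)\log(k/\epsilon))$ iteration count from Lemma~\ref{lem:adaptivity}) matches the paper's. The observation that the same $i^{\star}$ is found, so the approximation analysis is untouched, is left implicit in the paper but is worth stating explicitly as you do.
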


\subsection{From expectation to high probability for the combinatorial algorithm}

We generalize  \algone \ to obtain an algorithm called \algoneplus, described below, which achieves a $1/2 - \epsilon$ approximation with high probability, instead of in expectation. We note that this generalization is not needed when \algone \ is used as a subroutine of \algcontinuous \ for the $1-1/e-\epsilon$ result.

\begin{algorithm}[H]
\caption{\algoneplus,  \algone  \ with high probability guarantee}
\begin{algorithmic}
    	\INPUT function $f$, feasibility constraint $\M$
    	\STATE  $S \leftarrow \emptyset, t \leftarrow \max_{a \in N}f(a)$
    	\STATE \textbf{for} $\Delta$ iterations \textbf{do}
	\STATE \ \  \ $X \leftarrow N$
	\STATE \ \ \  \textbf{while} $X \neq \emptyset$ \textbf{do}
	\STATE \ \ \ \ \ \ \textbf{for} $j = 1$ to $\rho$  \textbf{do (non-adaptivity and in parallel)}
	\STATE \ \ \ \ \ \  \ \ \ $a_1, \ldots, a_{\rank(\M(S, X))} \leftarrow \blackbox(\M(S,X))$
	\STATE \ \ \ \ \ \  \ \ \ $X_i \leftarrow \{a \in X :  S \cup \{a_1, \ldots, a_{i}, a\} \in \M \text{ and } f_{S \cup \{a_1, \ldots, a_{i}\}}(a) \geq  t\}$
	\STATE \ \ \  \ \ \  \ \ \ $i^{\star} \leftarrow \min\left\{i : |X_i| \leq (1 - \epsilon)|X|\right\}$ 
	\STATE \ \ \ \ \ \   \ \ \ $S^j \leftarrow S \cup \{a_1, \ldots, a_{i^{\star}}\}$
	\STATE \ \ \ \ \ \  \ \ \ $X^j \leftarrow  X_{i^{\star}}$
	\STATE \ \ \ \ \ \  \ \ \ $v^j \leftarrow \frac{1}{i^{\star}} \sum_{\ell = 1}^{i^{\star}} f_{S \cup \{a_1, \ldots, a_{\ell - 1}\}}(a_\ell)$
	\STATE \ \ \ \ \ \ $j^{\star} \leftarrow \argmax_{j \in [\rho]} v^j$
	\STATE  \ \ \ \ \ \ $S \leftarrow S^j$
	\STATE  \ \ \ \ \ \ $X \leftarrow X^j$
	\STATE \ \ \ $t \leftarrow (1 - \epsilon) t$
    	\RETURN $S$ 
  \end{algorithmic}
  \label{alg:1/2hp}
\end{algorithm}

\thmcomb*
\begin{proof}
We set $\Delta = \O\left(\frac{1}{\epsilon} \log\left(\frac{k}{\epsilon}\right) \right)$. Initially we have $t_i \leq \OPT$. After $\Delta$ iterations of \algone, the final value of $t$ is $t_f \leq (1 - \epsilon)^{\Delta}\OPT = \O\left(\frac{\epsilon}{k}\right) \OPT$. We begin by adding dummy elements to $S$ so that $|S| = k$, which enables pairwise comparisons between $S$ and $O$. In particular, we consider $S'$, which is $S$ together with $\rank(\M) - |S|$ dummy elements $a_{|S| + 1}, \ldots a_k$ such that, for any $T$, $f_T(a) = t_f$. Thus, by Lemma~\ref{lem:invariants}, for dummy elements $a_i$, $f_{S_{i-1}}(a_i) = t_f \geq (1 - \epsilon) \max_{a : S_{i-1} \cup a \in \M} f_{S_{i-1}}(a)$.

By Lemma~\ref{lem:adaptivity}, there are $\O(\Delta \log(n) / \epsilon)$ iterations of the while-loop. Since each iteration of the while-loop is non-adaptive, \algoneplus \ is $\O(\Delta \log(n) / \epsilon)$ adaptive

Consider an iteration of the while-loop of \algoneplus.  We first argue that for each inner-iteration $j$, $\sum_{i \in [i^{\star}]}f_{S_{i-1}}(a_i) \geq (1 - \epsilon)^2 i^{\star} t$. 
We first note that $\Pr_{a_i}\left[f_{S \cup \{a_1, \ldots, a_{i-1}\}}(a_i) \geq t\right] \geq 1 - \epsilon$ by the definition of $i^{\star}$ and the random feasible sequence property. Let $Y$ be the number of indices $i \leq i^{\star}$ such that $f_{S \cup \{a_1, \ldots, a_{i-1}\}}(a_i) \geq t$. By Chernoff bound, with $\mu = \E[Y] \geq  (1 - \epsilon)i^{\star}$
$$\Pr\left[Y \leq (1 - \epsilon)(1 - \epsilon)i^{\star}\right] \leq e^{-\epsilon^2 (1 - \epsilon) i^{\star}/2} \leq e^{-\epsilon^2 (1 - \epsilon)/2}.$$

Let $Z \leq \rho$ be the number of inner-iterations $j$ such that $Y \geq (1 - \epsilon)(1 - \epsilon)i^{\star}$. By Chernoff bound, with $\mu = \E[Z] \geq (1 - e^{-\epsilon^2 (1 - \epsilon)/2})\rho$, 
$$\Pr\left[Z \leq \frac{1}{2}(1 - e^{-\epsilon^2 (1 - \epsilon)/2})\rho\right] \leq e^{(1 - e^{-\epsilon^2 (1 - \epsilon)/2})\rho/8}.$$

Thus, with $\rho = \O\left(\frac{1}{1 - e^{-\epsilon^2 }} \log\left(\frac{\Delta \log n}{\epsilon \delta}\right)\right)$, we have that with probability $1 - \O\left(\epsilon \delta/(\Delta \log n)\right)$, there is at least one inner-iteration $j$ such that $Y \geq (1 - \epsilon)(1 - \epsilon)i^{\star}$. Thus $\sum_{i \in [i^{\star}]}f_{S_{i-1}}(a_i) \geq (1 - \epsilon)^2 i^{\star} t$. By Lemma~\ref{lem:invariants},
$$\sum_{i \in [i^{\star}]}f_{S_{i-1}}(a_i) \geq (1 - \epsilon)^3  \max_{a : S_{i-1} \cup a \in \M} f_{S_{i-1}}(a).$$
By a union bound, this holds over all iterations
of the while-loop of \algoneplus \ with probability $ 1- \delta$ and we get that 
$$\sum_{i \in [k]}f_{S_{i-1}}(a_i) \geq (1 - \epsilon)^3  \max_{a : S_{i-1} \cup a \in \M} f_{S_{i-1}}(a).$$ Let $O = \{o_1, \ldots, o_k\}$ such that $\{a_1, \ldots, a_{i-1}, o_i\}$ is feasible for all $i$, which exists by the augmentation property of matroids. We conclude that with probability $ 1- \delta$, 
\begin{align*}
f(S') & = \sum_{i \in [k]}\E[f_{S_{i-1}}(a_i)] \\
& \geq (1 - \epsilon)^3  \max_{a : S_{i-1} \cup a \in \M} f_{S_{i-1}}(a) \\
& \geq (1-\epsilon)^3\sum_{i \in [k]}\E[f_{S_{i-1}}(o_i)] \\
& \geq (1-\epsilon)^3f_{S'}(O) \\
& \geq (1-\epsilon)^3(\OPT - f(S'))
\end{align*}
and since $$f(S) = f(S') - (\rank(\M) -|S|) t_f \geq f(S') - \O(\epsilon) \OPT,$$ we conclude that $f(S) \geq (1/2 - \O(\epsilon)) \OPT$.
\end{proof}

\subsection{Intersection of matroid constraints}

We consider constraint $\M = \cap_{i=1}^P \M_i$ which is the intersection of $P$ matroids $\M_i$, i.e. $S \in \M$ if $S \in \M_i$ for all $i \leq P$. Similarly as for a single matroid constraint, we denote the size of the largest feasible set by $k$.  We denote the rank of a set $S$ with respect to matroid $\M_j$ by $\rank_j(S)$. We define $\Span_j(S)$, called the \emph{span} of $S$ in $\mathcal{M}_j$ by:
\[\Span_j(S) = \{a \in N: \rank_j(S \cup a) = \rank_j(S) \}\]

We will use the following claim.
\begin{claim}[Prop. 2.2 in \cite{NWF78}]\label{clm:fisher}
    If for $\forall t\in[k]$ $\sum_{i = 0}^{t - 1} \sigma_i \le t$  and $p_{i - 1} \ge p_i$, with $\sigma_i, p_i \ge 0$ then:
    \[\sum_{i = 0}^{k - 1} p_i \sigma_i \le  \sum_{i = 0}^{k - 1} p_i.\]
  \end{claim}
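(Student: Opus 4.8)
The plan is to prove Claim~\ref{clm:fisher} by a standard Abel summation (summation by parts) argument: I will use the monotonicity of the $p_i$ to rewrite the weighted sum $\sum_i p_i\sigma_i$ as a \emph{nonnegative} combination of the prefix sums $\sum_{i\le j}\sigma_i$, each of which is bounded by the hypothesis. First I would extend the sequence by declaring $p_k := 0$; since $p_{k-1}\ge 0$ this keeps $(p_i)_{i=0}^{k}$ non-increasing. The workhorse identity is the telescoping decomposition $p_i = \sum_{j=i}^{k-1}(p_j-p_{j+1})$, valid for every $0\le i\le k-1$ precisely because $p_k=0$, and with every summand $p_j-p_{j+1}\ge 0$.

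Substituting this identity into $\sum_{i=0}^{k-1}p_i\sigma_i$ and interchanging the order of summation over the index set $\{(i,j): 0\le i\le j\le k-1\}$ gives
\[
\sum_{i=0}^{k-1} p_i \sigma_i \;=\; \sum_{i=0}^{k-1}\sigma_i\sum_{j=i}^{k-1}(p_j-p_{j+1}) \;=\; \sum_{j=0}^{k-1}(p_j-p_{j+1})\sum_{i=0}^{j}\sigma_i.
\]
Now I would invoke the hypothesis: for each $j\in\{0,\dots,k-1\}$ the inner sum is $\sum_{i=0}^{(j+1)-1}\sigma_i\le j+1$ since $j+1\in[k]$, and because each coefficient $p_j-p_{j+1}$ is nonnegative this yields $\sum_{i=0}^{k-1}p_i\sigma_i \le \sum_{j=0}^{k-1}(p_j-p_{j+1})(j+1)$. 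Finally a second telescoping — again using $p_k=0$ — collapses the right-hand side: $\sum_{j=0}^{k-1}(p_j-p_{j+1})(j+1) = \sum_{j=0}^{k-1}(j+1)p_j - \sum_{j=1}^{k-1}j\,p_j = p_0 + \sum_{j=1}^{k-1}p_j = \sum_{j=0}^{k-1}p_j$, which is exactly the claimed inequality.

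I do not expect a genuine obstacle here — this is an elementary inequality (Prop. 2.2 of~\cite{NWF78}) — so the only thing needing care is the index bookkeeping: making sure the auxiliary value $p_k=0$ is used consistently in both telescoping steps, and that the range $j\in\{0,\dots,k-1\}$ of prefix sums lines up with the range $t\in[k]$ in the hypothesis. An alternative would be induction on $k$, but the constraint $\sum_{i=0}^{t-1}\sigma_i\le t$ does not restrict cleanly to a prefix of the $\sigma_i$'s, so the summation-by-parts proof above is the cleaner route.
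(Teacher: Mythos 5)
Your proof is correct: the auxiliary definition $p_k := 0$ keeps the sequence non-increasing, the summation-by-parts interchange over $\{(i,j): 0 \le i \le j \le k-1\}$ is valid, each coefficient $p_j - p_{j+1}$ is nonnegative, the prefix-sum bound applies with $t = j+1 \in [k]$, and the final telescoping correctly collapses to $\sum_{j=0}^{k-1} p_j$. The paper itself supplies no proof of this claim (it is cited directly as Proposition~2.2 of~\cite{NWF78}), and the Abel-summation argument you give is the standard one and essentially the proof that appears in that reference.
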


Similarly as for a single matroid, we give the approximation guaranteed obtained by a solution $S$ with near-optimal marginal contributions for each $a \in S$. 

\begin{lemma}
\label{lem:matroidintersection}
Assume that $S = \{a_1, \ldots, a_k\}$ such that 
$$f_{S_{i-1}}(a_i) \geq (1 - \epsilon) \max_{a: S_{i-1} \cup a \in \M}f_{S_{i-1}}(a)$$ where $S_i = \{a_1, \ldots, a_i\}$. Then, if $\M$ is the intersection of $P$ matroids, we have $$f(S) \geq \left(\frac{1}{P+1} - \O(\epsilon)\right)\OPT.$$
\end{lemma}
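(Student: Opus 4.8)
The plan is to adapt the classical Nemhauser–Wolsey–Fisher argument for $P$-matroid intersection (their $\frac{1}{P+1}$ bound for greedy) to our setting, where the roles of ``greedily chosen element'' are played by the $a_i$ and the near-optimality hypothesis replaces exact greedy optimality. Let $O$ be an optimal solution. The first step is the combinatorial core: I want to partition (or at least cover) the elements of $O$ by the positions $1,\dots,k$ in such a way that each $o\in O$ is ``charged'' to a position $i$ for which $S_{i-1}\cup o\in\M$, and no position is charged too many times. Concretely, for each matroid $\M_j$ and each $o\in O$, let $i_j(o)$ be the first index $i$ such that $o\in\Span_j(S_i)$; standard matroid-exchange facts give that for each $j$, at most one element of $O$ has $i_j(o)=i$ for a given $i$ wait---more precisely the elements of $O$ whose $\M_j$-span-index equals $i$ form an independent set added at step $i$, so there are at most $r_i^j-r_{i-1}^j$ of them, which sums telescopically. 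Defining $\sigma_i := |\{o\in O : o\notin \Span_j(S_{i-1})\text{ for some }j,\ \text{charged to }i\}|$ appropriately, one shows $\sum_{i=0}^{t-1}\sigma_i\le Pt$ for all $t$ (each of the $P$ matroids contributes at most $t$), and $\sum_i \sigma_i \le |O|$ with each $o$ counted. The key point delivered by this bookkeeping is: for every $o\in O$, at the position $i$ to which it is charged we have $S_{i-1}\cup o\in\M$, so the hypothesis gives $f_{S_{i-1}}(a_i)\ge(1-\epsilon)f_{S_{i-1}}(o)$.

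The second step is to run the telescoping/averaging argument. Set $p_i := f_{S_{i}}(a_{i+1}) = f_{S_{i+1}\setminus a_{i+1}}(a_{i+1})$; by submodularity the $p_i$ are nonincreasing in $i$, which is exactly the monotonicity hypothesis needed for Claim~\ref{clm:fisher}. Using near-optimality and submodularity (to pass from $f_{S_{i-1}}(o)$ to $f_S(o)$, or rather to bound $f_S(O)$ by a sum over $o$ of $f_{S_{i(o)-1}}(o)$), I get
\[
\OPT - f(S) \le f_S(O) \le \sum_{o\in O} f_{S_{i(o)-1}}(o) \le \frac{1}{1-\epsilon}\sum_{o\in O} p_{i(o)-1} \le \frac{1}{1-\epsilon}\sum_{i} p_{i}\,\sigma_i,
\]
and then Claim~\ref{clm:fisher} with $\sum_{i=0}^{t-1}\sigma_i\le Pt$ (so that $\sigma_i/P$ satisfies the claim's hypothesis) yields $\sum_i p_i\sigma_i \le P\sum_i p_i = P\, f(S)$. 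Combining, $\OPT - f(S)\le \frac{P}{1-\epsilon} f(S)$, i.e. $f(S)\ge \frac{1-\epsilon}{P+1-\epsilon}\OPT \ge \big(\frac{1}{P+1}-\O(\epsilon)\big)\OPT$.

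The main obstacle I expect is the first step: correctly setting up the charging of $O$-elements to positions so that simultaneously (a) each charge is to a position where the element is still feasible to add to the prefix, justifying the use of the near-optimality hypothesis, and (b) the counts $\sigma_i$ satisfy the prefix bound $\sum_{i<t}\sigma_i\le Pt$ required by Claim~\ref{clm:fisher}. This is the place where the $P$-matroid structure (as opposed to a single matroid, where Lemma~\ref{lem:matroidO} sufficed) genuinely enters, and it requires invoking the span/rank facts for each $\M_j$ and arguing that an element of $O$ not yet spanned in any one of the $P$ matroids can be added. A secondary, routine point is tracking that $S$ may have fewer than $k$ elements and that padding with dummy elements of value $t_f = \O(\epsilon\,\OPT/k)$ costs only an additive $\O(\epsilon)\OPT$, exactly as in the single-matroid proof; I would handle this with the same dummy-element device used in the proof of Theorem~\ref{thm:comb}.
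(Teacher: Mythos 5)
Your proposal follows essentially the same route as the paper's proof: charge each $o\in O$ to the first index $i$ at which $o$ becomes spanned in some $\M_j$ (so that it is still feasible to add to $S_{i-1}$), bound the prefix counts $\sum_{t\le i}|V_t|=|U_i\cap O|\le Pi$ by the rank/span argument for each of the $P$ matroids, and close via Claim~\ref{clm:fisher} together with the near-optimality hypothesis and the dummy-element padding. One step as written is incorrect, though easily repaired: you set $p_i := f_{S_i}(a_{i+1})$ and assert that submodularity makes $(p_i)_i$ nonincreasing. Submodularity only compares marginals of a \emph{fixed} element to nested prefixes, whereas $p_i$ and $p_{i+1}$ involve different elements $a_{i+1}$ and $a_{i+2}$; under your hypothesis you only obtain the approximate relation $p_{i+1}\le p_i/(1-\epsilon)$, which is not the monotonicity Claim~\ref{clm:fisher} requires. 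The paper avoids this by taking $p_i := \max_{a:\,S_{i-1}\cup a\in\M}f_{S_{i-1}}(a)$, which is genuinely nonincreasing because the feasible set $\{a: S_{i-1}\cup a\in\M\}$ shrinks as $i$ grows (downward closure) and the marginals $f_{S_{i-1}}(a)$ decrease in $i$ for fixed $a$ (submodularity). With that substitution, and using the hypothesis once at the end to replace $\sum_i p_i$ by $\frac{1}{1-\epsilon}\sum_i f_{S_{i-1}}(a_i)=\frac{1}{1-\epsilon}f(S)$, your chain of inequalities goes through and matches the paper's argument.
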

\begin{proof}
Since $S_i$ and $O$ are independent sets in $\mathcal{M}_{j}$ we have:
\[\texttt{rank}_j ( \Span_j(S_i) \cap O ) = |\Span_j(S_i) \cap O | \leq |\Span_j(S_i)| = |S_i|\leq i \]

Define $U_i = \cup_{j=1}^{P} \Span_j(S_i)$, to be the set of elements which are not part of the maximization at index $i+1$ of the procedure, and hence cannot give value at that stage. We have:
\[|U_i \cap O| = |(\cup_{j = 1}^{P} \Span_j(S_i)) \cap O| \le \sum_{j=1}^{P} |\Span_j(S_i) \cap O| \le P\cdot  i\]
Let $V_i = (U_i \setminus U_{i-1}) \cap O$ be the elements of $O$ which are not part of the maximization at index $i$, but were part of the maximization at index $i-1$.  If $a \in V_{i}$ then it must be that
$$(1 - \epsilon)f_{S_k}(a) (1 - \epsilon)\leq f_{S_{i-1}}(a) \le \max_{b : S_{i-1} \cup b \in \M}f_{S_{i-1}}(b) $$
where the first inequality is due to submodularity of $f$. Hence, we can upper bound:
\begin{align*}
\sum_{o \in O \setminus S_{k}} f_{S_{k}}(o) \le \sum_{i = 1}^{k} \sum_{o \in V_i}\max_{a : S_{i-1} \cup a \in \M}f_{S_{i-1}}(a)
= \sum_{i = 1}^{k} |V_i| \max_{a : S_{i-1} \cup a \in \M}f_{S_{i-1}}(a) \le P \sum_{i = 1}^{k} \max_{a : S_{i-1} \cup a \in \M}f_{S_{i-1}}(a)
 \end{align*}
where the last inequality uses $\sum_{t=1}^{i}|V_t| = |U_i \cap O| \le Pi$ and the  claim due to~\ref{clm:fisher}.
Together with $\OPT \le f(O \cup S_{k}) \le f(S_{k}) + \sum_{o \in O \setminus S_{k}} f_{S_{k}}(o)$ and $f_{S_{i-1}}(a_i) \geq (1- \epsilon) \max_{a : S_{i-1} \cup a \in \M}f_{S_{i-1}}(a)$ we get:
$$f(S) \geq \left(\frac{1}{P+1} - \O(\epsilon)\right)\OPT.$$
as required.
\end{proof}

Since Lemma~\ref{lem:invariants}  only uses the downward closed property of $\M$ and since intersections of matroids are downward closed, \algoneplus \ obtains a solution $S$ with near-optimal marginal contributions for each $a_i \in S = \{a_1, \ldots, a_k\}$. Combined with the previous lemma, we obtain the result for intersections of matroids.
\begin{theorem}
\label{thm:intersection}
For any $\epsilon>0$, \algoneplus \ is an $\O\left(\log(n)\log\left(\frac{k}{\epsilon}\right)\frac{1}{\epsilon^2}\right)$ adaptive algorithm that obtains a $1/(P+1) - \O(\epsilon)$ approximation with probability $1 - o(1)$ for maximizing a monotone submodular function under the intersection of $P$ matroids.
\end{theorem}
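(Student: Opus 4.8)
The plan is to assemble Theorem~\ref{thm:intersection} from three pieces that are already in hand: the adaptivity analysis of \algone\ (Lemma~\ref{lem:adaptivity}), the high-probability machinery of \algoneplus\ (used in the proof of Theorem~\ref{thm:comb}), and the combinatorial loss bound for intersections of $P$ matroids (Lemma~\ref{lem:matroidintersection}). The key observation that ties everything together is that Lemma~\ref{lem:invariants} — the invariant that $t$ is an approximate upper bound on the maximum feasible marginal contribution — uses \emph{only} the downward-closed property of $\M$, and the intersection $\M = \cap_{i=1}^P \M_i$ of matroids is downward closed. Likewise, the random-feasible-sequence property of \blackbox$(\M(S,X))$ and the argument of Lemma~\ref{lem:marg} go through for any downward-closed set system that still admits a well-defined notion of rank of $\M(S,X)$, which it does here. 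So running \algoneplus\ with the intersection constraint still produces, with probability $1-\delta$, a solution $S = \{a_1,\dots,a_k\}$ with $f_{S_{i-1}}(a_i) \ge (1-\O(\epsilon))\max_{a:\,S_{i-1}\cup a\in\M} f_{S_{i-1}}(a)$ for every $i$.

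The steps, in order, are: (i) Observe that $\M$ is downward closed, so Lemma~\ref{lem:invariants} applies verbatim and the threshold invariant $t \ge (1-\epsilon)\max_{a:\,S\cup a\in\M} f_S(a)$ holds throughout. (ii) Re-run the Chernoff/union-bound argument from the proof of Theorem~\ref{thm:comb}: set $\Delta = \O\!\left(\frac{1}{\epsilon}\log\frac{k}{\epsilon}\right)$ so the final threshold $t_f \le (1-\epsilon)^\Delta\OPT = \O(\epsilon/k)\OPT$; pick $\rho = \O\!\left(\frac{1}{1-e^{-\epsilon^2}}\log\frac{\Delta\log n}{\epsilon\delta}\right)$ so that in each while-loop iteration at least one of the $\rho$ parallel samples has $\ge (1-\epsilon)^2 i^\star$ indices with marginal $\ge t$; union-bound over the $\O(\Delta\log n/\epsilon)$ while-loop iterations to get, with probability $1-\delta$, $f_{S_{i-1}}(a_i) \ge (1-\epsilon)^3\max_{a:\,S_{i-1}\cup a\in\M} f_{S_{i-1}}(a)$ for all $i$. (iii) Pad $S$ with dummy elements of value $t_f$ so $|S| = k$, apply Lemma~\ref{lem:matroidintersection} to the padded solution to obtain $f(S') \ge \left(\tfrac{1}{P+1}-\O(\epsilon)\right)\OPT$, then strip the dummies losing only $k\cdot t_f = \O(\epsilon)\OPT$, yielding $f(S) \ge \left(\tfrac{1}{P+1}-\O(\epsilon)\right)\OPT$. (iv) For adaptivity, note the for-loop has $\Delta$ iterations, the while-loop has $\O(\epsilon^{-1}\log n)$ iterations (an $\epsilon$-fraction of $X$ is discarded each time by definition of $i^\star$, unchanged here), each while-iteration is non-adaptive since the $\rho$ samples and all the $X_i$ are computed in parallel, giving $\O\!\left(\log(n)\log\frac{k}{\epsilon}\frac{1}{\epsilon^2}\right)$ adaptive rounds. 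Taking $\delta = o(1)$ gives the claimed probability $1-o(1)$.

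One subtlety worth checking explicitly is that \blackbox$(\M(S,X))$ still generates a valid random feasible sequence when $\M$ is an intersection — i.e. that $\M(S,X) = \{T\subseteq X : S\cup T\in\M\}$ is itself downward closed (it is, trivially) and that its rank is well-defined so the sequence terminates; both hold, and the only property Lemma~\ref{lem:marg} exploits is that $a_i$ is uniform over $\{a\in X : S\cup\{a_1,\dots,a_{i-1}\}\cup a\in\M\}$ together with the $i^\star$ discarding condition, neither of which used the augmentation property. Also, the augmentation property is invoked in Lemma~\ref{lem:matroidintersection} only to pair up $S$ and $O$ in each individual matroid $\M_j$ via the span argument, which is exactly how that lemma is already stated and proved, so no new work is needed there.

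The main obstacle — such as it is — is bookkeeping rather than a genuine mathematical difficulty: one must be careful that the constant $P$ does not secretly creep into the $\O(\epsilon)$ terms in a way that degrades the bound (it does not, since $P$ is a fixed constant and the $\epsilon$-losses are multiplicative in $\OPT$, independent of $P$), and that the "dummy elements" trick is compatible with Lemma~\ref{lem:matroidintersection}'s hypothesis that $S$ has size exactly $k = \rank(\M)$ under the intersection constraint (the dummies are defined to have marginal exactly $t_f$ regardless of context, so the near-optimality hypothesis of Lemma~\ref{lem:matroidintersection} is met for those indices by the threshold invariant, just as in the single-matroid proof). Everything else is a transcription of the Theorem~\ref{thm:comb} argument with Lemma~\ref{lem:matroidintersection} substituted for its single-matroid analogue.
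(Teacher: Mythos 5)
Your proposal is correct and follows the paper's proof of Theorem~\ref{thm:intersection} essentially verbatim: the paper also notes that Lemma~\ref{lem:invariants} relies only on downward-closedness, invokes the high-probability marginal-contribution bound from the proof of Theorem~\ref{thm:comb}, and then substitutes Lemma~\ref{lem:matroidintersection} for the single-matroid analogue. Your write-up is in fact more explicit than the paper's (which defers the Chernoff/union-bound and dummy-padding steps to "similarly as the proof for Theorem~\ref{thm:comb}"), and you correctly flag the one subtlety — that the random feasible sequence and the $i^\star$-discarding argument use only downward-closedness and well-definedness of rank — that makes the transfer legitimate.
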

\begin{proof}
The first part of the of the proof follows similarly as the proof for Theorem~\ref{thm:hp} by using  Lemma~\ref{lem:invariants}, which also hold for intersections of matroids,  to obtain the near-optimal marginal contributions of each $a_i \in S$ with probability $1 - o(1)$:

$$\sum_{i \in [i^{\star}]}f_{S_{i-1}}(a_i) \geq (1 - \epsilon)^3  \max_{a : S_{i-1} \cup a \in \M} f_{S_{i-1}}(a).$$
We then combine this with Lemma~\ref{lem:matroidintersection} to obtain the $1/(P+1) - \O(\epsilon)$ approximation with probability $1 - o(1).$
\end{proof}
\section{Missing Proofs from Section~\ref{sec:continuous}}
\label{sec:applayerone}

\paragraph{Discussion on constant step size $\lambda$.} In contrast to the continuous greedy, the accelerated continuous greedy uses \emph{constant} steps sizes $\lambda$ to guarantee low adaptivity.  The challenge with using constant $\lambda$ is that $F_{\bx}(\lambda S)$ is arbitrarily low with  $S := \argmax_{T \in \M} \sum_{a \in T} g(a)$ due to the overlap in value of elements $a$ with high individual value $g(a)$.

For example, consider ground set $N = A \cup B$ with $$f(S) = \min(\log n, |S \cap A|) + |S \cap B|,$$
 $\bx = \zeros$ and $S = A$. With $\lambda = 1/n$, we note that sampling $R \sim \lambda A$ where $R$ independently contains each element in $S$ with probability $1/n$  gives $|R| \leq \log n$ with high probability and  we get $F_{\bx}(\lambda A) = (1-o(1))|A|$, which is near-optimal for a set of size $|A|$. However, with constant $\lambda$, then sampling $R \sim \lambda A$
gives $|R| > \log n$ with high probability. Thus  $F_{\bx}(\lambda A) \leq \log(n)$ which is arbitrarily far from optimal for $|A| = |B| >> \log n$ since $F_{\bx}(\lambda B) = \lambda |B|$.

\lemmetaoneExp*
\begin{proof}
First, $\bx \in P$ since it is a convex combinations of $\lambda^{-1}$ vectors $\ones S$ with $S \in \M$. Next, let $\bx_i$ denote the solution $\bx$ at the $i$th iteration of \algcontinuous.
The algorithm increases the value of the solution $\bx$ by at least $(1- \epsilon) \cdot \lambda \cdot  \left( \OPT - F(\bx) \right) $ at every iteration. Thus,
$$F(\bx_i) \geq F(\bx_{i-1}) + (1- \epsilon)\cdot \lambda \cdot \left(\OPT - F(\bx_{i-1})\right).$$
Next, we show by induction on $i$ that
$$F(\bx_{i}) \geq \left( 1 - \left(1 - (1-\epsilon) \lambda \right)^i\right) \OPT.$$
Observe that
\begin{align*}
F(\bx_{i}) & \geq F(\bx_{i-1}) + (1-\epsilon) \lambda \left(\OPT - F(\bx_{i-1})\right) \\
& = (1-\epsilon) \lambda \OPT + \left(1- (1-\epsilon) \lambda \right) F(\bx_{i-1})\\
& \geq (1-\epsilon) \lambda  \OPT + \left(1- (1-\epsilon) \lambda \right)\left( 1 - \left(1 - (1-\epsilon) \lambda \right)^{i-1}\right) \OPT\\
& = \left( 1 - \left(1 - (1-\epsilon) \lambda \right)^i\right) \OPT
\end{align*}
Thus, with $i = \lambda^{-1}$, we return solution $\bx = \bx_{\lambda^{-1}}$ such that
$$F(\bx) \geq \left( 1 - \left(1 - (1-\epsilon) \lambda \right)^{\lambda^{-1}}\right) \OPT.$$
Next, since $ 1- x \leq e^{-x}$ for all $x \in \R$,
$\left(1 - (1-\epsilon) \lambda \right)^{\lambda^{-1}} \leq \left(e^{-(1-\epsilon)\lambda}\right)^{\lambda^{-1}} = e^{-(1-\epsilon)}.$ 
  We conclude that  
$$F(\bx) \geq \left(1 - e^{-(1-  \epsilon)}\right)\OPT = \left(1 - \frac{e^{ \epsilon}}{e}\right)\OPT \geq \left(1 - \frac{1 + 2 \epsilon}{e}\right)\OPT  \geq \left(1 - \frac{1}{e} - \epsilon \right)\OPT$$ where the second inequality is since $e^{x} \leq 1 + 2 x$ for $0 < x < 1$. 
\end{proof}

\lemmatroidO*
\begin{proof}
The proof is by reverse induction. For $i = k$, we have $S_i \cup O_{i+1:k} = S_k = S \in \M$ by Lemma~\ref{lem:invariants}. Consider $i < k$ and assume that $S_{i+1} \cup O_{i+2:k} \in \M$ for some ordering $o_{i+2}, \ldots, o_k$ of $O_{i+2:k}$. By the downward closed property of matroids, $S_{i} \cup O_{i+2:k} \in \M$. By the augmentation property of matroids, there exists $o_{i+1} \in O \setminus (S_{i} \cup O_{i+2:k})$ such that $S_{i} \cup O_{i+2:k} + o_{i+1} = S_{i} \cup O_{i+1:k} \in \M$. 
\end{proof}

\thmexpectation*
\begin{proof}
We use step size $\lambda = \O\left(\epsilon\right)$ for \algcontinuous \ and $\Delta = \O\left(\frac{1}{\epsilon}\log\left(\frac{k}{\epsilon \lambda}\right)\right)$ outer-iterations for \algone. Thus, by Lemma~\ref{lem:adaptivity}, the adaptivity is $\O\left(\frac{\Delta \log n}{\lambda\epsilon}\right) =  \O\left(\log(n)  \log\left(\frac{k}{\epsilon^2}\right)   \frac{1}{\epsilon^2}\right).$ By Lemma~\ref{lem:maincontinuous}, we have $\E[F_{\bx}(\delta S)] \geq (1 - \O(\epsilon))\lambda(\OPT - F(\bx))$ at every iteration $i$.  Combining with Lemma~\ref{lem:metaoneExp}, we obtain that  $\E[F(\bx)] \geq (1-e^{-1} - \O(\epsilon))\OPT$. 

It remains to round the solution $\bx$. We note that  there exist rounding schemes with arbitrarily small loss that are independent of the function $f$  \cite{chekuri2009dependent,vondrak2011submodular} (so they do not perform any queries to $f$). The set $S$ we obtain from rounding the solution $\bx$ returned by \algcontinuous \ with these techniques is thus a $1-1/e - \O(\epsilon)$ approximation with no additional adaptivity.
\end{proof}

\lemmetaone*
\begin{proof}
First, $\bx \in P$ since it is a convex combinations of $\lambda^{-1}$ vectors $\ones_S \in \M$. Next, let $\bx_i$ denote the solution $\bx$ at the $i$th iteration of \algcontinuous.
The algorithm increases the value of the solution $\bx$ by at least $\alpha_i \cdot \lambda \cdot  \left( \OPT - F(\bx) \right) $ at every iteration. Thus,
$$F(\bx_i) \geq F(\bx_{i-1}) + \alpha_i \cdot \lambda \cdot \left(\OPT - F(\bx_{i-1})\right).$$
Next, we show by induction on $i$ that
$$F(\bx_{i}) \geq \left( 1 - \prod_{j=1}^i\left(1 - \lambda \alpha_j \right)\right) \OPT.$$
Observe that
\begin{align*}
F(\bx_{i}) & \geq F(\bx_{i-1}) + \alpha_i \lambda \left(\OPT - F(\bx_{i-1})\right) \\
& = \alpha_i \lambda \OPT + \left(1- \alpha_i \lambda \right) F(\bx_{i-1})\\
& \geq \alpha_i \lambda \OPT + \left(1- \alpha_i \lambda \right)\left( 1 - \prod_{j=1}^{i-1}\left(1 - \lambda \alpha_j \right)\right) \OPT\\
& = \alpha_i \lambda \OPT + \left(1- \alpha_i \lambda  - \prod_{j=1}^{i}\left(1 - \lambda \alpha_j \right)\right) \OPT\\
& = \left(1  - \prod_{j=1}^{i}\left(1 - \lambda \alpha_j \right)\right) \OPT
\end{align*}
where the first inequality is by the assumption of the lemma, the second by the inductive hypothesis, and the equalities by rearranging the terms. 
Thus, with $i = \lambda^{-1}$, we return solution $\bx = \bx_{\lambda^{-1}}$ such that
$$F(\bx) \geq \left(1  - \prod_{j=1}^{\lambda^{-1}}\left(1 - \lambda \alpha_j \right)\right) \OPT.$$
Since $ 1- x \leq e^{-x}$ for all $x \in \R$,
\begin{align*}
1  - \prod_{j=1}^{\lambda^{-1}}\left(1 - \lambda \alpha_j \right)  \geq 1  - \prod_{j=1}^{\lambda^{-1}}e^{ - \lambda \alpha_j} 
 = 1  - e^{ - \lambda \sum_{j=1}^{\lambda^{-1}} \alpha_j } 
\geq 1  - e^{ - (1 - \epsilon)} \geq 1 - e^{-1} - 2\epsilon/e \geq 1- e^{-1} - \epsilon
\end{align*}
where the second inequality is since $e^{x} \leq 1 + 2x$ for $0 < x < 1$.
\end{proof}

\thmhp*
\begin{proof}
We use $\Delta = \O\left(\frac{1}{\epsilon}\log\left(\frac{k}{\epsilon \lambda}\right)\right)$ outer-iterations for \algone. Thus, by Lemma~\ref{lem:adaptivity}, the adaptivity is $\O\left(\frac{\Delta \log n}{\lambda\epsilon}\right) =  \O\left(\log(n)  \log\left(\frac{k}{\epsilon \lambda}\right)   \frac{1}{\epsilon \lambda}\right).$

By Lemma~\ref{lem:maincontinuous}, we have $F_{\bx}(\delta S) \geq \alpha_i\lambda(\OPT - F(\bx))$ at every iteration $i$ with $\E\left[\alpha_i\right] \geq 1 - \epsilon'$ where $\epsilon' = \O(\epsilon)$. By a Chernoff bound with $\E[\lambda \sum_{i \in \lambda^{-1}} \alpha_i] \geq 1-\epsilon'$, 
$$\Pr\left[\lambda \sum_{i \in [\lambda^{-1}]} \alpha_i < (1 - \epsilon)  (1-\epsilon')\right] \leq e^{-\epsilon^2(1-\epsilon') \lambda^{-1}/2}.$$
Thus, with probability $p = 1 - e^{-\epsilon^2(1-\epsilon') \lambda^{-1}/2}$, $\lambda \sum_{i \in [\lambda^{-1}]} \alpha_i \geq 1 - \epsilon - \epsilon'$.  By Lemma~\ref{lem:metaone}, we conclude that w.p. $p$, $F(\bx) \geq (1-e^{-1} - (\epsilon + \epsilon'))\OPT$. With step size $\lambda = O(\epsilon^2 / \log(1/\delta))$, we get that with probability $1 - \delta$, $F(\bx) \geq (1- e^{-1} - O(\epsilon))\OPT$. 

It remains to round the solution $\bx$. We note that  there exist rounding schemes with arbitrarily small loss that are independent of the function $f$  \cite{chekuri2009dependent,vondrak2011submodular} (so they do not perform any queries to $f$). The set $S$ we obtain from rounding the solution $\bx$ returned by \algcontinuous \ with these techniques is thus a $1-1/e - \O(\epsilon)$ approximation with no additional adaptivity.
\end{proof}
\section{Missing Analysis from Section~\ref{sec:matroid}}
\label{sec:appmatroid}

\subsection{Lower bound on steps of independence queries}
We first give the construction from  \cite{karp1988complexity}. The partition matroid has $p = n^{1/3}/ \log^2 n$ parts $P_1, \ldots, P_{p}$ of equal size $n^{2/3} \log^2 n$ and a set $S$ is independent if $|S \cap P_i| \leq i n^{1/3}  \log^2 n$ for all parts $P_i$. Informally, the hardness is since an algorithm cannot learn part $P_{i+1}$ in $i$ steps of independence queries. 

We  lower bound the  performance of any algorithm against a matroid chosen uniformly at random over all such partitions $P_1, \ldots, P_{p}$.

The issue with applying the approach in  \cite{karp1988complexity} is that when it considers a query $B$ at some step $j < i$, the analysis bounds the intersection of fixed query $B$ with uniformly random parts $P_i, \ldots, P_{p}$ of $N \setminus \cup_{j=1}^{i-1} P_j$. However, a query at step $j < i$ is not independent  of the randomization $P_i, \ldots, P_{p}$ over $N \setminus \cup_{j=1}^{i-1} P_j$. For example, consider a query $T$ at step $j-1$ such that its intersection with $N \setminus \cup_{j=1}^{i-2} P_j$ is of size  $in^{1/3} \log^2 n + 1$ and the oracle answers $S \in \M$. This implies that $T \not \subseteq P_i$ and thus for a fixed query $B$ at step $j$, $P_i$ is not a random part since it cannot be such that $T \not \subseteq P_i$. Instead, we use an approach which argues about independence of queries with some parts $P_i, \ldots, P_{p}$ under some conditioning on $P_i, \ldots, P_{p}$.

We introduce some notation. Let $\M(S)$ be the indicator variable for $S \in \M$. We denote by $S^j$ the elements in $S$ that are not in a part $P_i$ with $i < j$, i.e. $S^j := S \setminus \{P_1, \ldots, P_{j-1}\}$. We say that a set $S$, assuming $|S^i| \geq n^{1/3} \log^2 n$,  concentrates at step $i$ if 
$$|S^i \cap P_j| \leq  \frac{(1+1/8i)|S^i|}{p-i}$$
for all $j > i$ and we use $c(S, i)$ for the indicator variable for $S$ concentrating at step $i$. Finally, $I(S, i)$ indicates if, for some $P_1, \ldots, P_{i} $, the answer $\M(S)$ of the independence oracle $\M$ to query $S$ is independent of the randomization of parts $P_{i+1}, \ldots, P_{p}$ over $N \setminus \cup_{j=1}^iP_j$. The main lemma is the following.

\begin{lemma}
\label{lem:ind}
For any $i \in [p]$, with probability at least $1 - n^{-\Omega(\log n)}$ over $P_1, \ldots, P_{i}$, for all queries $S$ by an algorithm with $i$ steps of  queries, the answer $\M(S)$ of the oracle is independent of $P_{i+1}, \ldots, P_{p}$, conditioned on $S$ concentrating over these parts, i.e., for all queries $S$ at step $i$
$$\Pr_{P_1, \ldots, P_{i}}\left[I(S, i) | c(S, i)\right] \geq 1 - n^{-\Omega(\log n)}.$$
\end{lemma}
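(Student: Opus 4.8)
The plan is to prove Lemma~\ref{lem:ind} by induction on $i$, strengthening the statement so that, on the same $1-n^{-\Omega(\log n)}$ event, the entire transcript of oracle answers through step $i$ — read under the conditioning on the relevant concentration events $c(\cdot,\cdot)$ — is a deterministic function of $P_1,\dots,P_i$ alone; consequently the $\mathrm{poly}(n)$ queries issued at step $i+1$ form a fixed family $\mathcal{S}_{i+1}$ of sets that do not depend on $P_{i+1},\dots,P_p$. The base case is immediate since the step-$1$ queries are chosen before any oracle access. For the inductive step, condition on a realization of $P_1,\dots,P_{i-1}$ in the good event for step $i-1$; then each $S\in\mathcal{S}_i$ is a fixed set, and in particular $|S^i|$ is a fixed number.

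The core of the argument is then, for a fixed $S\in\mathcal{S}_i$, to show that $I(S,i)$ holds whenever $c(S,i)$ holds, via a case analysis on $|S^i|$. Recall $\M(S)=1$ iff $|S\cap P_j|\le j\, n^{1/3}\log^2 n$ for every $j$; the constraints with $j\le i$ are functions of $P_1,\dots,P_i$, so only the constraints with $j>i$ can introduce dependence on $P_{i+1},\dots,P_p$, and among those the binding one is $j=i+1$. If $|S^i|$ lies below the threshold $\approx (i+1)(p-i)\,n^{1/3}\log^2 n$, then conditioned on $c(S,i)$ we have $|S\cap P_j|\le (1+\tfrac{1}{8i})|S^i|/(p-i)\le j\,n^{1/3}\log^2 n$ for all $j>i$, so these constraints are automatically satisfied and $\M(S)$ reduces to a function of $P_1,\dots,P_i$. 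If $|S^i|$ is above the threshold, then using the lower-bound half of the spread-out condition (and/or a pigeonhole bound against the total capacity $\sum_{j>i} j\,n^{1/3}\log^2 n$ of the unrevealed parts) at least one constraint $j>i$ is necessarily violated, so $\M(S)=0$ regardless of $P_{i+1},\dots,P_p$. Either way, conditioned on $c(S,i)$ the answer is pinned down by $P_1,\dots,P_i$, which is $I(S,i)$.

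Next I need the conditioning to be essentially costless: for a fixed $S$, since $P_i,\dots,P_p$ is a uniformly random balanced partition of $N\setminus(P_1\cup\dots\cup P_{i-1})$ that is independent of $S$, a Chernoff/Hoeffding bound for sampling without replacement gives $\Pr[\neg c(S,i)]\le n^{-\Omega(\log n)}$ — the $\log^2 n$ factor built into the number of parts $p=n^{1/3}/\log^2 n$, hence into the per-part means, is precisely what drives the deviation probability down to $n^{-\Omega(\log n)}$. A union bound over the $\mathrm{poly}(n)$ sets in $\mathcal{S}_i$, the inductive hypothesis, and a sum over the at most $p$ steps then yield the stated bound and carry the strengthened hypothesis forward.

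The main obstacle — and the reason the concentration events are introduced at all, as the discussion preceding the lemma notes — is reconciling conditioning with independence: an earlier oracle answer can rule out placements of later parts, so past queries are genuinely correlated with $P_{i+1},\dots,P_p$, and one cannot naively treat the step-$i$ query family as independent of the unrevealed parts. The fix is to carry the conditioning on all $c(\cdot,\cdot)$ events through the induction and verify that, under it, every answer seen so far is a function of the already-revealed parts only, so that relative to the next batch of queries the unrevealed parts remain a fresh uniform balanced partition. The delicate bookkeeping is (a) calibrating the window $(1\pm\tfrac{1}{8i})$ and the threshold on $|S^i|$ so that the two regimes of the case analysis jointly cover every value of $|S^i|$ for every $i\le p$, leaving no band in which $\M(S)$ honestly depends on the unrevealed parts, and (b) checking that all accumulated failure probabilities remain $n^{-\Omega(\log n)}$ after the union bounds over queries and steps.
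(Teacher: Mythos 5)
Your case analysis breaks down in the ``above threshold'' regime. You claim that when $|S^i|$ is large, some constraint $j>i$ is \emph{necessarily} violated conditioned on $c(S,i)$, invoking either ``the lower-bound half of the spread-out condition'' or a pigeonhole bound against the total capacity $\sum_{j>i} j\,n^{1/3}\log^2 n$ of the unrevealed parts. Neither works. The paper's concentration event $c(S,i)$ is defined purely as an upper bound, $|S^i\cap P_j|\le (1+\tfrac{1}{8i})|S^i|/(p-i)$; there is no lower-bound half, and an upper bound on $|S^i\cap P_j|$ cannot by itself certify a violation $|S\cap P_j|>j\,n^{1/3}\log^2 n$. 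The pigeonhole route also leaves a wide gap: the total capacity of the unrevealed parts is $\sum_{j=i+1}^{p} j\,n^{1/3}\log^2 n=\Theta\bigl((p-i)(p+i)\bigr)n^{1/3}\log^2 n$, which for small $i$ is a factor $\Theta(p/i)$ larger than your ``below'' threshold $(i+1)(p-i)n^{1/3}\log^2 n$. So there is an entire band of values of $|S^i|$ where your below-threshold argument does not apply and the pigeonhole argument does not apply either, and for those $S$ your write-up gives no reason why $\M(S)$ should be determined by $P_1,\ldots,P_i$.

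The paper closes this case differently, and the difference is essential: it uses the randomness of the \emph{just-revealed} part $P_i$, not a deterministic consequence of $c(S,i)$. When $|S^i|$ exceeds roughly $(1+\tfrac{1}{4i})\,i\,n^{1/3}(p-i)\log^2 n$, the expectation of $|S^i\cap P_i|$ over a uniformly random $P_i$ (drawn from $N\setminus\bigcup_{\ell<i}P_\ell$) is at least $(1+\Omega(1/i))\,i\,n^{1/3}\log^2 n$, so a Chernoff bound gives $|S\cap P_i|>i\,n^{1/3}\log^2 n$ with probability $1-n^{-\Omega(\log n)}$ over $P_i$. That violates constraint $j=i$, which depends only on $P_1,\ldots,P_i$; hence $\M(S)=0$ is pinned down by the revealed parts. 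Note this is a probabilistic statement over $P_i$ (one of the revealed parts), contributing to the lemma's $n^{-\Omega(\log n)}$ failure probability, rather than a pointwise consequence of the concentration event as your proof attempt presupposes. Your remaining structure --- the induction over steps, the below-threshold case, the Chernoff bound giving $\Pr[\neg c(S,i)]\le n^{-\Omega(\log n)}$, and the union bounds over $\mathrm{poly}(n)$ queries and $p$ steps --- does match the paper, so replacing the above-threshold argument with the paper's use of $P_i$ would complete the proof.
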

\begin{proof}
The proof is by induction on $i$. Consider $i > 0$ and assume that for all queries $S$ at step $j < i$, $\Pr_{P_1, \ldots, P_{j}}\left[I(S, j) | c(S, j)\right] \geq 1 - n^{-\Omega(\log n)}$. By a union bound, with probability $1 - n^{-\Omega(\log n)}$, this holds for all queries at all steps $j < i$ and we assume this is the case.

Consider some set $S$ and a part $P_j$ which is a uniformly random subset of $N \setminus \cup_{\ell=1}^{i-1}P_\ell$, with $j \geq i$. Then $\E_{P_j}\left[|S^i \cap P_j|\right] = |S^i| / (p-i)$ and by Chernoff bound we get that 
\begin{align*}
\Pr_{P_j}\left[|S^i \cap P_j| \leq (1+1/8i) |S^i| / (p-i)\right] & \geq 1 - e^{-(1/8i)^2  |S^i| / 2 (p-i)} \\
& \geq 1 - e^{-(1/8i)^2 n^{1/3} \log^2 n / 2 } \\
& =  1 - e^{ - \Omega(\log^2n)}\\
& \geq 1 - n^{-\Omega(\log n)},
\end{align*}
assuming that $|S^i| / (p-i) \geq n^{1/3} \log^2 n$ and since $i \leq n^{1/3}$. By a union bound, for all queries $S$ at some step $j < i$, $|S^i \cap P_\ell| \leq (1+1/8i) |S^i| / (p-i)$ for all $j < \ell < i$ with probability $1 - n^{-\Omega(\log n)}$ over the randomization of $P_1, \ldots, P_{i-1}$ and we assume this is the case.

The answer of query $S$ at step $j < i$ is independent of the randomization of parts $P_{j+1}, \ldots, P_{p}$ over $N \setminus \cup_{\ell=1}^jP_\ell$ conditioned on these parts concentrating. Since we assumed parts $P_{j+1}, \ldots, P_{i-1}$ concentrate with $S$, it is  independent of the randomization of  parts $P_{i}, \ldots, P_{p}$ over $N \setminus \cup_{\ell=1}^{i-1}P_\ell$ conditioned on these parts concentrating.

Thus, the decision of the algorithm to query a set $S$ at step $i$ is independent of  the randomization of $P_i, \ldots, P_{p}$, conditioned on these parts concentrating with previous queries. 

We first consider a uniformly random part $P_i$ over $N \setminus \cup_{\ell=1}^{i-1}P_\ell$. There are two cases for a query $S$ at step $i$
\begin{itemize}
\item If $|S^i| > (1 + 1/4i) i  n^{1/3}(p-i)$. Then  by Chernoff bound with $\mu = \E_{P_i}\left[|S^i \cap P_i|\right]  = (1 + 1/4i) i  n^{1/3}\log^2 n$, 
$$\Pr_{P_i}\left[|S^i \cap P_i| \leq i n^{1/3} \log^2 n\right] =  n^{-\Omega(\log n)}.$$ Thus, with probability $1 - n^{-\Omega(\log n)}$, $S \not \in \M$ and this is independent of the randomization of $P_{i+1}, \ldots, P_{p}$.
\item If $|S^i| \leq (1+1/4i) i n^{1/3}(p-i)$. Then, if $S$ concentrates with parts $P_{i+1}, \ldots, P_{p}$, by definition, $|S^i \cap P_j| \leq (1+1/8i) \frac{|S^i|}{p-i}$ and 
$$|S^i \cap P_j| \leq (1+1/4i)(1+1/8i)  i n^{1/3}\log^2 n < (i + 3/4) n^{1/3} \log^2 n < j  n^{1/3} \log^2 n$$
for $j > i$ and $S$ is feasible with respect to part $P_j$. So $\M(S)$ is independent of the randomization of $P_{i+1}, \ldots, P_{p}$, conditioned on $c(S,i)$.
\end{itemize}
The last piece needed from $P_i$ is that, due to the conditioning, it must concentrate with all queries from previous steps.
 As previously, this is the case with probability $1 - n^{-\Omega(\log n)}$. Combined with the above, we obtain
$\Pr_{P_1, \ldots, P_{i}}\left[I(S, i) | c(S, i)\right] \geq 1 - n^{-\Omega(\log n)}$.
\end{proof}

\thmlowerbound*
\begin{proof}
Consider a uniformly random partition of the ground set in parts $P_1, \ldots, P_{p}$ with $p = n^{1/3} / \log^2 n $ each of size $n^{2/3} \log^2 n$, the partition matroid overt these parts described previously, and the simple function $f(S) = |S|$. By a similar Chernoff bound as in Lemma~\ref{lem:ind}, we have that $\Pr_{P_j}\left[|S^i \cap P_j| \leq (1+1/8i) |S^i| / (p-i)\right] \geq 1 - n^{-\Omega(\log n)}$ for all queries at $S$ at step $i$ and $j > i$. Thus $\Pr[c(S,i)] \geq 1 - n^{-\Omega(\log n)}$ for query $S$ at step $i$ and by a union bound this holds for all queries by the algorithm. Thus, by Lemma~\ref{lem:ind}, we have that for all queries $S$ by an $p / (4 \alpha)- 1$  steps algorithm, the answer of the oracle to query $S$ is independent of the randomization of $P_{p / (4 \alpha)}, \ldots, P_{p}$ with probability $1 - n^{-\Omega(\log n)}$, conditioned on these parts concentrating with the queries, which they do with probability $1 - n^{-\Omega(\log n)}$. 

Thus, the solution $S$ returned by the algorithm after $p / (4 \alpha) - 1$  steps of matroid queries is independent of the randomization of $P_{p / (4 \alpha)}, \ldots, P_{p}$ with probability $1 - n^{-\Omega(\log n)}$, conditioned on these parts concentrating with the queries.

Assume the algorithm returns $S$ such that $|S^{p / (4 \alpha)}| > (1+1/8n^{1/3})(1-1/(4\alpha)) p n^{2/3} \log^2 n / (4\alpha).$ Thus, with $P_{p/ (4 \alpha)+1}$ a random part of $N \setminus \cup_{\ell=1}^{p / (4 \alpha) - 1}P_\ell$, $$\E_{P_{p / (4 \alpha)}}\left[|S \cap P_{p / (4 \alpha)+1}|\right] =  (1+1/8n^{1/3})  n^{2/3} \log^2 n/(4 \alpha).$$ By Chernoff bound, we have that with probability $1 - n^{-\Omega(\log n)}$, 
$$\Pr_{P_{p / (4 \alpha)}}\left[|S \cap P_{p / (4 \alpha)}| >   n^{2/3} \log^2 n /(4 \alpha)\right] \geq 1 - n^{-\Omega(\log n)}$$
and thus $S \not \in \M$.

If the algorithm returns $S$ such that $|S^{p / (4 \alpha)}| \leq (1+1/8n^{1/3})(1-1/(4\alpha)) p n^{2/3} \log^2 n / (4\alpha).$ Then, if $S \in \M$, there are at most $p/(4\alpha) \cdot n^{1/3} \log^2(n) p /(4\alpha)$ elements in $S$ from the first $p/(4\alpha)$ parts. Thus $$|S| \leq  (1+1/8i)(1-1/(4\alpha)) n / (4\alpha) + p/(4\alpha) \cdot n^{1/3} \log^2(n) p/(4\alpha) < n/(2 \log^2(n) \alpha).$$

Note that a base $B$ for the matroid has size 
$$|B| = \sum_{i=1}^{p} i n^{1/3}\log^2n = n^{1/3} n^{1/3}(n^{1/3}/\log^2 n +1)/2 > n / (2\log^2 n).$$

The parts $P_{n^{1/3} / (4 \alpha)}, \ldots, P_{p}$ concentrate with all the queries with probability $1 - n^{-\Omega(\log n)}$. Thus, the algorithm returns, with probability $1 - n^{-\Omega(\log n)}$, either a set $S \not \in \M$ or a set $S$ such that $|S| < |B| / \alpha$. Thus there is at least one instance of parts $P_1, \ldots, P_{p}$ such that the algorithm does not obtain an $\alpha$ approximation with probability strictly greater than $n^{-\Omega(\log n)}$.
\end{proof}

\subsection{An algorithm with  $\tilde{O}(\sqrt{n})$ steps of independence queries}

We show that  Algorithm~\ref{alg:3} satisfies the random feasibility condition required by \algone.

\begin{lemma}
\label{lem:feasibility1}
 Algorithm~\ref{alg:3} satisfies the random feasibility condition.
\end{lemma}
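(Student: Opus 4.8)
The plan is to show that the output of Algorithm~\ref{alg:3} has exactly the same distribution as the sequence produced by the following ``ideal'' sampler $\mathcal{P}$: maintain a set $Z$, initially $\emptyset$, and while the set of feasible extensions $C(Z) := \{a \in N\setminus Z : Z\cup a\in\M\}$ is nonempty, append a uniformly random element of $C(Z)$ to both the output and to $Z$. By construction $\mathcal{P}$ outputs a random feasible sequence: it always halts with $Z$ a maximal independent set, i.e.\ a base of size $\rank(\M)$, and each new element was drawn uniformly from the current extension set. So it suffices to argue that the grouping of appended elements into iterations of the \textbf{while} loop of Algorithm~\ref{alg:3} is purely cosmetic --- element by element, Algorithm~\ref{alg:3} makes exactly the draws $\mathcal{P}$ makes. (Here and below we read the update $X \leftarrow \{a\in X : \{a_1,\dots,a_c,a\}\in\M\}$ as also removing the elements just placed in the sequence, so that inside the loop $X$ always denotes $C(\{a_1,\dots,a_c\})$; this is the intended meaning, since the goal is to build a base.)

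First I would record two elementary matroid facts. Fix one \textbf{while} iteration starting from $Y = \{a_1,\dots,a_c\}$ with frozen set $X = C(Y)$ (or $X = N$ in the first iteration), and reveal its random permutation $b_1,b_2,\dots$ of $X$ one entry at a time. \textbf{(i)} Since the dependent sets are upward closed, $i^\star$ equals the smallest $\ell$ with $b_{\ell+1}\notin C(Y\cup\{b_1,\dots,b_\ell\})$; hence every appended $b_\ell$ (for $\ell\le i^\star$) genuinely extends $Y\cup\{b_1,\dots,b_{\ell-1}\}$, and nothing is appended after the first ``failed'' reveal. \textbf{(ii)} Since $\M$ is downward closed, $Z\supseteq Y$ implies $C(Z)\subseteq C(Y)$, and together with $b_\ell\notin Y\cup\{b_1,\dots,b_{\ell-1}\}$ this gives $C(Y\cup\{b_1,\dots,b_\ell\})\subseteq X\setminus\{b_1,\dots,b_\ell\}$ for every $\ell\le i^\star$.

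Then I would induct over the appended elements, conditioning on the entire history of the algorithm up to the moment just before the next element is appended; let $Z$ be the set appended so far, with $C(Z)\neq\emptyset$ (otherwise the loop has already halted). If a new \textbf{while} iteration is about to begin, its frozen set is $X = C(Z)$, its first reveal $b_1$ is uniform over $C(Z)$, and by \textbf{(i)} it is appended --- exactly $\mathcal{P}$'s move. If instead we are mid-iteration, having already appended $b_1,\dots,b_\ell$ from that iteration's frozen $X$ (so $Z = Y\cup\{b_1,\dots,b_\ell\}$), then $b_{\ell+1}$ is uniform over $X\setminus\{b_1,\dots,b_\ell\}$, a set that by \textbf{(ii)} contains $C(Z)$; with probability $|C(Z)|/|X\setminus\{b_1,\dots,b_\ell\}|$ it lands in $C(Z)$, and a uniform draw conditioned on lying in a subset is uniform on that subset, so in this event $b_{\ell+1}$ is uniform over $C(Z)$ and is appended; with the complementary probability nothing is appended (by \textbf{(i)}), the iteration ends, and we fall back into the previous case with the same $Z$. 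In every case the next appended element is, conditionally on the history, uniform over $C(Z)$, and since $C(Z)$ depends only on the elements appended so far, this is exactly the random feasible sequence property; the algorithm halts precisely when $Z$ becomes a base.

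The step I expect to be the main obstacle is this last case analysis --- in particular, checking that freezing $X$ for the duration of a \textbf{while} iteration (rather than refreshing the extension set after each append) does not bias the conditional law of the next extension, and that ``wasted'' reveals never cause an element to be appended out of order. Both points are precisely what facts \textbf{(i)} and \textbf{(ii)} deliver: downward closure gives $C(Z)\subseteq X\setminus\{b_1,\dots,b_\ell\}$, so conditioning a uniform draw from the frozen set on the event that it extends $Z$ returns a uniform element of $C(Z)$; and upward closure of the dependent sets guarantees that once a reveal fails the iteration is over, so the appended block is exactly a maximal independent prefix and the glued sequence never repeats an element.
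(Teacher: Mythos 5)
Your proof is correct and takes essentially the same approach as the paper: the core step in both is that $b_j$ (conditioned on the history) is uniform over $X\setminus\{b_1,\dots,b_{j-1}\}$, the set of feasible extensions of $\{a_1,\dots,a_{i-1}\}$ is a subset of $X\setminus\{b_1,\dots,b_{j-1}\}$ by downward closedness, and conditioning a uniform draw on landing in a subset gives a uniform draw from the subset. You package this more carefully: you set up the equivalence with the ideal element-by-element sampler, you make the conditioning on the full history explicit, and you record the two matroid facts (upward closure of dependence gives that $i^\star$ is exactly the first failure; downward closure gives the containment). You also flag the needed interpretive fix that placed elements must be regarded as removed from $X$ --- the paper's pseudocode, read literally, leaves $\{a_1,\dots,a_c\}\subseteq X$ after the update, and the paper's proof silently assumes your fix as well. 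Your version is somewhat longer but cleanly eliminates the conditioning subtlety that the paper's proof glosses over when it writes ``Conditioned on $i^\star\ge j$.''
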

\begin{proof}
Consider $a_i$ for $i \leq c$. By the algorithm, we have $a_i = b_j$ for some $j \leq i^{\star}$ at some iteration $\ell$ with $b_1, \ldots, b_{|X_{\ell}|}$ and $c_{\ell}$. By the definition of $b_j$ and $X$, we have that $b_j$ is a uniformly random elements among all elements $X_\ell \setminus \{b_1, \ldots, b_{j-1}\}.$ Conditioned on $i^{\star} \geq j$, we have that $\{a_1, \ldots, a_{c_\ell}, b_1, \ldots, b_j\} \in \M$. By the downward closed property of matroids, $\{a \in X :   \{a_1, \ldots, a_{c_\ell}, b_1, \ldots, b_{j-1}, a\} \in \M\} \subseteq X_\ell \setminus \{b_1, \ldots, b_{j-1}\}$. Thus $b_j = a_i$ is uniformly random over all $a \in X$ such that $\{a_1, \ldots, a_{c_\ell}, b_1, \ldots, b_{j-1}, a\} = \{a_1, \ldots, a_{i-1}, a\}\in \M.$
\end{proof}

\cite{karp1988complexity} showed that this  algorithm has $O(\sqrt{n})$ iterations.

\begin{lemma}[\cite{karp1988complexity}]
\label{lem:roundsind}
 Algorithm~\ref{alg:3} has $O(\sqrt{n})$ steps of independence queries.
\end{lemma}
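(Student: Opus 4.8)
The plan is to bound the number of steps of independence queries as the product of the number of iterations of the \textbf{while} loop and the number of query steps spent inside a single iteration, to show the latter is $\O(1)$, and to invoke the analysis of~\cite{karp1988complexity} --- of whose matroid-base-construction routine Algorithm~\ref{alg:3} is a restatement --- for the former. For the per-iteration cost: given the current independent prefix $a_1,\dots,a_c$ and the random permutation $b_1,\dots,b_{|X|}$ of $X$, I would query in parallel, in a single step, whether $\{a_1,\dots,a_c\}\cup\{b_1,\dots,b_i\}\in\M$ for every $i\in[|X|]$. By the downward closed property, if $\{a_1,\dots,a_c\}\cup\{b_1,\dots,b_i\}\in\M$ then $\{a_1,\dots,a_c\}\cup\{b_1,\dots,b_j\}\in\M$ for every $j\le i$, so the vector of answers is a block of positive answers followed by negative ones, and $i^{\star}$ is the length of that positive block, read off with no further queries. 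Computing the new candidate set $X\leftarrow\{a\in X:\{a_1,\dots,a_{c+i^{\star}},a\}\in\M\}$ is one more step of at most $n$ parallel queries, and the remaining operations of the iteration (sampling the permutation, appending $b_1,\dots,b_{i^{\star}}$ to the output, updating $c$) make no queries. Hence each iteration uses $\O(1)$ steps of independence queries.

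It remains to bound the number of iterations of the \textbf{while} loop by $\O(\sqrt n)$, which is exactly the bound of~\cite{karp1988complexity}; I recall the intuition. Fix an iteration, let $X$ be the current candidate set with $m:=|X|$, and let $\rho$ be the rank of the contracted matroid $\M(\{a_1,\dots,a_c\},X)$, so that $\rank(\{a_1,\dots,a_c\}\cup X)=c+\rho$. Two facts are immediate: the set of elements deleted from $X$ in this iteration is exactly the set of $a\in X$ lying in the span of $\{a_1,\dots,a_c,b_1,\dots,b_{i^{\star}}\}$; and $\rho$ decreases by exactly $i^{\star}\ge1$, so the iterations with $i^{\star}\ge\sqrt n$ number at most $k/\sqrt n\le\sqrt n$. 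The crux of~\cite{karp1988complexity} is that when $i^{\star}$ is small the set $X$ must instead shrink a lot: an element that first becomes spanned after only a short random prefix lies, together with its predecessors, in a ``heavy'' flat --- in expectation, an element redundant within the first $j$ positions of a uniformly random ordering of $m$ elements lies together with its predecessors in a flat containing $\Omega(m/j)$ elements of $X$ --- and balancing this against the rank drop at the $\sqrt{\cdot}$ scale yields $\E[\,|X_{t+1}|\mid X_t\,]\le |X_t|-\Omega(\sqrt{|X_t|})$. Applying this to $\psi_t:=\sqrt{|X_t|}$, concavity of the square root gives $\E[\psi_{t+1}\mid\psi_t]\le\psi_t-\Omega(1)$ while $|X_t|$ is not yet $\O(1)$, so the loop terminates after $\O(\sqrt n)$ iterations in expectation, and via a standard martingale tail bound with high probability.

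Multiplying the two bounds, Algorithm~\ref{alg:3} makes $\O(\sqrt n)\cdot\O(1)=\O(\sqrt n)$ steps of independence queries, as claimed. The main obstacle is the probabilistic estimate $\E[\,|X_{t+1}|\mid X_t\,]\le|X_t|-\Omega(\sqrt{|X_t|})$: quantifying how the length $i^{\star}$ of the random independent run trades off against the size of the flat it spans is the technical heart of the statement, and is where the heavy-flat / birthday-type argument of~\cite{karp1988complexity} is needed; the surrounding steps are bookkeeping. Together with Lemma~\ref{lem:feasibility1}, which checks that Algorithm~\ref{alg:3} produces a random feasible sequence, this is what licenses using Algorithm~\ref{alg:3} as the \blackbox \ subroutine of \algone.
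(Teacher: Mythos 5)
The paper does not actually prove this lemma --- it is stated as a direct citation to~\cite{karp1988complexity}, and Algorithm~\ref{alg:3} is (as you note) that paper's parallel matroid-base routine. Your proposal is therefore more detailed than the paper's ``proof,'' but it takes the same route: delegate the iteration count to the Karp--Upfal--Wigderson analysis and observe separately that each iteration costs $\O(1)$ parallel query steps. Your per-iteration accounting (one parallel step of prefix queries to read off $i^{\star}$ as the length of the initial run of ``yes'' answers, which is a prefix by downward-closedness, plus one parallel step to recompute $X$) is correct and is worth making explicit, since the paper leaves it implicit. Your sketch of the $\O(\sqrt n)$ iteration bound captures the right dichotomy (large $i^{\star}$ costs rank, small $i^{\star}$ must discard many elements), though as you yourself flag, the quantitative claim $\E[\,|X_{t+1}|\mid X_t\,]\le|X_t|-\Omega(\sqrt{|X_t|})$ and the case-combination via a potential function are exactly the content that the paper outsources to~\cite{karp1988complexity} rather than reproving; a fully rigorous version would need to state the bound conditionally on $i^{\star}$ being small and then merge the two cases (rank drop versus $|X|$ drop), but this is in the spirit of the cited result and not a gap relative to what the paper itself establishes.
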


\thmindependence*
\begin{proof}
By Theorem~\ref{thm:comb}, \algoneplus \ is a $1/2-\epsilon$ approximation algorithm with $O(\log(n) \log(k))$ adaptivity if \blackbox \ satisfies the random feasibility condition, which Algorithm~\ref{alg:3} does by Lemma~\ref{lem:feasibility1}. Since there are $O(\log(n) \log(k))$ iterations of calling \blackbox \ and \blackbox \ has  $O(\sqrt{n})$ steps of independence queries by Lemma~\ref{lem:roundsind}, there are $O(\sqrt{n}\log(n)\log(k))$ total steps of independence queries.
\end{proof}

\subsection{An algorithm with $O(\log(n)\log(k))$ steps of rank queries}

\begin{lemma}
\label{lem:feasibility2}
 Algorithm~\ref{alg:rank} satisfies the random feasibility condition.
\end{lemma}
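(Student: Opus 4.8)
The goal is to show that the sequence $a_1,\ldots,a_\ell$ returned by Algorithm~\ref{alg:rank} is a random feasible sequence, i.e.\ that for every $i$, conditioned on the prefix $(a_1,\ldots,a_{i-1})$, the element $a_i$ is uniform over $\{a : \{a_1,\ldots,a_{i-1},a\}\in\M\}$. Since $\{a_1,\ldots,a_{i-1}\}$ is independent, this set equals $\{a_1,\ldots,a_{i-1}\}\cup\bigl(N\setminus\mathrm{cl}(\{a_1,\ldots,a_{i-1}\})\bigr)$, where $\mathrm{cl}$ denotes the matroid closure (span); as the sequence consists of distinct elements, it suffices to prove that $a_i$ is uniform over $N\setminus\mathrm{cl}(\{a_1,\ldots,a_{i-1}\})$ given $(a_1,\ldots,a_{i-1})$. (Note also $\ell=r_{|N|}=\rank(\M)$, so the output has the correct length.)

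First I would record a pair of elementary facts about running the greedy procedure on the random permutation $b_1,\ldots,b_n$. Writing $A_{j}$ for the set of elements selected among $b_1,\ldots,b_j$, the element $b_j$ is selected exactly when $r_j>r_{j-1}$, i.e.\ when $b_j\notin\mathrm{cl}(\{b_1,\ldots,b_{j-1}\})=\mathrm{cl}(A_{j-1})$; the last equality holds because $A_{j-1}\subseteq\{b_1,\ldots,b_{j-1}\}\subseteq\mathrm{cl}(A_{j-1})$, which one checks by induction using that a skipped element lies in the closure of the set selected so far. Two consequences follow. \textbf{(i)} Every $b_j$ at a position at or before the position of $a_{i-1}$ lies in $\mathrm{cl}(\{a_1,\ldots,a_{i-1}\})$ (a selected such element is one of $a_1,\ldots,a_{i-1}$; a skipped one lies in $\mathrm{cl}(A_{j-1})$ for some $A_{j-1}\subseteq\{a_1,\ldots,a_{i-1}\}$). \textbf{(ii)} Hence $a_i$ is simply the first entry of the permutation $b$ that lies outside $C:=\mathrm{cl}(\{a_1,\ldots,a_{i-1}\})$.

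The main step is then a symmetry argument. Fixing an independent tuple $(x_1,\ldots,x_{i-1})$ and setting $C=\mathrm{cl}(\{x_1,\ldots,x_{i-1}\})$, I would use (ii) at every step $t\le i-1$ to characterize the event $\{(a_1,\ldots,a_{i-1})=(x_1,\ldots,x_{i-1})\}$ as the conjunction, over $t\in[i-1]$, of ``$x_t$ precedes in $b$ every other element of $N\setminus\mathrm{cl}(\{x_1,\ldots,x_{t-1}\})$''. Since $x_t\in C$ and $N\setminus C\subseteq N\setminus\mathrm{cl}(\{x_1,\ldots,x_{t-1}\})$ for every $t$, each of these constraints only asserts that a particular element of $C$ precedes certain other elements; in particular none of them constrains the relative order of two elements of $N\setminus C$, nor places an element of $N\setminus C$ before anything. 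Consequently, for any permutation $\sigma$ of $N$ that fixes $C$ pointwise and permutes $N\setminus C$ arbitrarily, applying $\sigma$ entrywise to $b$ preserves the event, so the conditional law of $b$ given the prefix is $\sigma$-invariant. Because $a_i$ is, by (ii), the $\sigma$-equivariant functional ``first entry of $b$ in $N\setminus C$'', this invariance forces $a_i$ to be uniform over $N\setminus C$ conditioned on $(a_1,\ldots,a_{i-1})$, as desired. (The case $i=1$ is the same statement with $C=\mathrm{cl}(\emptyset)$, the set of loops, and matches $\{a:\{a\}\in\M\}$.)

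I expect the only delicate point to be making facts (i) and (ii) and the event characterization fully rigorous: all three rest on monotonicity of the closure along the run — that $\mathrm{cl}(A_j)$ is nondecreasing in $j$ and that every element seen so far lies in the current closure — so that an element skipped early is in $\mathrm{cl}(A_{i-1})$ even though, when it was skipped, only a subset of $\{a_1,\ldots,a_{i-1}\}$ had yet been selected. This is a short induction on $j$, after which the symmetry step is essentially the standard observation that the first entry of a uniformly random permutation lying in a fixed set is uniform over that set, carried out under the conditioning.
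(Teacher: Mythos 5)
Your proposal is correct, and it takes a recognizably different route from the paper's proof. Both arguments rest on the same combinatorial identity—that the set of elements whose addition would increase the rank of the observed prefix $\{b_1,\ldots,b_{j-1}\}$ coincides with the set of unseen elements $a$ with $\{a_1,\ldots,a_{i-1},a\}\in\M$, i.e.\ $N\setminus\mathrm{cl}(\{a_1,\ldots,a_{i-1}\})$—but you reach it via closure monotonicity/idempotence and the fact that a basis spans, whereas the paper proves both inclusions directly from the augmentation property and rank arithmetic. The more substantive difference is in the probabilistic step. The paper conditions on the observed prefix $(b_1,\ldots,b_{j-1})$ of the permutation (with $j$ the position of $a_i$), asserts $b_j$ is uniform over $N\setminus\{b_1,\ldots,b_{j-1}\}$, and then identifies the relevant subset; this is concise but leaves implicit the bookkeeping that $j$ and $(b_1,\ldots,b_{j-1})$ are themselves random and that one must average over all such prefixes consistent with a given selected $(a_1,\ldots,a_{i-1})$. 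Your approach sidesteps that by characterizing the event $\{(a_1,\ldots,a_{i-1})=(x_1,\ldots,x_{i-1})\}$ purely in terms of precedence constraints and observing that it is invariant under any relabeling of $N\setminus C$, from which uniformity of $a_i$ over $N\setminus C$ follows by equivariance. The symmetry argument is cleaner about the conditioning and is self-contained; the paper's is shorter and closer to the standard random-greedy intuition but would need a sentence or two to be fully rigorous about the averaging over $j$. Both establish exactly the required conditional uniformity.
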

\begin{proof}
Consider $a_i$ with $i \leq \ell$. Then $a_i = b_j$ for some $j \in |N|$ such that $r_j = r_{j-1} + 1$. Since $b_1, \ldots, b_{|N|}$ is a random permutation, $b_j$ is uniformly random elements in $N \setminus \{b_1, \ldots, b_{j-1}\}$. 
We argue that $\{a : \rank(\{b_1, \ldots,  b_{j-1}, a\}) - r_{j-1} = 1\}$ is the set of all elements $a = b_\ell$ for some $\ell \geq j$ such that $\{a_1, \ldots, a_{i-1}, a\} \in \M$ by induction.

We first show that if $\rank(\{b_1, \ldots,  b_{j-1}, a\}) - r_{j-1} = 1$ then $\{a_1, \ldots, a_{i-1}, a\} \in \M$. By the algorithm, $\rank(\{b_1, \ldots,  b_{j-1}\}) = i - 1$ and by the inductive hypothesis, $\{a_1, \ldots, a_{i-1}\} \in \M$. Thus, $\{a_1, \ldots, a_{i-1}\}$ is an independent subset of $\{b_1, \ldots,  b_{j-1}\}$ of maximum size. Let $S$ be an independent subset of $\{b_1, \ldots,  b_{j-1}, a\}$ of maximum size. Since $\rank(\{b_1, \ldots,  b_{j-1}, a\}) - r_{j-1} = 1$, $|S| = |\{a_1, \ldots, a_{i-1}\}| + 1 = i$. Thus, by the augmentation property, there exists $b \in S$ such that $\{a_1, \ldots, a_{i-1}, b\} \in \M$. We have $b \neq b_{j'}$, $j' < j$ since otherwise this would contradict $\rank(\{b_1, \ldots,  b_{j-1}\}) = i - 1$. Thus $b = a$ and $\{a_1, \ldots, a_{i-1}, a\} \in \M$.

Next, we show that if $a = b_\ell$ for some $\ell \geq j$ such that $\{a_1, \ldots, a_{i-1}, a\} \in \M$, then \\ $\rank(\{b_1, \ldots,  b_{j-1}, a\}) - r_{j-1} = 1$. By the algorithm $r_{j-1} = |\{a_1, \ldots, a_{i-1}\}| = j-1$. Since $\{a_1, \ldots, a_{i-1}, a\} \in \M$,  $\rank(\{b_1, \ldots,  b_{j-1}, a\}) \geq j$. Since the rank can only increase by one when adding an element, we have  $\rank(\{b_1, \ldots,  b_{j-1}, a\}) =  j  = r_{j-1} + 1$.
\end{proof}

It is easy to see that Algorithm~\ref{alg:rank} has one step of rank queries. \cite{karp1988complexity} showed that Algorithm~\ref{alg:rank} constructs a base of $\M$.
\begin{lemma}[\cite{karp1988complexity}]
Algorithm~\ref{alg:rank} returns a base of $\M$.
\end{lemma}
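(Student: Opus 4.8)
The plan is to verify directly that the returned set $\{a_1,\dots,a_\ell\}$ is a base of $\M$, i.e.\ that it is independent and has size $\rank(\M)$. Observe first that the random permutation is irrelevant to this particular statement: Algorithm~\ref{alg:rank} is exactly the classical matroid greedy that scans $N$ once in the order $b_1,\dots,b_{|N|}$ and retains $b_j$ precisely when it strictly increases the rank of the prefix, and this produces a base for \emph{every} ordering. So I would state and prove the claim for an arbitrary permutation.

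The heart of the argument is an induction on $i$ proving the invariant: the set $J_i := \{a_m : a_m \in \{b_1,\dots,b_i\}\}$ is a maximum-size independent subset of $\{b_1,\dots,b_i\}$, and in particular $|J_i| = r_i$. The base case $i=0$ is trivial since $J_0=\emptyset$. For the step, recall that adjoining one element raises the rank by $0$ or $1$, so either $r_i = r_{i-1}$ or $r_i = r_{i-1}+1$. In the first case no new $a$-element is created, $J_i = J_{i-1}$, and the invariant is inherited since $r_i=r_{i-1}$. In the second case $b_i$ is appended, so $J_i = J_{i-1}\cup\{b_i\}$; to see this set is independent, let $I$ be any maximum independent subset of $\{b_1,\dots,b_i\}$, so $|I| = r_{i-1}+1 > |J_{i-1}| = r_{i-1}$. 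By the augmentation property there is $x \in I\setminus J_{i-1}$ with $J_{i-1}\cup\{x\}\in\M$. Since $x \in \{b_1,\dots,b_i\}$, either $x=b_i$ or $x\in\{b_1,\dots,b_{i-1}\}$; the latter is impossible, as it would make $J_{i-1}\cup\{x\}$ an independent subset of $\{b_1,\dots,b_{i-1}\}$ of size $r_{i-1}+1$, exceeding $r_{i-1}=\rank(\{b_1,\dots,b_{i-1}\})$. Hence $x=b_i$ and $J_i=J_{i-1}\cup\{b_i\}\in\M$, with $|J_i| = r_{i-1}+1 = r_i = \rank(\{b_1,\dots,b_i\})$, so it is a maximum independent subset, completing the induction.

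Taking $i = |N|$ gives that $\{a_1,\dots,a_\ell\} = J_{|N|}$ is a maximum independent subset of $N$; therefore $\{a_1,\dots,a_\ell\}\in\M$ and $\ell = \rank(N) = \rank(\M)$, which is precisely the definition of a base. The only places needing care are the augmentation step — specifically ruling out $x\in\{b_1,\dots,b_{i-1}\}$ — and the elementary monotone, unit-increment behaviour of the rank function under adding a single element; the rest is bookkeeping, and I do not anticipate a genuine obstacle here, the content of the lemma being essentially the correctness of greedy on matroids.
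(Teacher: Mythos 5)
Your proof is correct. Note that the paper does not actually prove this lemma; it simply cites it from Karp, Upfal, and Wigderson~\cite{karp1988complexity}, so there is no internal argument to compare against. Your induction supplies a clean self-contained verification that is essentially the classical correctness proof for matroid greedy: the invariant that $J_i$ is a maximum independent subset of $\{b_1,\dots,b_i\}$ with $|J_i|=r_i$, maintained via the unit-increment property of the rank function (for the $r_i=r_{i-1}$ case) and the augmentation axiom (for the $r_i=r_{i-1}+1$ case). The one non-trivial point, which you handle correctly, is ruling out that the augmenting element $x$ lies in $\{b_1,\dots,b_{i-1}\}$ — this would contradict $\rank(\{b_1,\dots,b_{i-1}\})=r_{i-1}$, forcing $x=b_i$ and hence $J_{i-1}\cup\{b_i\}\in\M$. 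You also correctly observe that the randomness of the permutation plays no role in this particular statement (it matters for the random-feasible-sequence property proved separately in Lemma~\ref{lem:feasibility2}, not here). Taking $i=|N|$ yields that the returned set is a maximum independent subset of $N$, i.e.\ a base, completing the argument.
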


\thmrank*

\begin{proof}
By Theorem~\ref{thm:comb}, \algoneplus \ is a $1/2-\O(\epsilon)$ approximation algorithm with $O(\log(n) \log(k))$ adaptivity if \blackbox \ satisfies the random feasibility condition, which Algorithm~\ref{alg:3} does by Lemma~\ref{lem:feasibility2}. Since there are $O(\log(n) \log(k))$ iterations of calling \blackbox \ and \blackbox \ has  $1$ step of rank queries, there are $O(\log(n)\log(k))$ total steps of rank queries.
\end{proof}

\end{document}